\def\Z{{\mathbb Z}}
\def\R{{\mathbb R}}
\newcommand{\NP}{{\sf NP}}
\begin{document}
\title{The Stable Fixtures Problem with Payments\thanks{An extended abstract of this paper has appeared in the proceedings of WG 2015~\cite{BKPW15}.}}

\author{P\'eter Bir\'o\inst{1,2}
\thanks{Supported by the Hungarian Academy of Sciences under  Momentum Programme
LD-004/2010, by the Hungarian Scientific Research Fund - OTKA (no.\
K108673), and by J\'anos Bolyai Research Scholarship of the Hungarian
Academy of Sciences.}
\and Walter Kern\inst{3} \and Dani\"el Paulusma\inst{4}
\thanks{Supported by EPSRC Grant EP/K025090/1.}
 \and P\'eter Wojuteczky\inst{1}}

\institute{
Institute of Economics, Hungarian Academy of Sciences; H-1112, Buda\"orsi \'ut 45, Budapest, Hungary, \texttt{peter.biro@krtk.mta.hu, peter.wojuteczky@gmail.com}
\and
Department of Operations Research and Actuarial Sciences, Corvinus University of Budapest
\and
Faculty of Electrical Engineering, Mathematics and Computer Science,\\
University of Twente, P.O.Box 217, NL-7500 AE Enschede\\
\texttt{w.kern@math.utwente.nl}
\and
School of Engineering and  Computing Sciences, Durham University,\\
Science Laboratories, South Road,
Durham DH1 3LE, UK,\\
\texttt{daniel.paulusma@durham.ac.uk}
}

\maketitle

\begin{abstract}
We generalize two well-known game-theoretic models by
introducing
multiple partners matching games,
defined by a
graph $G=(N,E)$, with an integer vertex capacity function~$b$ and an edge weighting $w$. The set $N$ consists of a number of players that are to form a set $M\subseteq E$ of 2-player coalitions $ij$ with value
$w(ij)$, such that each player $i$ is in at most $b(i)$ coalitions.
A payoff vector
 is a mapping $p: N \times N \rightarrow \R$ with
  $p(i,j)+p(j,i)=w(ij)$ if $ij\in M$ and $p(i,j)=p(j,i)=0$ if $ij\notin M$.
The pair $(M,p)$ is called a solution.
A pair of players $i,j$ with $ij\in E\setminus M$ blocks a solution $(M,p)$ if
$i, j$ can form,  possibly only after withdrawing from one of their existing 2-player coalitions,
a new 2-player coalition in which they are mutually better off.
A solution is stable if it has no blocking pairs.
Our contribution is as follows:
\begin{itemize}
\item We survey, for the first time, known results on stable solutions in seven basic models and show that our model of multiple partners matching games is the natural model that was missing so far.
\item We give a polynomial-time algorithm that
either finds that a given multiple partners matching game has
no stable solution, or obtains a stable solution for it.
Previously this result was only known for multiple partners assignment games, which correspond to the case where $G$ is bipartite (Sotomayor, 1992) and for
the case where $b\equiv 1$ (Bir\'o et al., 2012).
\item We characterize the set of stable solutions of a multiple partners matching game
 in two different ways and show how this leads to simple proofs for a number of known results of Sotomayor (1992,1999,2007) for multiple partners assignment games
 and to generalizations of some of these results to multiple partners matching games.
 \item We perform a study on the core of the corresponding cooperative game, where coalitions of any size may be formed.
 In particular we show that the standard relation between the existence of a stable solution and the non-emptiness of the core, which holds in the other
 models with payments, is no longer valid for our (most general) model.
 We also prove that the problem of deciding if an allocation belongs to the core jumps from being polynomial-time solvable for $b\leq 2$ to \NP-complete for $b\equiv 3$.
\end{itemize}

 \smallskip
 {\bf Keywords.} stable solutions, cooperative game, core.
\end{abstract}

\setcounter{footnote}{0}
\section{Introduction}\label{s-intro}

Consider a group of soccer teams participating in a series of friendly games with each other
off-season.
Suppose each team has some specific target number of games it wants to play. For logistic reasons, not every two teams
can play against each other. Each game brings in some revenue, which
is to be shared by the two teams involved.
The revenue of a game may depend on several factors, such as the popularity of the two teams involved or
 the soccer stadium in which the game is played. In particular, at the time when the schedule for these games is prepared, the expected gain may well depend on future outcomes in the current season (which are in general  difficult to predict~\cite{KP01}). In this paper, we assume for simplicity that the revenues are known.
 Is it possible to construct a {\it stable} fixture of games, that is, a schedule such that there exist no two unmatched teams that are better off by playing against each other?
Note that if teams decide to play against each other, they may first need to
cancel one of their other games in order not to exceed their targets.

The above example describes the problem introduced in this paper
(see Section~\ref{s-alternative} for another example).
In the next section we explain how we model this problem.

\subsection{Our Model}

We model the above example in two settings, namely as a matching problem and as a cooperative game. As we will show these two settings are deeply interwoven.

\medskip
\noindent
{\bf Matching Problem.}
A {\it multiple partners matching game} is a
 triple $(G,b,w)$, where $G=(N,E)$ is a finite undirected graph on $n$ vertices and $m$ edges  with no loops and no multiple edges,
$b:N\to \Z_+$ is a  {\it vertex capacity function}, which is a nonnegative integer function, and
$w:E\to \R_+$ is a nonnegative edge weighting.
The set~$N$ is called the {\it player set}. There exists an edge $ij\in E$ if and only if players~$i,j$ can form a 2-player coalition.
A set $M\subseteq E$ is a {\it $b$-matching} if every player $i$ is incident to at most $b(i)$ edges of $M$. So, a $b$-matching is
a set of 2-player coalitions, in which no player is involved in more 2-player coalitions than
described by her capacity.
If $ij\in M$  then  $i$ and $j$ are
{\it matched} by $M$;
we also say that $i$ and $j$ are {\it partners} under $M$.
The {\it value} of a 2-player coalition ${i,j}$ with $ij\in E$ is given by~$w(ij)$.

A nonnegative function
$p: N\times N\to \R_+$ is a
{\it payoff vector} with respect to a $b$-matching~$M$ if
the following two conditions hold:
\begin{itemize}
\item [$\bullet$]  $p(i,j)+p(j,i)=w(ij)$ for all $ij\in M$;
\item [$\bullet$]  $p(i,j)=p(j,i)=0\;\;\;\;\;\;$ for all $ij\notin M$.
\end{itemize}
In that case we also say that
$M$ and~$p$ are
{\it compatible}.\footnote{Assume that $b$ and $w$ are strictly positive functions. If $p$ and $M$ are compatible, then $p$ is not compatible with
any other $b$-matching, that is, $p$ uniquely determines $M$. However, for our purposes it is more convenient to follow the literature and define $p$ with respect to $M$.}
Note that $p$ prescribes how the value $w(ij)$ of a 2-player coalition $\{i,j\}$ is  distributed amongst $i$ and~$j$, ensuring that non-coalitions between two players yield a zero payoff.
A pair $(M,p)$, where $M$ is a $b$-matching and $p$ is a payoff compatible with $M$, is  a {\it solution} for $(G,b,w)$.
We view $p$ as a vector with entries $p(i,j)$,
which we call {\it payoffs}.

Let $(M,p)$ be a solution.
Two players $i,j$ with $ij\in E\setminus M$ may decide to form a new 2-player coalition if they are ``better off'',
even if one or both of them must first leave an existing 2-player coalition in $M$ (in order not to exceed their individual capacity).
To describe this formally we  define a {\it utility function} $u_p: N\to \R_+$, related to a payoff vector~$p$.
If $i$ is {\it saturated} by $M$, that is, if $i$ is incident with $b(i)$ edges in $M$, then we let
$u_p(i)= \min\{p(i,j): ij\in M\}$ be the worst payoff $p(i,j)$ of any 2-player coalition~$i$ is involved in. Otherwise, $i$ is {\it unsaturated} by $M$ and we define $u_p(i)=0$.
Alternatively, we could define $u_p(i)$ as the $b(i)$th largest payoff $p(i,j)$ to $i$. Note that the second definition shows that utilities are independent of $M$ and determined by $p$ only (recall$^1$ that this is because $p$ in fact determines $M$).

A pair~$i,j$ with $ij\in E\setminus M$ {\it blocks} $(M,p)$ if  $u_p(i)+u_p(j)<w(ij)$.
We say that $(M,p)$ is \emph{stable} if it has no blocking pairs, or equivalently, if
every edge $ij\in E\setminus M$ satisfies the {\it stability condition}, that is, if
\[
u_p(i)+u_p(j)\geq w(ij)\; \mbox{for all}\; ij\in E\setminus M.
\]
Note that the stability condition only needs to be verified for edges not in $M$.

\begin{remark}\label{r-bis1}
Let $(G,b,w)$ be a multiple partners matching game with $b\equiv 1$.
Then any $b$-matching is a 1-matching, i.e, a {\it matching}, as for each $i\in N$, we have $p(i,j)>0$ for at most one player $j\neq i$, which must be matched to $i$.
In that case we will sometimes
assume, with slight abuse of notation, that $p$ is a nonnegative function defined on~$N$.
Then we can write $u_p(i)=p(i)$ for every $i\in N$.
Checking whether a pair  $(M,p)$ is a solution for $(G,1,w)$ comes down to verifying whether $p(i)+p(j)=w(ij)$ holds for every edge $ij\in M$.
Checking whether a solution $(M,p)$ is stable comes down to verifying whether $p(i)+p(j)\geq w(ij)$ holds for every edge $ij\in E\setminus M$.
\end{remark}

We can now define our problem formally:

\begin{center}
\begin{boxedminipage}{.99\textwidth}
\textsc{Stable Fixture with Payments}(SFP)\\[2pt]
\begin{tabular}{ r p{0.8\textwidth}}
\textit{~~~~Instance:} &a multiple partners matching game $(G,b,w)$\\
\textit{Question:} &does $(G,b,w)$ have a stable solution?
\end{tabular}
\end{boxedminipage}
\end{center}

\noindent
{\it Example 1.}
Let $G$ be the 4-vertex cycle $u_1v_1u_2v_1u_1$.
Let  $b\equiv 1$ and $w\equiv 1$. Then $G$ has two maximum weight matchings, namely $M=\{u_1v_1,u_2v_2\}$ and $\hat{M}=\{u_1v_2,u_2v_1\}$. Let $p$ be given by $p(u_1,v_1)=\frac{7}{10}$, $p(v_1,u_1)=\frac{3}{10}$, $p(u_2,v_2)=\frac{7}{10}$, $p(v_2,u_2)=\frac{3}{10}$ and $p(u_1,v_2)=p(v_2,u_1)=p(u_2,v_1)=p(v_1,u_2)=0$. Then $p$ is compatible with $M$ and $(M,p)$ is a stable solution for $(G,1,1)$. We also observe that $p$ is not compatible with $\hat{M}$.
However, there exists a stable solution $(\hat{M},\hat{p})$, where $\hat{p}$ can be obtained from $p$ by permuting the entries $p(i,j)$ for every fixed $i$.
Namely, let $\hat{p}$ be defined as
$\hat{p}(u_1,v_2)=\frac{7}{10}$, $\hat{p}(v_2,u_1)=\frac{3}{10}$, $\hat{p}(u_2,v_1)=\frac{7}{10}$, $\hat{p}(v_1,u_2)=\frac{3}{10}$ and $\hat{p}(u_1,v_1)=\hat{p}(v_1,u_1)=\hat{p}(u_2,v_2)=\hat{p}(v_2,u_2)=0$.

\medskip
\noindent
Being a bit more specific than in the above example, let $(M,p)$ and $(\hat{M},\hat{p})$ be two stable solutions for  a multiple partners matching game $(G,b,w)$. Then we say that $p$ and $\hat{p}$ are {\it equivalent} if the following four conditions hold:
\begin{itemize}
\item $u_p(i)=u_{\hat{p}}(i)$ for every $i\in N$,
\item $p(i,j)=\hat{p}(i,j)$ and $p(j,i)=\hat{p}(j,i)$  for every $ij\in M \cap \hat{M}$,
\item $p(i,j)=u_p(i)=u_{\hat{p}}(i)$ and $p(j,i)=u_p(j)=u_{\hat{p}}(j)$ for every $ij\in M\setminus\hat{M}$, and
 \item $\hat{p}(i,j)=u_{\hat{p}}(i)=u_p(i)$ and $\hat{p}(j,i)=u_{\hat{p}}(j)=u_p(j)$ for every $ij\in \hat{M}\setminus M$.
\end{itemize}
If  $p$ and $\hat{p}$ are equivalent, then
the multisets $\{p(i,j)\; |\; j\in N\; \mbox{with}\; ij\in E\}$ and $\{\hat{p}(i,j)\; |\; j\in N\; \mbox{with}\; ij\in E\}$ are the same for every $i\in N$.

For a payoff vector~$p$, the {\it total payoff vector} $p^t\in \R^N$ is defined by
$p^t(i)=\sum_{j:ij\in E}p(i,j)$ for every $i\in N$.
Note that if two payoff vectors $p$ and $\hat{p}$ are equivalent, then their total payoff vectors are the same, that is, $p^t=\hat{p}^t$.\footnote{In fact, Sotomayor\cite{Sotomayor92}, calls two payoff vectors $p$ and $\hat{p}$ equivalent if $p^t=\hat{p}^t$ and the
multisets $\{p(i,j)\; |\; j\in N\; \mbox{with}\; ij\in E\}$ and $\{\hat{p}(i,j)\; |\; j\in N\; \mbox{with}\; ij\in E\}$ are the same for every $i\in N$.
Our definition implies hers and enables us to prove our results in a stronger form. However, in the proof of her results, Sotomayer uses in fact our definition as well.}
Moreover, for $b\equiv 1$ two payoff vectors $p$ and $\hat{p}$ are equivalent if and only if
$p^t=\hat{p}^t$.
We illustrate the definition of two payoff vectors being equivalent with another example.

\bigskip
\noindent
{\it Example 2.}
Let $G$ be the bipartite graph displayed in Figure~\ref{f-exa} that  has six vertices $u_1$, $u_2$, $u_3$, $v_1$, $v_2$, $v_3$ and edges $u_1v_1$, $u_1v_2$, $u_1v_3$, $u_2v_1$, $u_2v_2$, $u_3v_1$, $u_3v_2$ and $u_3v_3$ (that is, all edges between the two partition classes exist except $u_2v_3$). Let $b(u_1)=b(u_2)=b(v_1)=2$ and $b(u_3)=b(u_3)=b(v_3)=1$, and define $w(u_1v_1)=4$, $w(u_1v_2)=6$, $w(u_1v_3)=5$, $w(u_2v_1)=4$, $w(u_2v_2)=1$, $w(u_3v_1)=1$, $w(u_3v_2)=3$ and $w(u_3v_3)=2$. Then $G$ has two maximum weight $b$-matchings, namely $M=\{u_1v_1,u_1v_2,u_2v_1,u_3v_3\}$ and $\hat{M}=\{u_1v_1,u_1v_3,u_2v_1,u_3v_2\}$. Let $p$ be given by $p(u_1,v_1)=3$, $p(v_1,u_1)=1$, $p(u_1,v_2)=3$, $p(v_2,u_1)=3$, $p(u_2,v_1)=2$, $p(v_1,u_2)=2$, $p(u_3,v_3)=0$, $p(v_3,u_3)=2$ and $p(u_1,v_3)=p(v_3,u_1)=p(u_2,v_2)=p(v_2,u_2)=p(u_3,v_1)=p(v_1,u_3)=p(u_3,v_2)=p(v_2,u_3)=0$. Then $p$ is compatible with $M$ and $(M,p)$ is a stable solution for $(G,b,w)$. We also observe that $p$ is not compatible with $\hat{M}$.
However, there exists a payoff vector~$\hat{p}$ that is equivalent with $p$ such that  $(\hat{M},\hat{p})$ is a stable solution.
Namely, let $\hat{p}$ be defined as
$\hat{p}(u_1,v_1)=3$, $\hat{p}(v_1,u_1)=1$, $\hat{p}(u_1,v_3)=3$, $\hat{p}(v_3,u_1)=2$, $\hat{p}(u_2,v_1)=2$, $\hat{p}(v_1,u_2)=2$, $\hat{p}(u_3,v_2)=0$, $\hat{p}(v_2,u_3)=3$ and $\hat{p}(u_1,v_2)=\hat{p}(v_2,u_1)=\hat{p}(u_2,v_2)=\hat{p}(v_2,u_2)=\hat{p}(u_3,v_1)=\hat{p}(v_1,u_3)=\hat{p}(u_3,v_3)=\hat{p}(v_3,u_3)=0$.

\begin{figure}
\begin{center}
\scalebox{0.6}{\includegraphics{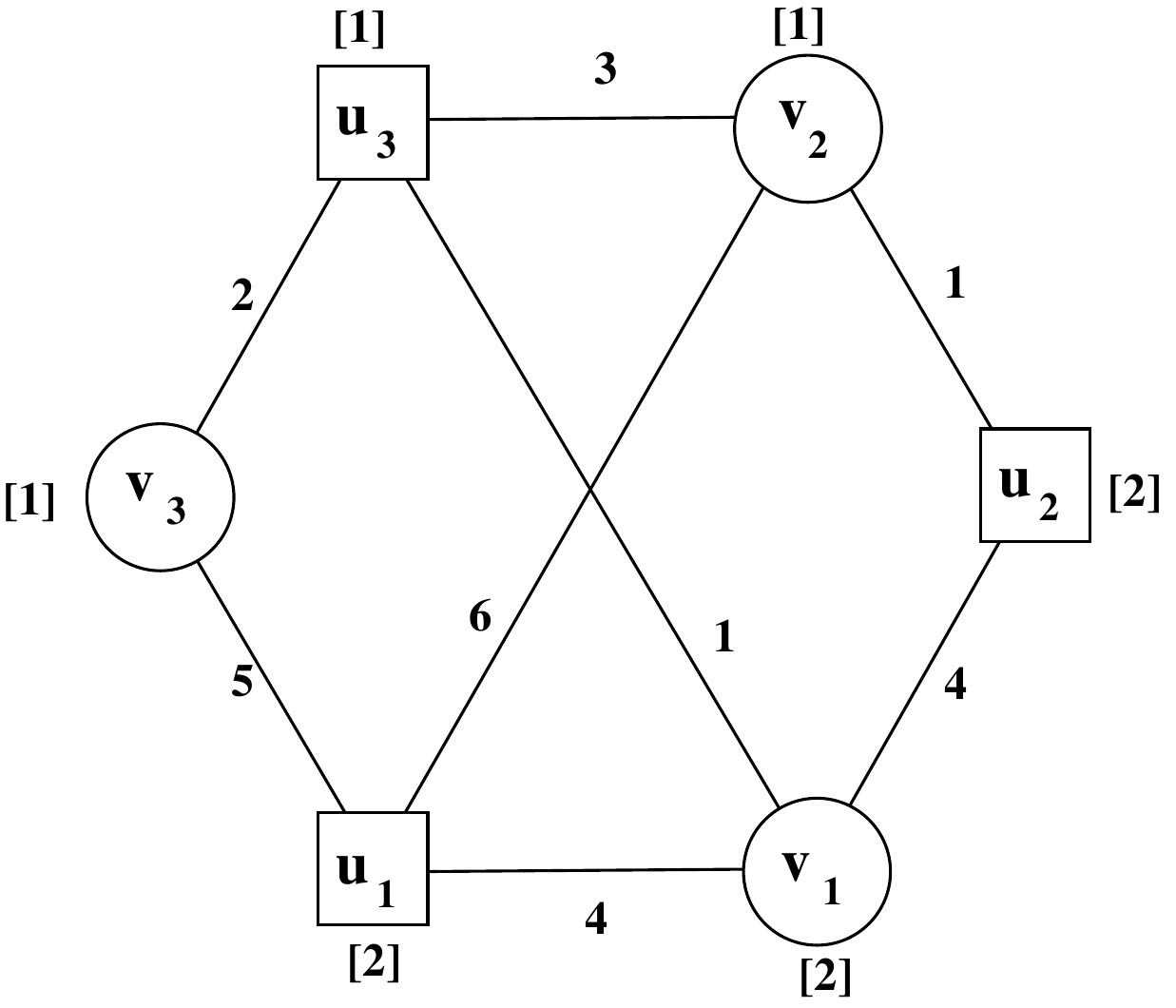}}
\caption{The bipartite graph $G$ from Example 2 with the vertex capacities and edge weights.}\label{f-exa}
\end{center}
\end{figure}

\noindent
{\bf Cooperative Game.}
So far, we modelled only situations in which 2-player coalitions can be formed. Allowing coalitions of any size
is a natural and well-studied setting in the area of Cooperative Game Theory. Moreover, as we will discuss,
there exist close relationships between stable solutions and their counterpart in the second setting, the so-called core allocations, which we define below.

A \emph{cooperative game  with transferable utilities} (TU-game) is a pair $(N,v)$ consisting of a set $N$ of $n$ \emph{players} and a \emph{value function} $v: 2^N\to \R_+$ with $v(\emptyset) = 0$.
It is usually assumed that  the {\it grand coalition} $N$ is formed. Then the central problem is how to allocate the
{\it total value} $v(N)$ to the individual players in $N$.
In this context, a \emph{payoff vector} (or {\it allocation}) is a vector $p \in \R^N$ with $p(N) = v(N)$, where we write $p(S) = \sum_{i\in S}p(i)$ for $S\subseteq N$.
The {\it core} of a TU-game consists of all allocations $p \in \R^N$ satisfying
\begin{equation} \label{coreq}
\begin{array}{rrrl}
  p(S)&\ge&v(S),&\;\;\;\emptyset \neq S \subseteq N\\
  p(N)&=&v(N).
 \end{array}
\end{equation}
A core allocation is seen as reasonable, because it offers no incentive for a subset of players to leave the grand coalition and form a coalition on their own.
However,
a TU-game may have an empty core. Hence, the most interesting computational complexity
problems (given an input game) are:
\begin{itemize}
\item [{\bf P1.}] Determine if the core is non-empty;
\item [{\bf P2.}] Exhibit a vector in the core (provided there is any);
\item [{\bf P3.}] Determine if a given $p \in \R^N$  is in the core or find a coalition $S$ with $p(S)<v(S)$.
\end{itemize}
In the literature both polynomial-time and
(co-)\NP-hardness results are known for each of these three problems
(see e.g.~\cite{DIN99}).
An efficient algorithm for answering P3 implies that P1 and P2 can be solved in polynomial time as well.
This follows from the work of
 Gr\"otschel, Lov\'asz and Schrijver~\cite{GLS81,GLS93} who proved, by refining the ellipsoid method of
 Khachiyan~\cite{Kh79}, that an efficient algorithm for solving the separation problem for a polyhedron~$P$ implies a polynomial-time algorithm that either finds that $P$ is empty, or  obtains a vector of~$P$.

We define the TU-game~$(N,v)$ that corresponds with a multiple partners matching game $(G,b,w)$ by setting, for every $S\subseteq N$,
$$v(S)=w(M_S)=\sum_{e\in M_S}w(e),$$
where $M_S$ is a maximum weight $b$-matching
in the graph $G[S]$, that is, the subgraph of $G$ induced by $S$.
We define $v(S)=0$ if $S$ induces an edgeless graph.
We say that $(N,v)$ is {\it defined on} $(G,b,w)$ but, unless confusion is possible, we may also call $(N,v)$ a multiple partners matching game.
If we say that a payoff vector~$p$ involved in a stable solution $(M,p)$ of SFP is a core allocation, we mean in fact that the total payoff vector~$p^t$ is a core allocation.

\medskip
\noindent
{\it Example 3.} Let $G$ be the 4-vertex cycle $v_1v_2v_3v_4v_1$.  Define a vertex capacity function~$b$ by $b(v_1)=b(v_2)=1$ and
$b(v_3)=b(v_4)=2$, and an edge weighting $w$ by $w(v_1v_2)=3$ and $w(v_2v_3)=w(v_3v_4)=w(v_4v_1)=1$.
The pair $(M,p)$ with $M=\{v_1v_2,v_3v_4\}$ and $p(v_1,v_2)=p(v_2,v_1)=\frac{3}{2}$, $p(v_3,v_4)=p(v_4,v_3)=\frac{1}{2}$ and
$p(v_2,v_3)=p(v_3,v_2)=p(v_4,v_1)=p(v_1,v_4)=0$ is a solution for the multiple partners matching game $(G,b,w)$. Note that $u_p(v_1)=u_p(v_2)=\frac{3}{2}$ and $u_p(v_3)=u_p(v_4)=0$. We find that $(M,p)$ is even a stable solution, because
$u_p(v_2)+u_p(v_3)=\frac{3}{2}\geq 1=w(v_2v_3)$ and $u_p(v_4)+u_p(v_1)=\frac{3}{2}\geq 1=w(v_4v_1)$ (note that we only need to verify the stability condition for edges outside the matching).
Moreover, the total payoff vector $p^t$ given by $p^t(v_1)=p^t(v_2)=\frac{3}{2}$ and $p^t(v_3)=p^t(v_4)=\frac{1}{2}$ is readily seen to be a core allocation of the corresponding TU-game. In Section~\ref{s-core} we will give an example of a multiple partners matching game with no stable solutions for which the corresponding TU-game has a non-empty core.

\medskip
\noindent
Before stating our results for multiple partners matching games in both settings we first discuss some existing work. As we will see, our model in both settings
generalizes (or relaxes) several well-known models.

\subsection{Known Results}\label{s-known}
The first model that we discuss is the most basic model related to
the famous {\it stable marriage problem}
(SM),
defined
as follows.
Given two disjoint sets $I$ and~$J$ of men and women, respectively, let each player have a strict preference ordering over
 a subset of players from the opposite set.
  A set of marriages is a matching in the underlying
bipartite graph with partition classes $I$ and $J$. Such a matching is stable if there is no unmarried pair, who would prefer to marry each other instead of  a possible other partner. Gale and Shapley~\cite{GS62} proved that every instance of this problem has a stable matching and gave a
linear-time algorithm that finds
 one.\footnote{Gale and Shapley assumed that the underlying bipartite graph is complete, but in fact their algorithm can also be used for arbitrary graphs; see~\cite{GI89}.}

The main assumptions in the basic model are
\begin{itemize}
\item [(i)] monogamy: each player is matched to at most one other player (1-matching);
\item [(ii)] opposite-sex: every match is between players from $I$ and $J$ (bipartiteness);
\item [(iii)] no dowry: only cardinal preferences are considered (no payments).
\end{itemize}
Dropping one or more of these three conditions leads to seven
other
models, one of which corresponds to our
new model
of multiple partner matching games, namely the one, in which {\it none} of the three conditions (i)--(iii) is imposed. Below we briefly survey the other six models (see also Table~\ref{heuristics}).

In the first three models that we discuss, payments are not allowed. Hence, the notion of a (core) allocation is meaningless for these three models.

\medskip
\noindent
{\it Remove (i).}
If we allow bigamy, that is, if we allow general $b$-matchings instead of only 1-matchings, we obtain the
{\it many-to-many stable matching problem}, which
 generalizes the stable marriage problem.
The problem variant, in which we demand that $b(i)=1$ for each player~$i\in I$, is called the
{\it college admission problem}~\cite{GS62}, which is  also known as the {\it many-to-one stable matching problem}~\cite{RS90} and as the {\it hospital/residents problem}~\cite{Ma13}.
Gale and Shapley~\cite{GS62} proved that every instance of the college admission problem has a stable matching and gave a linear-time algorithm that finds one.
Ba\"iou and Balinski~\cite{BB00} proved these two results for the (more general)
many-to-many stable matching problem.

\medskip
\noindent
{\it Remove (ii).}
If we allow same-sex marriages, so
the underlying graph may be an arbitrary graph that is not necessarily  bipartite,
then we get the \emph{stable roommates problem}
(SR),
also defined by Gale and Shapley \cite{GS62}. They proved that, unlike the previously discussed models, in this model a stable matching does not always
exist. Irving~\cite{Irving85} gave a linear-time algorithm that finds a stable matching if there exists
one.\footnote{Irving showed that a stable matching may not exist even if the underlying graph is complete. His algorithm was designed for complete graphs, but can be used for arbitrary graphs~\cite{GI89}.}

\medskip
\noindent
{\it Remove (i) \& (ii).}
Allowing bigamy {\it and} same-sex marriages leads to
the \emph{stable fixtures problem}
(SF),
which
generalizes the stable roommates problem. Hence, a stable matching does not always exist.
Irving and Scott~\cite{IS07}
gave a linear-time algorithm for finding a stable matching (if there exists one).
Cechl\'arov\'a and Fleiner~\cite{CF05} defined the more general \emph{multiple activities problem}, in
which the underlying graph may have multiple edges. They proved that even in this setting
a stable matching can be found in polynomial time (if there exists one).
Moreover, they also showed that this problem can be reduced to
SR by a polynomial size graph construction.

\medskip
\noindent
In the remaining three models we allow payments to individual players.

\medskip
\noindent
{\it Remove (iii).} If we allow dowry then we
obtain an {\it assignment game}, which is a multiple partners matching game $(G,b,w)$ where $G$ is bipartite and $b\equiv 1$.
In this case the SFP problem is known as the {\it stable marriage problem with payments} (SMP).
Koopmans and Beckmann~\cite{KB57e} proved the following result.

\begin{theorem}[\cite{KB57e}]\label{t-kb}
Every instance $(G,1,w)$ of  {\sc SMP} has a stable solution, which can be found in polynomial time. If $(M,p)$ is a stable solution then $M$ is a maximum weight matching.
Moreover, for every other maximum weight matching $\hat{M}$ of $G$ there exists an equivalent payoff vector~$\hat{p}$ of $p$ that is compatible with $\hat{M}$.
\end{theorem}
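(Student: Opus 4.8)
The plan is to use linear-programming duality, following the classical argument of Shapley and Shubik for the assignment game. First I would write down the maximum weight matching LP on the bipartite graph $G=(I\cup J,E)$: maximize $\sum_{ij\in E}w(ij)x_{ij}$ subject to $\sum_{j:ij\in E}x_{ij}\le 1$ for all $i\in N$ and $x\ge 0$. Since $G$ is bipartite the constraint matrix is totally unimodular, so every vertex of this polytope is the incidence vector of a matching; as the polytope is bounded, an optimum is attained at such a vertex, i.e.\ by (the incidence vector of) a maximum weight matching $M$, with optimal value $w(M)=v(N)$. The dual LP is: minimize $\sum_{i\in N}p(i)$ subject to $p(i)+p(j)\ge w(ij)$ for all $ij\in E$ and $p\ge 0$. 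Both LPs are feasible and bounded (the weights are nonnegative), so strong duality and complementary slackness apply.

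For existence, take a maximum weight matching $M$ and an optimal dual solution $p$, which by Remark~\ref{r-bis1} we regard as a function on $N$. Complementary slackness between $M$ (optimal primal) and $p$ (optimal dual) gives $p(i)+p(j)=w(ij)$ for every $ij\in M$ and $p(i)=0$ for every vertex left unsaturated by $M$, so $(M,p)$ is a solution. Dual feasibility gives $u_p(i)+u_p(j)=p(i)+p(j)\ge w(ij)$ for every $ij\in E\setminus M$, so $(M,p)$ is stable. The construction is polynomial, since a maximum weight bipartite matching together with a corresponding optimal dual vector can be computed in polynomial time (e.g.\ by the Hungarian method, or simply by solving the two LPs).

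Next, let $(M,p)$ be an arbitrary stable solution, and read $p$ as a vertex function. Stability together with $p(i)+p(j)=w(ij)$ on $M$ shows $p$ is dual feasible, and $w(M)=\sum_{ij\in M}(p(i)+p(j))=\sum_{i\in N}p(i)$, using that unsaturated vertices carry payoff $0$. For any matching $M'$ we get $w(M')=\sum_{ij\in M'}w(ij)\le\sum_{ij\in M'}(p(i)+p(j))\le\sum_{i\in N}p(i)=w(M)$, hence $M$ is a maximum weight matching (and $p$ is an optimal dual solution). Finally, let $\hat M$ be any other maximum weight matching. As $p$ is an optimal dual solution and $\hat M$ an optimal primal solution, complementary slackness again yields $p(i)+p(j)=w(ij)$ for all $ij\in\hat M$ and $p(i)=0$ for every vertex unsaturated by $\hat M$. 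Thus the payoff vector $\hat p$ obtained from the same vertex function $p$ relative to $\hat M$ is compatible with $\hat M$ and remains stable; since $\hat p$ and $p$ induce the same function on $N$, they have the same total payoff vector, and for $b\equiv 1$ this means (by the equivalence criterion stated before the theorem) that $p$ and $\hat p$ are equivalent.

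The routine ingredients are total unimodularity of the bipartite incidence matrix and polynomial-time maximum weight bipartite matching. The only place that needs a little care is the bookkeeping of unsaturated vertices: one must use complementary slackness in the direction ``$p(i)>0\Rightarrow i$ is saturated in every optimal matching'' to be sure $p$ assigns $0$ to each vertex that $\hat M$ leaves unsaturated, so that $\hat p$ is genuinely a payoff vector compatible with $\hat M$. Beyond that I expect no real obstacle.
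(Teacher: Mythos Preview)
Your argument is correct and is the classical Shapley--Shubik LP-duality proof. Note, however, that the paper does not actually prove Theorem~\ref{t-kb}; it is quoted as a known result of Koopmans and Beckmann and used as a building block. The nearest thing to an in-paper proof is Lemma~\ref{compatible}, which establishes the second and third assertions in the more general SRP setting (existence being omitted there, since it can fail for non-bipartite graphs). That proof is phrased in core language rather than LP language: from stability one sees that $p^t$ is a core allocation, hence $w(M)=p^t(N)=v(N)$, so $M$ has maximum weight; then for any other maximum weight matching $\hat M$ the core inequalities $p^t(i)+p^t(j)\ge w(ij)$ on $\hat M$ together with $\sum_{ij\in\hat M}w(ij)=p^t(N)$ force equality on every edge of $\hat M$ and $p^t(i)=0$ on vertices $\hat M$ leaves unsaturated. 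This is precisely your complementary-slackness step expressed without mentioning the LP, and the resulting $\hat p$ is the same as yours. For the existence part, which is specific to the bipartite case, your appeal to total unimodularity of the bipartite incidence matrix (so that the primal LP has an integral optimum) is the standard route and is what underlies the cited result.
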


Shapley and Shubik~\cite{SS72} proved that every core allocation of an assignment game is a payoff vector in a stable solution for the corresponding instance of {\sc SMP} and vice versa.
Combining their result with Theorem~\ref{t-kb} implies that every assignment game has a non-empty core.
It is possible to obtain a stable solution in polynomial time and also to
give affirmative answers to problems P1-P3 about the core of an assignment game; in the next paragraph we explain that this holds even if we allow same-sex marriages.

\medskip
\noindent
{\it Remove (ii) \& (iii).} If  we allow dowry and same-sex marriages then we obtain a {\it matching game}, which is
 a multiple partners matching game $(G,b,w)$ where $b\equiv 1$.  In this case the SFP problem is called the {\it stable roommates problem with payments} (SRP).
The following two observations are well-known~\cite{BKP12,EK01} and easy to verify.

First, a payoff vector~$p$ is a core allocation of a matching game if and only if there exists a matching $M$ such that $(M,p)$ is a stable solution.
Note that the core may be empty (take a triangle with unit edge weights). Hence, a stable solution may not always exist.

Second, for matching games, the coalitions in the system of inequalities~(\ref{coreq}) may be restricted to 2-player coalitions.
This
means that problem~P3, on core membership, can be answered in linear time.
As explained, a polynomial-time algorithm for P3 also leads to
polynomial-time algorithms for solving P1 and P2, about core non-emptiness, and finding a core allocation, and thus finding a stable solution. The restriction to 2-player coalitions even allows one to
use the ellipsoid method of  Khachiyan~\cite{Kh79} directly.
In a previous paper~\cite{BKP12}, we circumvented
the ellipsoid method and presented an
$O(nm+n^2\log n)$-time algorithm that either finds that the core is empty, or obtains a core allocation.

\medskip
\noindent
{\it Remove (i) \& (iii).} If  we allow dowry and  bigamy  then we obtain a {\it multiple partners assignment game}, which is  a multiple partners matching game $(G,b,w)$ where $G$ is bipartite.
In this case the SFP problem is called the {\it multiple partners assignment problem} (MPA).
Just as matching games, multiple partners assignment games generalize assignment games.
Sotomayor generalized Theorem~\ref{t-kb}.

\begin{theorem}[\cite{Sotomayor92}]\label{sot92}
Every instance $(G,b,w)$ of {\sc MPA} has a stable solution, which can be found in polynomial time.
If $(M,p)$ is a stable solution then $M$ is a maximum weight $b$-matching and $p^t$ is a core allocation of the corresponding
multiple partners
assignment game.
Moreover, for every other maximum weight $b$-matching $\hat{M}$ of $G$ there exists an equivalent payoff vector~$\hat{p}$ of $p$ that is compatible with $\hat{M}$.
\end{theorem}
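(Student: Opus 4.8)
\medskip
\noindent\emph{Proof plan.}
I would prove the three assertions with three different tools: linear programming duality for \emph{existence} and polynomial solvability; a direct counting argument exploiting the absence of blocking pairs for the claims that $M$ is a maximum weight $b$-matching and that $p^{t}$ is a core allocation; and a tightness analysis of that same counting argument for the closing ``moreover'' statement. Write $N=I\cup J$ for the bipartition of $G$. Since $G$ is bipartite, the $b$-matching polytope
$P=\{x\in\R^{E}:0\le x_{e}\le 1\ (e\in E),\ \sum_{e\ni i}x_{e}\le b(i)\ (i\in N)\}$
is integral (its constraint matrix is totally unimodular and the right-hand sides are integral), and all three ingredients can be read off from it.

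\smallskip\noindent\emph{Existence.}
Let $x^{*}=\mathbf 1_{M}$ be an optimal integral vertex of $\max\{w^{\top}x:x\in P\}$, so $w(M)$ is the maximum weight of a $b$-matching of $G$, and let $(y,z)$ be an optimal dual solution, with $y\in\R^{N}_{\ge 0}$ attached to the capacity constraints and $z\in\R^{E}_{\ge 0}$ to the bounds $x_{e}\le 1$. Then $y_{i}+y_{j}+z_{ij}\ge w(ij)$ for all $ij\in E$, while complementary slackness gives $y_{i}>0\Rightarrow i$ saturated by $M$, $z_{ij}>0\Rightarrow ij\in M$, and $ij\in M\Rightarrow y_{i}+y_{j}+z_{ij}=w(ij)$. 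Orient every edge from $I$ to $J$ and put $p(i,j)=y_{i}+z_{ij}$, $p(j,i)=y_{j}$ for $ij\in M$ (with $i\in I$, $j\in J$), and $p\equiv 0$ off $M$. Then $p\ge 0$ and $p(i,j)+p(j,i)=w(ij)$ on $M$, so $(M,p)$ is a solution, and $u_{p}(i)\ge y_{i}$ for every $i$ (for saturated $i$ every matched payoff at $i$ is at least $y_{i}$; for unsaturated $i$ one has $y_{i}=0=u_{p}(i)$). For $ij\in E\setminus M$ we have $z_{ij}=0$, hence $u_{p}(i)+u_{p}(j)\ge y_{i}+y_{j}\ge w(ij)$, so $(M,p)$ is stable. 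Everything here is obtained in polynomial time from the two linear programs (or from any combinatorial maximum-weight bipartite $b$-matching algorithm together with its dual prices).

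\smallskip\noindent\emph{Maximum weight and core membership.}
Let $(M,p)$ now be \emph{any} stable solution, let $S\subseteq N$, and let $M_{S}$ be a maximum weight $b$-matching of $G[S]$, so $v(S)=w(M_{S})$. Using $w(ij)=p(i,j)+p(j,i)$ on $M_{S}\cap M$ and the stability condition $w(ij)\le u_{p}(i)+u_{p}(j)$ on $M_{S}\setminus M\subseteq E\setminus M$,
\[
v(S)\ \le\ \sum_{ij\in M_{S}\cap M}\bigl(p(i,j)+p(j,i)\bigr)+\sum_{ij\in M_{S}\setminus M}\bigl(u_{p}(i)+u_{p}(j)\bigr).
\]
Regroup the right-hand side by endpoints: a vertex $i\in S$ contributes $\sum_{ij\in M_{S}\cap M}p(i,j)+c_{i}u_{p}(i)$, where $c_{i}$ is the number of edges of $M_{S}\setminus M$ at $i$. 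If $i$ is unsaturated by $M$ then $u_{p}(i)=0$; if $i$ is saturated then $b(i)=\deg_{M}(i)\ge c_{i}+(\#\text{ edges of }M_{S}\cap M\text{ at }i)$, so $i$ has at least $c_{i}$ edges in $M\setminus M_{S}$, each of payoff at least $u_{p}(i)$. Either way the contribution of $i$ is at most $\sum_{ij\in M}p(i,j)=p^{t}(i)$, so $v(S)\le p^{t}(S)$. Taking $S=N$ yields $v(N)\le p^{t}(N)=\sum_{ij\in M}w(ij)=w(M)\le v(N)$; hence $w(M)=v(N)$, i.e.\ $M$ is a maximum weight $b$-matching, and $p^{t}(N)=v(N)$, which together with $p^{t}(S)\ge v(S)$ says that $p^{t}$ is a core allocation.

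\smallskip\noindent\emph{The ``moreover'' statement, and the main obstacle.}
Run the last computation with $S=N$ and $M_{S}=\hat M$ for another maximum weight $b$-matching $\hat M$. Since $w(\hat M)=v(N)=p^{t}(N)$, every inequality there becomes an equality, and extracting the equality cases is where the real work lies. First, $u_{p}(i)+u_{p}(j)=w(ij)$ for every $ij\in\hat M\setminus M$. Second, for each $i$ the per-vertex bound is tight, i.e.\ $\sum_{ij\in M\setminus\hat M}p(i,j)=c_{i}u_{p}(i)$ with $c_{i}=|\hat M\setminus M\text{ at }i|$; comparing $|M\setminus\hat M\text{ at }i|$ with $c_{i}$ (they can differ in the ``wrong'' direction only when $i$ is unsaturated by $M$, and there $u_{p}(i)=0$ makes this harmless) forces $p(i,j)=u_{p}(i)$ for every $ij\in M\setminus\hat M$, and symmetrically $p(j,i)=u_{p}(j)$. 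Now define $\hat p$ by copying $p$ on $M\cap\hat M$, by $\hat p(i,j)=u_{p}(i)$, $\hat p(j,i)=u_{p}(j)$ on $\hat M\setminus M$ (legitimate by the first equality), and $\hat p\equiv 0$ off $\hat M$. A routine case check (saturated versus unsaturated, with respect to $M$ and to $\hat M$) gives $u_{\hat p}=u_{p}$; then $(\hat M,\hat p)$ is stable — for $ij\in M\setminus\hat M$ this uses $u_{p}(i)+u_{p}(j)=p(i,j)+p(j,i)=w(ij)$ — and all four conditions in the definition of equivalence hold, so $\hat p$ is equivalent to $p$. The delicate step throughout is precisely this last one: converting the single scalar identity $w(\hat M)=p^{t}(N)$ into the pointwise identities on $(M\setminus\hat M)\cup(\hat M\setminus M)$, which rests on the per-vertex degree bookkeeping rather than on any further appeal to duality.
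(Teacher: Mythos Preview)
Your proof is correct and follows a genuinely different route from the paper's own alternative proof in Section~\ref{s-existencestable}. For the claim that $M$ has maximum weight and $p^{t}$ lies in the core, your counting argument coincides with the paper's Lemma~\ref{l-sfp_compatible}, so there is no difference there. The divergence is in the other two parts. For \emph{existence}, the paper reduces $(G,b,w)$ to an SMP instance $(G',1,w')$ via the Tutte-type gadget construction and then combines Theorem~\ref{t-kb} with Theorem~\ref{t-stt}; you instead exploit total unimodularity of the bipartite $b$-matching constraint matrix to get an integral primal optimum directly and read a stable payoff off the optimal dual $(y,z)$, assigning all of the slack $z_{ij}$ to one side. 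For the \emph{``moreover''} clause, the paper again passes through the reduced instance (Lemma~\ref{l-lastpart}), transporting compatibility and equivalence back via Lemma~\ref{compatible} applied in $G'$; you obtain the same conclusion by squeezing the chain $w(\hat M)\le p^{t}(N)$ to equality and extracting the per-vertex identities $p(i,j)=u_{p}(i)$ on $M\setminus\hat M$ and $u_{p}(i)+u_{p}(j)=w(ij)$ on $\hat M\setminus M$ from the degree bookkeeping, which is more elementary and needs no auxiliary graph. Your argument is shorter and self-contained for bipartite $G$; the payoff of the paper's reduction machinery is that it is exactly what lets them carry all but the existence part of the statement over to non-bipartite $G$ in Theorem~\ref{sfp_compatible}.
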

Theorem~\ref{sot92} gives affirmative answers for problems~P1 and~P2 for multiple partners assignment games. The answer to~P3 was still open for multiple partners assignment games and is settled in this paper, as we will discuss in Section~\ref{s-our}.
We refer to Table~\ref{heuristics} for a survey of known results and related new results that we show in this paper. In Section~\ref{s-our} we describe all our new results.

\begin{table}
\begin{center}
\small
\begin{tabular}{|l|l|l|l|}
\cline{3-4}
 \multicolumn{2}{c|}{} & $\;$opposite-sex   & $\;$same-sex  allowed\\
  \multicolumn{2}{c|}{}&$\;$(bipartite graphs)  & $\;$(general graphs)\\
 \hline
 monogamy  & no dowry & $\;$SM (stable marriage)  & $\;$SR (stable roommates) \\
(1-matching)& &{\it always a stable solution}
& {\it not always a stable solution}\\
\cline{2-4}
 & $\;\;\;\;$dowry & $\;$SMP / assignment game & $\;$SRP / matching game\\
 &(payments)& {\it always a stable solution}
 &{\it not always a stable solution}\\
 && {\it core always non-empty}
 &{\it core may be empty}\\
\hline
 bigamy & no dowry$\;$ & $\;$many-to-many stable matching  & $\;$SF (stable fixtures)\\
 allowed && {\it always a stable solution}
 &{\it not always a stable solution}\\
\cline{2-4}
  & $\;\;\;\;$dowry & $\;$MPA / mp assignment game &
 \textbf{$\;$SFP / mp matching game\;}\\
&& {\it always a stable solution}
&{\it not always a stable solution}\\
&& {\it core always non-empty}
&{\it core may be empty}\\
\hline
\end{tabular}\\[12pt]
\caption{The eight different models; ``mp'' stands for ``multiple partners''. The {\bf highlighted} case is
the new model introduced and considered in this paper. As discussed in Section~\ref{s-known}, for all models, the results on the (non-)existence of a stable solution or core allocation
were known already or can be readily deduced from known results.
Moreover, as discussed in Section~\ref{s-known} as well, it was already known that
for all models, except for the highlighted model, a stable solution can be found in polynomial time (if it exists) and that for all the models
with payments, except for the highlighted model, a core allocation can be computed in polynomial time (if the core is non-empty). As explained in
Section~\ref{s-our},
we give a polynomial-time algorithm for finding a stable solution (if it exists) for the highlighted model. We also prove that every stable solution corresponds to a core allocation.
However, in contrast to the correspondence between the existence of stable solutions for SRP and core allocations for matching games,
we show that there exist multiple partners matching games with a non-empty core, while the corresponding instance of SFP has no stable solution.}
\label{heuristics}
\normalsize
\end{center}
\end{table}

\subsection{Our Results}\label{s-our}

In Section~\ref{s-fix}, after stating some known results that we need as lemmas in Section~\ref{s-prelim}, we will prove that SFP is polynomial-time solvable. This generalizes the aforementioned corresponding results for
SRP and MPA, respectively. In particular we prove that also for multiple partners matching games, the payoff vectors in stable solutions are always
core allocations
and the compatibility result of Theorem~\ref{sot92} for maximum weight $b$-matchings of bipartite graphs can be generalized to
arbitrary graphs
(in fact we generalize the whole of Theorem~\ref{sot92} from MPA to SFP apart from the opening claim that every instance of MPA has a stable solution, which does not hold for SFP).
Our technique for proving all these results is based on a reduction to MPA. Moreover, we  characterize the set of stable solutions for a given instance of SFP via a reduction to SRP. We do this via linear programming techniques that show a close relationship between optimal solutions in the dual LP for SFP and stable solutions in the reduced instance of SRP.

In Section~\ref{s-alternative} we illustrate the power of our characterization of the set of stable solutions for an instance of SFP by proving that this characterization, which obviously holds for instances of MPA as well,
immediately leads to simple alternative proofs of the three main results of Sotomayor for MPA,
namely Theorem~\ref{sot92} of~\cite{Sotomayor92} (the whole theorem) and two theorems of~\cite{Sotomayor99} and~\cite{Sotomayor07}, respectively.

In Section~\ref{s-core} we aim to increase our understanding of the core of a multiple partners matching game.
To that end we first prove that every core allocation of a multiple partners matchings game equals the total payoff vector of some suitable payoff vector, which we can find in polynomial time.
We then show that
there exist multiple partners matching games with a non-empty core for which the corresponding instance of {\sc SFP} has no stable solutions.
Afterwards we focus on
core membership, which corresponds to problem~P3.
We first show that core membership is polynomial-time solvable for multiple partner matching games defined on a
triple $(G,b,w)$ with $b\leq 2$, that is, with $b(i)\leq 2$ for all $i\in N$.
Due to the aforementioned result of  Gr\"otschel, Lov\'asz and Schrijver~\cite{GLS81,GLS93} this leads to efficient answers to P1 and P2 as well (for $b\leq 2$). In our proof, we make a connection to the tramp steamer problem~\cite{L76}. We complement this result by showing
that it is tight, namely that already
for $b\equiv 3$ and $w\equiv 1$, testing core membership is co-\NP-complete even for multiple partners
{\it assignment} games (whose core is always non-empty and for which a core allocation can be found in polynomial time due to Theorem~\ref{sot92}).

Finally, in Section~\ref{s-con}, we give some directions for future work.

\section{Preliminaries}\label{s-prelim}

In order to prove our results we will
need two known lemmas.
The first lemma is used in Section~\ref{s-fix} and generalizes Theorem~\ref{t-kb} from SMP to SRP and is implicitly used in the literature (see, for example,~\cite{BBGKP13}); we give its proof for completeness.
The second lemma, which we will use in Sections~\ref{s-fix} and~\ref{s-core}, is due to Letchford, Reinelt and Theis~\cite{LRT08jdm}.

\begin{lemma}\label{compatible}
If $(M,p)$ is a stable solution for an instance $(G,1,w)$ of \emph{SRP}, then $M$ is a maximum weight matching and $p^t$ is a core allocation of the corresponding matching game.
Moreover, for every other maximum weight matching $\hat{M}$ of $G$ there exists an equivalent payoff vector~$\hat{p}$ of $p$ that is compatible with $\hat{M}$.
\end{lemma}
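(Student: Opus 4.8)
The plan is to establish the three claims in turn: (a) $M$ is a maximum weight matching; (b) $p^t$ is a core allocation; (c) the compatibility statement for any other maximum weight matching $\hat M$.

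For (a) and (b) I would argue simultaneously using the fact (Remark~\ref{r-bis1}) that for $b\equiv 1$ we may treat $p$ as a function on $N$ with $u_p(i)=p(i)$, so that stability reads $p(i)+p(j)\ge w(ij)$ for every $ij\in E\setminus M$, while $p(i)+p(j)=w(ij)$ for $ij\in M$ and $p(i)\ge 0$ for all $i$ (with $p(i)=0$ whenever $i$ is unsaturated). First I would observe that these are exactly the constraints of the LP dual to the maximum weight matching LP on $G$ (in the Edmonds--Gallai / Egerv\'ary sense for the fractional matching polytope, or more directly: $p$ is a fractional vertex cover, and $M$ an integral matching, with complementary slackness holding edge-by-edge and at every saturated vertex). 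Then for any matching $M'$ we get
\[
w(M')=\sum_{ij\in M'} w(ij)\le \sum_{ij\in M'}\bigl(p(i)+p(j)\bigr)\le \sum_{i\in N} p(i)\cdot\bigl|\{j: ij\in M'\}\bigr|\le \sum_{i\in N}p(i),
\]
the last step because $p(i)=0$ for unsaturated (hence in particular for any vertex not covered by $M'$) and each vertex is in at most one edge of $M'$. On the other hand $w(M)=\sum_{ij\in M}(p(i)+p(j))=\sum_{i\ \text{saturated}}p(i)=\sum_i p(i)=p^t(N)$ by complementary slackness. Hence $w(M')\le w(M)$ for all $M'$, proving (a), and the chain also shows $p^t(S)\ge w(M_S)=v(S)$ for every $S\subseteq N$ (restrict the above argument to $G[S]$), with equality for $S=N$; this is exactly (b).

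For (c), let $\hat M$ be another maximum weight matching. By (a) and complementary slackness applied to $\hat M$, every edge $ij\in\hat M$ satisfies $p(i)+p(j)=w(ij)$ (else $w(\hat M)<\sum p(i)=w(M)$, a contradiction), and every vertex unsaturated by $\hat M$ has $p$-value $0$ (else the $\hat M$-side of the chain would be strict). So define $\hat p$ on $N$ by $\hat p(i)=p(i)$ for all $i$; then $\hat p$ is compatible with $\hat M$ because $\hat p(i)+\hat p(j)=w(ij)$ on $\hat M$ and $\hat p(i)=0$ at $\hat M$-unsaturated vertices. Unpacking the ``$b\equiv 1$'' abuse of notation back into the $N\times N$ formalism, one sets $\hat p(i,j)=p(i)$ and $\hat p(j,i)=p(j)$ for $ij\in\hat M$ and $0$ otherwise; verifying the four bullet conditions of equivalence is then routine since $u_p(i)=p(i)=\hat p(i)=u_{\hat p}(i)$ for every $i$, and on $M\cap\hat M$ the payoff entries literally coincide, while on $M\setminus\hat M$ (resp. $\hat M\setminus M$) the matched vertices are $\hat M$-unsaturated (resp. $M$-unsaturated), forcing their $u$-values to be $0$ and matching the required identities. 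Finally, $(\hat M,\hat p)$ inherits stability from $p$ because the stability inequality $p(i)+p(j)\ge w(ij)$ depends only on $p$, not on which matching is chosen, and holds for all $ij\in E$ (with equality on $M$, hence a fortiori the weaker inequality everywhere).

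The main obstacle I anticipate is not any single deep step but the bookkeeping in (c): one must be careful that the edges of $M\setminus\hat M$ and $\hat M\setminus M$ behave correctly, i.e.\ that a vertex matched in $M$ but not in $\hat M$ really has $p$-value $0$, which is where maximality of both matchings is essential (a vertex saturated by $M$ need not have $p$-value $0$, but if it is \emph{unsaturated} by $\hat M$ and $\hat M$ is maximum, the strict-inequality argument above forces $p=0$ there). Getting the direction of the complementary-slackness argument right at such vertices, and then translating cleanly between the function-on-$N$ shorthand and the $N\times N$ payoff-vector definition used to phrase equivalence, is the part that needs the most care.
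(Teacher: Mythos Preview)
Your approach is essentially the same as the paper's: establish core membership of $p^t$ via the chain $w(M')\le\sum_{ij\in M'}(p(i)+p(j))\le\sum_i p(i)$, deduce that $M$ has maximum weight, and then use tightness of this chain for $\hat M$ to get $p(i)+p(j)=w(ij)$ on $\hat M$ and $p(i)=0$ at $\hat M$-unsaturated vertices, whence $\hat p(i,j):=p(i)$ on $\hat M$ works. The paper compresses this into a few lines.

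There is, however, a genuine false step in your verification of equivalence in~(c). You write that ``on $M\setminus\hat M$ (resp.\ $\hat M\setminus M$) the matched vertices are $\hat M$-unsaturated (resp.\ $M$-unsaturated), forcing their $u$-values to be~$0$''. This is not true: take the $4$-cycle $u_1v_1u_2v_2u_1$ with $M=\{u_1v_1,u_2v_2\}$ and $\hat M=\{u_1v_2,u_2v_1\}$; then $M\setminus\hat M=M$, yet every vertex is saturated by both matchings and the $u$-values need not vanish. The equivalence conditions hold for a much simpler reason: in the $b\equiv1$ case, once you set $\hat p=p$ as a function on $N$, you have $u_p(i)=p(i)=u_{\hat p}(i)$ for all $i$ (using only that $p(i)=0$ whenever $i$ is unsaturated by $M$ or by $\hat M$, which you did prove), and for $ij\in M$ one has $p(i,j)=p(i)=u_p(i)$ automatically because $i$ is $M$-saturated. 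No claim about unsaturation on the symmetric difference is needed.

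A minor point: your justification of the last inequality in the displayed chain (``because $p(i)=0$ for unsaturated, hence in particular for any vertex not covered by $M'$'') conflates $M$-unsaturated with $M'$-uncovered; the inequality holds simply because $p\ge0$ and each vertex meets at most one edge of $M'$.
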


\begin{proof}
Let $(M,p)$ be a stable solution for an instance $(G,1,w)$ of {\sc SRP}. Then the total payoff vector~$p^t$ is readily seen to be in the core of the corresponding matching game (see also~\cite{BKP12,EK01}). Hence $w(M)=p^t(N)=v(N)$ holds, which means that $M$
must be a maximum weight matching.

Let $\hat{M}$ be another maximum weight matching.
As $p^t$ is in the core, we find that
$p^t_i+p^t_j\geq w(ij)$ for every $ij\in \hat{M}$. As $w(\hat{M})=v(N)=p^t(N)$, this means that
$p^t_i+p^t_j=w(ij)$ for every $ij\in \hat{M}$, and moreover $p_i^t=0$ if $i$ is not incident to an edge of $\hat{M}$. Consequently, the vector
$\hat{p}$, defined by $\hat{p}(i,j)=p_i^t$ if $ij\in \hat{M}$ and $\hat{p}(i,j)=0$ otherwise, is equivalent to $p$ and compatible with $\hat{M}$.
\qed
\end{proof}

\begin{lemma}[\cite{LRT08jdm}]\label{lrt}
For a graph~$G$ with vertex capacity function $b$ and edge weighting~$w$,
it is possible to find a maximum weight $b$-matching of $G$ in $O(n^2m\log (n^2/m))$ time.
\end{lemma}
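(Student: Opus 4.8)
\noindent\textbf{Proof plan for Lemma~\ref{lrt}.} This is a known, cited fact, so the plan is to indicate the natural route to it and to isolate the one part that is genuinely delicate. There are two standard ways to compute a maximum weight $b$-matching, and I would follow the one used in~\cite{LRT08jdm}. The first route is a reduction to ordinary maximum weight matching. Since $G$ is simple we may first replace each capacity by $\min\{b(v),\deg_G(v)\}$ without changing the family of $b$-matchings, so that afterwards $\sum_v b(v)\le 2m$. One then builds an auxiliary graph $G'$ by the classical degree-constrained-subgraph gadget: replace each vertex $v$ by $b(v)$ "port'' copies; replace each edge $e=uv$ by a pair of "edge vertices'' joined by an edge of weight $w(e)$ and linked by zero-weight edges to the ports of $u$ and of $v$; and add, for each $v$, a small zero-weight "absorber'' gadget on its ports, which is what lets $b$-matchings that do not saturate every capacity still correspond to a perfect matching of $G'$. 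Perfect matchings of $G'$ then correspond weight-preservingly to $b$-matchings of $G$, and $G'$ has $O(m)$ vertices and $O(nm)$ edges, so running any fast weighted-matching routine on $G'$ already yields a polynomial-time algorithm.

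The second, and for the exact bound more convenient, route is to run a weighted-matching (blossom) algorithm directly on $G$ in its $b$-matching form. Edmonds' primal--dual matching algorithm extends to $b$-matchings once one uses the Edmonds-type polyhedral description of the $b$-matching polytope by degree inequalities and (capacitated) odd-set/blossom inequalities: the algorithm maintains a $b$-matching together with a feasible dual consisting of vertex potentials and potentials on odd vertex sets, and repeatedly augments along alternating paths while contracting and expanding blossoms until complementary slackness holds. Nonnegativity of $w$ causes no difficulty here (it only makes the "do nothing'' solution feasible). This is the route I would take, invoking~\cite{LRT08jdm} for the implementation.

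The one genuinely delicate point is the running time. Plugging the sizes of $G'$ into an $O(|V|\,|E|\log|V|)$-type matching algorithm, or running the $b$-matching blossom algorithm naively, gives a worse polynomial than claimed; obtaining the stated bound $O(n^2m\log(n^2/m))$ requires the efficient blossom bookkeeping --- priority-queue--based shrink/expand operations and dual updates, amortised carefully over the phases --- carried out by Letchford, Reinelt and Theis. For the purposes of this paper I would simply cite their analysis: reconstructing it in full is the hard part, and is exactly what~\cite{LRT08jdm} provides.
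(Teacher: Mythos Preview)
Your proposal is consistent with the paper's treatment: the paper does not prove Lemma~\ref{lrt} at all but simply states it as a known result due to Letchford, Reinelt and Theis~\cite{LRT08jdm}, with no accompanying argument. You correctly identify it as a cited fact and appropriately defer the delicate running-time analysis to that reference; the additional sketch you give of the two standard routes (reduction to ordinary matching versus a direct $b$-matching blossom algorithm) is extra context the paper itself does not supply.
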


\section{Stable Fixtures with Payments}\label{s-fix}

We start with the following useful lemma. This lemma shows that a matching $M$ in a stable solution $(M,p)$ for an instance $(G,b,w)$ of SFP is always a maximum weight $b$-matching. Moreover it implies that the core of a multiple partners matching game is nonempty if the corresponding instance $(G,b,w)$ of SFP has a stable solution (the reverse does not hold as we will prove in Section~\ref{s-core}).

\begin{lemma}\label{l-sfp_compatible}
Let $(G,b,w)$ be an instance of {\sc SFP}.
If $(M,p)$ is a stable solution for $(G,b,w)$ then $M$ is a maximum weight $b$-matching and $p^t$ is a core allocation of the corresponding
multiple partners matching game.
\end{lemma}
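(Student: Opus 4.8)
The plan is to show both claims by exhibiting, from a stable solution $(M,p)$, a feasible dual certificate for the LP relaxation of the maximum weight $b$-matching problem, and simultaneously a point satisfying the core inequalities. The natural candidate is the total payoff vector $p^t$ together with the utilities $u_p$. First I would write down the LP for maximum weight $b$-matching on $(G,b,w)$: maximise $\sum_{ij\in E} w(ij) x_{ij}$ subject to $\sum_{j:\, ij\in E} x_{ij} \le b(i)$ for all $i\in N$ and $0\le x_{ij}\le 1$; its dual has variables $y_i\ge 0$ (for the capacity constraints) and $z_{ij}\ge 0$ (for the upper bounds $x_{ij}\le 1$), with constraints $y_i+y_j+z_{ij}\ge w(ij)$ for all $ij\in E$ and objective $\sum_i b(i) y_i + \sum_{ij} z_{ij}$. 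Here the subtlety is that the $b$-matching polytope is not in general integral for non-bipartite $G$, so LP duality by itself only gives $w(M)\le$ the fractional optimum; I would therefore instead argue combinatorially, avoiding the LP altogether.

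The combinatorial argument I would carry out is: set $y_i := u_p(i)$ for all $i$ and consider any $b$-matching $M'$. For each edge $ij\in M'$, either $ij\in M$, in which case $w(ij)=p(i,j)+p(j,i)$, or $ij\notin M$, in which case stability gives $w(ij)\le u_p(i)+u_p(j)$. Summing over $e\in M'$ and bounding each term, I want to conclude $w(M')\le w(M)$. The point where care is needed is the bookkeeping: a vertex $i$ appears in at most $b(i)$ edges of $M'$, and in the sum $\sum_{ij\in M'} (u_p(i)+u_p(j))$ each vertex $i$ contributes $u_p(i)$ at most $b(i)$ times; meanwhile the edges of $M'\cap M$ contribute their exact weight. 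I would split $w(M') = \sum_{ij\in M'\cap M} w(ij) + \sum_{ij\in M'\setminus M} w(ij) \le \sum_{ij\in M'\cap M}(p(i,j)+p(j,i)) + \sum_{ij\in M'\setminus M}(u_p(i)+u_p(j))$ and then observe that for a vertex $i$ that is saturated by $M$ we have $p(i,j)\ge u_p(i)$ on every edge $ij\in M$, while if $i$ is unsaturated then $u_p(i)=0$; combining, the total of the $u_p(i)$-contributions plus the $p$-contributions is at most $\sum_{i} \big(\text{(number of $M'$-edges at $i$)}\cdot u_p(i)\big)$ corrected by the fact that on edges of $M$ the payoff may exceed $u_p(i)$. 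Cleanest is to note directly that $\sum_{ij\in M'}\big(u_p(i)+u_p(j)\big) \le \sum_{i\in N} b(i)\, u_p(i)$ and that $\sum_{i\in N} b(i)\, u_p(i) \le \sum_{i\in N} p^t(i) = \sum_{ij\in M} w(ij) = w(M)$, where the last inequality uses that an unsaturated $i$ has $u_p(i)=0 \le p^t(i)$ and a saturated $i$ has $b(i)\,u_p(i) = \sum_{j:\, ij\in M} u_p(i) \le \sum_{j:\, ij\in M} p(i,j) = p^t(i)$. Feeding this back gives $w(M')\le w(M)$, i.e.\ $M$ is a maximum weight $b$-matching, and in particular $p^t(N) = w(M) = v(N)$.

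For the core statement I would then fix an arbitrary $\emptyset\ne S\subseteq N$ and let $M_S$ be a maximum weight $b$-matching of $G[S]$, so $v(S)=w(M_S)$. Running the same summation as above but restricted to $S$ — for each $ij\in M_S$ use $w(ij)=p(i,j)+p(j,i)$ if $ij\in M$ and $w(ij)\le u_p(i)+u_p(j)\le p^t(i)+p^t(j)$ (restricted to the relevant terms) if not, and note every $i\in S$ appears in at most $b(i)$ edges of $M_S$ — yields $v(S) = w(M_S) \le \sum_{i\in S} p^t(i) = p^t(S)$. Together with $p^t(N)=v(N)$ this is exactly the core condition~(\ref{coreq}), so $p^t$ is a core allocation.

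\textbf{Main obstacle.} The one genuinely delicate point is the inequality $\sum_{ij\in M'}(u_p(i)+u_p(j)) \le \sum_i b(i)u_p(i) \le p^t(N)$ in the non-bipartite, general-$b$ setting: one must be careful that edges lying in $M\cap M'$ are accounted for by their true weight rather than by $u_p(i)+u_p(j)$ (which could be strictly smaller), and that for saturated vertices the $b(i)$ smallest-or-larger payoffs are correctly compared to $u_p(i)$. Once the per-vertex charging "each $i\in S$ contributes at most $b(i)$ copies of $u_p(i)$, and $b(i)u_p(i)\le p^t(i)$" is stated precisely, both parts of the lemma fall out of the same two-line computation.
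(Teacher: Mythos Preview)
Your overall plan coincides with the paper's: a direct combinatorial charging showing $p^t(S)\ge w(M_S)$ for every $S$ by splitting the edges of a maximum weight $b$-matching $M_S$ of $G[S]$ into $M_S\cap M$ (use $w(ij)=p(i,j)+p(j,i)$) and $M_S\setminus M$ (use stability, $w(ij)\le u_p(i)+u_p(j)$), and then bounding the result by $\sum_{i\in S}p^t(i)$.

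However, the per-vertex charging you finally settle on is not the right one. You write ``each $i\in S$ contributes at most $b(i)$ copies of $u_p(i)$, and $b(i)u_p(i)\le p^t(i)$'', and you call the chain $\sum_{ij\in M'}(u_p(i)+u_p(j))\le\sum_i b(i)u_p(i)\le p^t(N)$ the ``cleanest'' route. But that chain does \emph{not} bound $w(M')$: on edges $ij\in M\cap M'$ one only has $w(ij)=p(i,j)+p(j,i)\ge u_p(i)+u_p(j)$, so replacing those weights by $u_p(i)+u_p(j)$ goes the wrong way. What you actually need to bound, per vertex $i$, is
\[
\sum_{j:\,ij\in M_S\cap M}p(i,j)\;+\;|\{j:\,ij\in M_S\setminus M\}|\cdot u_p(i)\;\le\;p^t(i),
\]
and the first sum may far exceed $|\{j:\,ij\in M_S\cap M\}|\cdot u_p(i)$.

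The fix (which is exactly what the paper does) is a cardinality comparison, not a cap at $b(i)$: if $i$ is saturated by $M$ then $i$ is incident with exactly $b(i)$ edges of $M$ and at most $b(i)$ edges of $M_S$, hence $|\{j:\,ij\in M_S\setminus M\}|\le|\{j:\,ij\in M\setminus M_S\}|$; combined with $p(i,j)\ge u_p(i)$ for $ij\in M$ this gives
\[
\sum_{j:\,ij\in M\setminus M_S}p(i,j)\;\ge\;|\{j:\,ij\in M_S\setminus M\}|\cdot u_p(i).
\]
If $i$ is unsaturated then $u_p(i)=0$ and the inequality is trivial. Adding $\sum_{j:\,ij\in M\cap M_S}p(i,j)$ to both sides yields exactly the displayed per-vertex bound, and summing over $i\in S$ gives $p^t(S)\ge w(M_S)=v(S)$. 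You flagged this as the delicate point, but your stated resolution misses the $|M_S\setminus M|$ versus $|M\setminus M_S|$ comparison that makes it work.
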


\begin{proof}
Let $(M,p)$ be a stable solution for an instance $(G,b,w)$ of SFP, where $G=(N,E)$.
Before we prove that $M$ is a maximum weight $b$-matching, we will first show that $p^t$
is a core allocation of the multiple partners matching game $(N,v)$ defined on $(G,b,w)$.
Recall that $p^t$ is defined as $p^t(i)=\sum_{j:ij\in E}p(i,j)$ for every $i\in N$. Let
$S\subseteq N$ be an arbitrary coalition, and let $M'$ be a maximum weight $b$-matching in the subgraph of $G$ induced by $S$.  Then we have that
\[\begin{array}{lcl}
p^t(S) &= &\displaystyle\sum_{i\in S}p^t(i)\\[15pt]
&= &\displaystyle\sum_{i\in S}{\big (}\sum_{j:ij\in M\cap M'} p(i,j) + \sum_{j:ij\in M\setminus M'} p(i,j){\big )}\\[15pt]
&= &\displaystyle\sum_{ij\in M\cap M'}(p(i,j)+p(j,i)) + \sum_{i\in S}\sum_{j:ij\in M\setminus M'} p(i,j)\\[15pt]
&= &\displaystyle\sum_{ij\in M\cap M'}w(ij) + \sum_{i\in S}\sum_{j:ij\in M\setminus M'} p(i,j)\\[15pt]
&\geq  &\displaystyle\sum_{ij\in M\cap M'}w(ij) + \sum_{i\in S}\sum_{j:ij\in M'\setminus M} u_p(i)\\[15pt]
&= &\displaystyle\sum_{ij\in M\cap M'}w(ij) +
\sum_{ij\in M'\setminus M} (u_p(i)+u_p(j))\\[15pt]
&\geq &\displaystyle\sum_{ij\in M\cap M'}w(ij) + \sum_{ij\in M'\setminus M} w(ij)=w(M')=v(S).
\end{array}
\]
The first inequality is valid because of the following reason. If $i$ is unsaturated by $M$ then $u_p(i)=0$ by definition. If $i$ is saturated
by $M$ then $|\{j:ij\in M\setminus M'\}|\geq |\{j:ij\in M'\setminus M\}|$ and $p(i,j)\geq u_p(i)$ holds for every $j\neq i$ by definition.
The last inequality follows from the stability condition for the pairs not matched by $M$.
We conclude that $p^t$ is in the core.
In particular this means that $w(M)=\sum_ip^t(i)=v(N)$. Hence, $M$ is a maximum weight $b$-matching of $G$.\qed
\end{proof}

The remainder of the section is organized as follows.
In Section~\ref{s-chsfp} we present our characterizations of the set of stable solutions for a given instance of SFP in terms of  stable solutions for some corresponding instance of SRP and in terms of integral optimal solutions of some appropriate LP relaxation.
In Section~\ref{s-gen3} we first observe that our characterization of the set of stable solutions directly leads to a polynomial-time algorithm for solving SFP and then we present an alternative but asymptotically faster algorithm (in order to do so we use Lemma~\ref{lrt}).
In Section~\ref{s-gen2} we generalize Lemma~\ref{compatible} from SRP to SFP
and simultaneously Theorem~\ref{sot92} from MPA to SFP (apart from one statement of Theorem~\ref{sot92} that we observe cannot be generalized in this way).

\subsection{Characterizing Stable Solutions of SFP}\label{s-chsfp}

In this subsection we show
how stable solutions of an instance $(G,b,w)$ of SFP
correspond to both stable solutions of an
instance $(G',1,w')$ of SRP, where $G'$ is a graph of size $O(n^3)$, and
to integral optimal solutions of an LP relaxation.

\bigskip
\noindent
{\bf Reducing to SRP.}
We first explain how to construct the instance $(G',1,w')$ of SRP from an instance $(G,b,w)$ of SFP; see also Figure~\ref{f-con}.
Our construction is based on a well-known construction, which was introduced by Tutte \cite{Tutte54} for nonbipartite graphs with no edge weights.
We write $G'=(N',E')$.
For each player $i\in N$ with capacity $b(i)$ we create $b(i)$ copies, $i^1, i^2, \ldots, i^{b(i)}$ in $N'$. For each edge $ij\in E$ we create four players, $\overline{i_j}$, $i_j$, $j_i$, $\overline{j_i}$, with edges $i^s\overline{i_j}$ for $s=1,\ldots, b(i)$, $\overline{i_j}i_j$, $i_jj_i$, $j_i\overline{j_i}$ and $\overline{j_i}j^t$ for $t=1, \ldots, b(j)$, each with weight $w(ij)$.
This completes the construction.
Note that $G'$ is bipartite if and only if $G$ is bipartite.
Hence, our construction also reduces an instance of MPA to an instance of SMP.
We say that $(G',1,w')$ is {\it reduced} from $(G,b,w)$.

\begin{figure}
\begin{center}
\scalebox{0.6}{\includegraphics{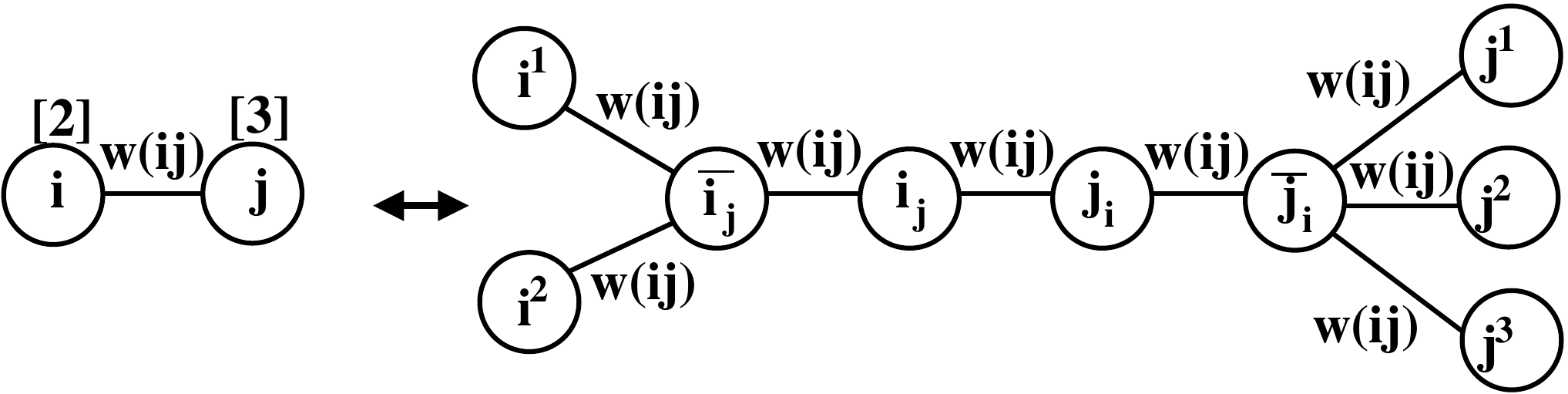}}
\caption{An example of the construction of an SRP instance $(G',1,w')$ from a SFP instance $(G,b,w)$, where $b(i)=2$ and $b(j)=3$.}\label{f-con}
\end{center}
\end{figure}

Given a solution $(M,p)$ for $(G,b,w)$, we define
a solution $(M',p')$ for $(G',1,w')$ as follows.
As $b\equiv 1$ in the instance $(G',1,w')$, we define $p'$ as a function on $N'$ for simplicity, that is, with slight abuse of notation we define $p'$ as a  total payoff vector.
\begin{itemize}
\item
The payoffs of the copies will be the same as the minimum payoffs of the original players, that is, for each $i\in N$, let
$p'(i^s)=u_p(i)$ for every $s=1,\ldots, b(i)$.
\item
For each $i\in N$, we order its partners under $M$.
\begin{itemize}
\item For each $ij\in M$, if $j$ is the  $s$-th partner of $i$ for some $s\in \{1,\ldots, b(i)\}$, and $i$ is the $t$-th partner of $j$ for some $t\in \{1,\ldots, b(j)\}$ then let $i^s\overline{i_j}\in M'$, $i_jj_i\in M'$ and $\overline{j_i}j^t\in M'$ with the following payoffs: $p'(i_j)=p(i,j)$ and $p'(\overline{i_j})=w(ij)-u_p(i)$, and similarly $p'(j_i)=p(j,i)$ and $p'(\overline{j_i})=w(ij)-u_p(j)$.
\item For each $ij\in E\setminus M$, let $\overline{i_j}i_j$, $j_i\overline{j_i}\in M'$
 with $p'(i_j)=\min\{u_p(i),w(ij)\}$ and $p'(\overline{i_j})=w(ij)-\min\{u_p(i),w(ij)\}$, and similarly $p'(j_i)=\min\{u_p(j),w(ij)\}$ and $p'(\overline{j_i})=w(ij)-\min\{u_p(j),w(ij)\}$.
\end{itemize}
\end{itemize}
We say that $(M',p')$ is {\it reduced} from $(M,p)$ and prove in our next lemma that $(M',p')$ is a solution for $(G',1,w)$.
Note that $(M',p')$ may not be the
unique solution reduced from $(M,p)$,
as it might be possible to order the partners of a player differently and we used this order in the construction of $M'$.
We may also define a matching $M'$ in the above way, without defining any payoff vector~$p'$, and say that $M'$ is {\it reduced} from~$M$.
In our next lemma we also show that our reduction preserves the ``maximum weight property'' of a matching.
\begin{lemma}\label{l-obs}
Let $(G,b,w)$ be an instance of {\sc SFP} and let $(G',b',w')$ be its reduced instance. Then the following two statements hold:
\begin{itemize}
\item [(i)] every matching $M'$ reduced from a maximum weight $b$-matching $M$ is a maximum
weight matching of $G'$;
\item [(ii)] every pair $(M',p')$ reduced from a solution $(M,p)$ for $(G,b,w)$ is a solution for $(G',1,w)$.
\end{itemize}
\end{lemma}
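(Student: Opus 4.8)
The plan is to prove the two statements separately; both come down to bookkeeping on the gadget attached to each edge $ij\in E$, namely the ``caterpillar'' $C_{ij}$ consisting of the path $\overline{i_j},i_j,j_i,\overline{j_i}$ together with the pendant edges joining $\overline{i_j}$ to $i^1,\dots,i^{b(i)}$ and $\overline{j_i}$ to $j^1,\dots,j^{b(j)}$, all of weight $w(ij)$. Every edge of $G'$ lies in exactly one such $C_{ij}$. Part~(i) is the more delicate one.

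I would first dispose of~(ii). The set $M'$ is a matching: the chosen ordering of the partners of a player $i$ is a bijection between $\{1,\dots,\deg_M(i)\}$ and the partners of $i$, so distinct copies $i^s$ get matched to distinct vertices $\overline{i_j}$, while each internal vertex $\overline{i_j},i_j,j_i,\overline{j_i}$ of $C_{ij}$ is covered exactly once, by the edges $i^s\overline{i_j},i_jj_i,\overline{j_i}j^t$ when $ij\in M$ and by $\overline{i_j}i_j,j_i\overline{j_i}$ when $ij\notin M$. Compatibility is then checked edge by edge along each $C_{ij}$: for $ij\in M$ the three matched edges receive payoff sums $u_p(i)+(w(ij)-u_p(i))$, $p(i,j)+p(j,i)$ and $(w(ij)-u_p(j))+u_p(j)$, each equal to $w(ij)$; for $ij\notin M$ the two matched edges both receive $\min\{u_p(i),w(ij)\}+(w(ij)-\min\{u_p(i),w(ij)\})=w(ij)$ and its $j$-analogue. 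Finally, $p'\ge 0$ and the vertices uncovered by $M'$ receive payoff $0$: the only uncovered vertices are the surplus copies $i^s$ with $s>\deg_M(i)$, which occur only for unsaturated $i$, where $p'(i^s)=u_p(i)=0$; and the only entries of $p'$ whose nonnegativity is not immediate are $p'(\overline{i_j})=w(ij)-u_p(i)$ for $ij\in M$ (and the $j$-analogue), which are nonnegative because $u_p(i)=0$ if $i$ is unsaturated and $u_p(i)\le p(i,j)\le w(ij)$ if $i$ is saturated. This gives~(ii).

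For~(i), note first that a matching $M'$ reduced from a $b$-matching $M$ uses inside $C_{ij}$ exactly three edges of weight $w(ij)$ if $ij\in M$ and exactly two if $ij\notin M$, so $w'(M')=2w(E)+w(M)$, where $w(E)=\sum_{ij\in E}w(ij)$. It therefore suffices to show that every matching $M''$ of $G'$ satisfies $w'(M'')\le 2w(E)+w(M)$. A short case check on $C_{ij}$ shows that any matching meets $C_{ij}$ in at most three edges, and in exactly three only in the configuration $i^s\overline{i_j},i_jj_i,\overline{j_i}j^t$, which uses up one copy $i^s$ and one copy $j^t$. Writing the contribution of $C_{ij}$ to $w'(M'')$ as $(2+x_{ij})w(ij)$ with $x_{ij}\le 1$, the set $F=\{ij\in E:x_{ij}=1\}$ is a $b$-matching of $G$, since the gadgets of $F$ incident to $i$ consume pairwise distinct copies $i^s$. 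As $w\ge 0$, we have $x_{ij}w(ij)\le w(ij)$ when $ij\in F$ and $x_{ij}w(ij)\le 0$ otherwise, hence $w'(M'')=2w(E)+\sum_{ij\in E}x_{ij}w(ij)\le 2w(E)+w(F)\le 2w(E)+w(M)$, the last step because $M$ has maximum weight among all $b$-matchings of $G$. Combined with $w'(M')=2w(E)+w(M)$, this proves that $M'$ is a maximum weight matching of $G'$.

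The main obstacle is the case analysis inside part~(i): verifying that a matching meets each caterpillar $C_{ij}$ in at most three edges and that the three-edge case is forced to be $i^s\overline{i_j},i_jj_i,\overline{j_i}j^t$. This is a finite check, but it is exactly the point that turns an arbitrary matching of $G'$ into a $b$-matching $F$ of $G$ and thus closes the bound; everything else is routine.
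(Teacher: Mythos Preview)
Your proof is correct and follows essentially the same approach as the paper. For~(i), the paper starts from a maximum weight matching $M^*$ of $G'$ and normalizes it gadget-by-gadget into a matching that is reduced from some $b$-matching $\hat{M}$, then compares weights; you instead bound $w'(M'')$ directly for an arbitrary matching $M''$ via the same gadget case analysis, extracting the $b$-matching $F$ from the three-edge gadgets. Both arguments hinge on exactly the observation you flag as the ``main obstacle'': a matching can meet each $C_{ij}$ in at most three edges, and in three only in the configuration $i^s\overline{i_j},\,i_jj_i,\,\overline{j_i}j^t$. Your version is slightly cleaner and avoids the normalization step. For~(ii) you are actually more thorough than the paper, which checks only the payoff sums on matched edges; you additionally verify that $M'$ is a matching, that $p'\ge 0$, and that unmatched copies $i^s$ receive payoff~$0$.
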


\begin{proof}
Let $(G,b,w)$ be an instance of {\sc SFP} and let $(G',b',w')$ be its reduced instance.
We first prove (i) and then (ii).

\medskip
\noindent
{\bf (i)}
Let $M$ be a maximum weight $b$-matching of $G$, and let $M'$ be a matching in $G'$ reduced from $M$.
Note that $$w'(M')=3\sum_{ij\in M}w(ij)+2\sum_{ij\in E\setminus M}w(ij)=3w(M)+2w(E\setminus M)=w(M)+2w(E).$$
Consider a maximum weight matching $M^*$ of $G'$.
Let $E_{ij}$ denote all the edges in $G'$ that are incident with at least one vertex in $\{\overline{i_j}, i_j, j_i, \overline{j_i}\}$.
As $M^*$ has maximum weight, each $E_{ij}$ contains at least two edges of~$M^*$. By construction of~$G'$, each $E_{ij}$ contains
at most three edges of~$M^*$.
If $|E_{ij}\cap M^*|=2$, then we replace the edges of $E_{ij}\cap M^*$ by the edges $\overline{i_j}i_j$ and $j_i\overline{j_i}$ if necessary.
If $|E_{ij}\cap M^*|=3$, then $M^*$ must contain the edge $i_jj_i$ together with the edges $i^s\overline{i_j}$ for some $s\in \{1, \ldots, b(i)\}$ and $\overline{j_i}j^t$ for some $t\in \{1, \ldots, b(j)\}$. Hence the resulting matching $\hat{M}'$ is a maximum weight matching of $G'$, which we can reduce
from some $b$-matching $\hat{M}$ of $G$ by construction. In particular, we have that
$w'(\hat{M}') =w(\hat{M})+2w(E)$.
As $M$ is a maximum weight $b$-matching of $G$, we find that $w(M)\geq w(\hat{M})$.
Hence, we deduce that $w'(\hat{M}') =w(\hat{M})+2w(E) \leq w(M)+2w(E)=w'(M')$. As
$\hat{M}'$ is a maximum weight matching of $G'$, this implies that
$M'$ is a maximum weight matching of $G'$.

\medskip
\noindent
{\bf (ii)} Let $(M,p)$ be a solution for $(G,b,w)$, and let $(M',p')$ be a pair reduced from $(M,p)$.
Every edge in $G'$ is obtained from an edge $ij$ of $G$. We consider all cases (cf. Remark~\ref{r-bis1}).
If $i^s\overline{i_j}\in M'$ then $p'(i^s)+p'(\overline{i_j})= u_p(i)+w(ij)-u_p(i)=w(ij)=w'(i^s\bar{i_j})$.
If $\overline{i_j}i_j \in M'$ then $p'(\overline{i_j})+p'(i_j)=w(ij)-\min\{u_p(i),w(ij)\}+\min\{u_p(i),w(ij)\}=w(ij)=w'(\overline{i_j}i_j)$.
If  $i_jj_i\in M'$ then $p'(i_j)+p'(j_i)=p(i,j)+p(j,i)=w(ij)=w'(i_jj_i)$, where the one-but-last equality follows from the assumption that $(M,p)$ is a solution for $(G,b,w)$.
Hence, we have shown statement~(ii).
\qed
\end{proof}

Our next result shows that reduced solutions preserve stability.
In fact, we prove a stronger statement, as we characterize stable solutions for instances of SFP in terms of stable solutions for instances of SRP.
In particular, Theorems~\ref{t-stt}~(ii) and~(iii) specify properties of stable solutions of reduced instances that we will need later on.
We note that the restriction in statements~(ii) and~(iii) of Theorem~\ref{t-stt} to maximum weight $b$-matchings does not make these statements weaker due to Lemma~\ref{l-sfp_compatible}.

\begin{theorem}\label{t-stt}
Let $(G,b,w)$ be an instance of \emph{SFP} and $(G',1,w')$ be the instance of
\emph{SRP} reduced from $(G,b,w)$.
Then the following three statements hold:
\begin{itemize}
\item [(i)] $(G,b,w)$ has a stable solution if and only if $(G',1,w')$ has a stable solution;\\[-6pt]
\item [(ii)] $(G,b,w)$ has a stable solution $(M,p)$
for some maximum weight $b$-matching~$M$ of $G$
 if and only if $(G',1,w')$ has a stable solution $(M',p')$,
 where $M'$ is reduced from $M$ and $p'(i_j)=p(i,j)$ and $p'(j_i)=p(j,i)$ for every
$ij\in M$;\\[-6pt]
\item [(iii)] a solution $(M,p)$ for $(G,b,w)$,
where $M$ is a maximum weight $b$-matching of $G$,
is stable if and only if every solution $(M',p')$ for $(G',1,w')$ reduced from $(M,p)$ is stable.
\end{itemize}
\end{theorem}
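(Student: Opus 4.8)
The plan is to establish part~(iii) first, by a direct edge-by-edge check on $G'$, and then to derive parts~(ii) and~(i) from it together with Lemmas~\ref{l-obs}, \ref{l-sfp_compatible} and~\ref{compatible}. \emph{For~(iii)}, fix a solution $(M,p)$ for $(G,b,w)$ and let $(M',p')$ be any solution reduced from it; by Lemma~\ref{l-obs}(ii) it is a solution, and since $b'\equiv 1$ we may treat $p'$ as a function on $N'$ with $u_{p'}(x)=p'(x)$ for all $x\in N'$ (cf.\ Remark~\ref{r-bis1}). Every edge of $G'$ lies in the gadget of a unique $ij\in E$ and has one of the five forms $i^s\overline{i_j}$, $\overline{i_j}i_j$, $i_jj_i$, $j_i\overline{j_i}$, $\overline{j_i}j^t$ (three up to the symmetry $i\leftrightarrow j$). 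For the ``only if'' implication I would verify the stability condition for each such edge that lies outside $M'$, distinguishing $ij\in M$ and $ij\notin M$. If $ij\in M$, after substituting the reduced payoffs every inequality is an identity or reduces to $p(i,j)\ge u_p(i)$ (resp.\ $p(j,i)\ge u_p(j)$), which holds by definition of $u_p$. If $ij\notin M$, the only inequality that is not immediate is the one at $i_jj_i$, which becomes $\min\{u_p(i),w(ij)\}+\min\{u_p(j),w(ij)\}\ge w(ij)$; this follows from the stability of $(M,p)$ at $ij$ when both minima equal the $u_p$-values, and from nonnegativity otherwise. The ``if'' implication is the contrapositive: if some $ij\in E\setminus M$ has $u_p(i)+u_p(j)<w(ij)$, then $u_p(i),u_p(j)<w(ij)$ (as $u_p\ge 0$), so in any reduced solution $p'(i_j)=u_p(i)$ and $p'(j_i)=u_p(j)$, whence the non-matching edge $i_jj_i$ is a blocking pair of $(M',p')$.

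\emph{For~(ii)}, the ``only if'' direction is immediate from what precedes: if $(M,p)$ is stable with $M$ a maximum weight $b$-matching, then by Lemma~\ref{l-obs} any reduced pair $(M',p')$ is a solution with $M'$ a maximum weight matching of $G'$, part~(iii) shows it is stable, and by construction $p'(i_j)=p(i,j)$ and $p'(j_i)=p(j,i)$ for $ij\in M$. For the ``if'' direction, let $(M',p')$ be a stable solution of $(G',1,w')$ with $M'$ reduced from a maximum weight $b$-matching $M$ and with $p'(i_j)=p(i,j)$, $p'(j_i)=p(j,i)$ on $M$; extend $p$ by $0$ on $E\setminus M$. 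Since $i_jj_i\in M'$ for $ij\in M$, compatibility gives $p(i,j)+p(j,i)=w(ij)$, so $(M,p)$ is a solution. Suppose it were not stable, say $u_p(i)+u_p(j)<w(ij)$ for some $ij\in E\setminus M$. As $ij\notin M$, the construction of $M'$ puts $\overline{i_j}i_j$ into $M'$, so $i^s\overline{i_j}\notin M'$ for all $s$; combining the stability inequality $p'(i^s)+p'(\overline{i_j})\ge w(ij)$ with the compatibility equation $p'(\overline{i_j})+p'(i_j)=w(ij)$ gives $p'(i_j)\le p'(i^s)$, hence $p'(i_j)\le\min_s p'(i^s)$. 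A second telescoping — along the gadget of a worst $M$-partner $j^\ast$ of $i$ when $i$ is saturated (there $p(i,j^\ast)=u_p(i)$), or via an unmatched copy of $i$ when $i$ is unsaturated (there $u_p(i)=0$) — yields $\min_s p'(i^s)\le u_p(i)$. Thus $p'(i_j)\le u_p(i)$ and, symmetrically, $p'(j_i)\le u_p(j)$, so the stability inequality of $(M',p')$ at the non-matching edge $i_jj_i$ gives $w(ij)\le p'(i_j)+p'(j_i)\le u_p(i)+u_p(j)<w(ij)$, a contradiction.

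\emph{For~(i)}, the ``only if'' direction follows by combining Lemma~\ref{l-sfp_compatible} (which makes $M$ a maximum weight $b$-matching) with the ``only if'' direction of~(ii). For the ``if'' direction, let $(M^\ast,p^\ast)$ be a stable solution of $(G',1,w')$; by Lemma~\ref{compatible}, $M^\ast$ is a maximum weight matching of $G'$. Using the rerouting argument from the proof of Lemma~\ref{l-obs}(i) I would modify $M^\ast$ inside each gadget, without changing its weight, into a maximum weight matching $\hat M'$ that is reduced from some $b$-matching $\hat M$ of $G$; comparing weights via the identity $w'(\,\cdot\,)=w(\,\cdot\,)+2w(E)$ for matchings reduced from $b$-matchings shows that $\hat M$ is of maximum weight. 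Lemma~\ref{compatible} then supplies a payoff $\hat p'$ equivalent to $p^\ast$ and compatible with $\hat M'$; since for $b'\equiv1$ utilities equal total payoffs and $(\hat p')^t=(p^\ast)^t$ is a core allocation (Lemma~\ref{compatible}), $(\hat M',\hat p')$ is stable, and the ``if'' direction of~(ii) applied to it produces a stable solution of $(G,b,w)$.

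The routine part is the case analysis in~(iii). The main obstacle is the ``if'' direction of~(ii): one has to recover a stable SFP solution from an \emph{arbitrary} stable SRP solution of the reduced instance, whose payoffs on the copies $i^s$ and on the vertices $\overline{i_j}$ need not be the canonical reduced values; the crucial point is that the chain of stability inequalities along the length-five gadget path forces $p'(i_j)\le u_p(i)$. The same issue, now compounded with the need to reroute a matching, recurs in the ``if'' direction of~(i).
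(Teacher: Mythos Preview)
Your proposal is correct and follows essentially the same approach as the paper: both arguments hinge on the edge-by-edge verification of stability in the gadget, the telescoping chain $p'(i_j)\le p'(i^s)\le u_p(i)$ along a worst partner (or an unmatched copy), and the use of Lemma~\ref{compatible} to transport a stable SRP solution onto a matching reduced from a maximum weight $b$-matching. The organisation differs---you lead with~(iii) and then derive~(ii) and~(i), while the paper proves the forward directions simultaneously and then handles the backward directions of~(i),(ii) before deducing the backward direction of~(iii)---and your contrapositive for the ``if'' part of~(iii), exploiting the explicit reduced values $p'(i_j)=\min\{u_p(i),w(ij)\}$ on non-matching edges, is a slightly cleaner shortcut than the paper's route of specialising the~(ii) argument.
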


\begin{proof}
Let $(G,b,w)$ be an instance of SFP and $(G',1,w')$ be the instance of SRP reduced from $(G,b,w)$.

\medskip
\noindent
We first prove the ``$\Rightarrow$''-direction of statement~(iii), namely
that if $(G,b,w)$ has a stable solution $(M,p)$, then every $(M',p')$ for $(G',1,w')$ that is reduced from $(M,p)$ is stable.
Note that this immediately implies the ``$\Rightarrow$''-direction of statements~(i) and~(ii) as well.

Let $(M',p')$ be reduced from $(M,p)$. By Lemma~\ref{l-obs}~(ii) we find that $(M',p')$ is a solution for $(G',1,w')$.
We will show that $(M',p')$ is a stable solution for $(G',1,w')$. Throughout the proof we assume that $p'$ is a total payoff vector defined on $N'$.

In order to prove that $(M',p')$ is stable we must check whether the stability condition for each edge not in $M'$ is satisfied, that is, whether
$u_{p'}(i')+u_{p'}(j')\geq w'(i'j')$ holds for all $i'j'\in E'\setminus M'$.
As $u_{p'}(i')=p'(i')$ for every $i'\in N'$,
this comes down to checking whether
\[p'(i')+p'(j')\geq w'(i'j')\; \mbox{for all}\; i'j'\in E'\setminus M'.\]
First consider each edge $i^s\overline{i_j} \notin M'$.
Then $p'(\overline{i_j})=w(ij)-\min\{u_p(i),w(ij)\}$. Hence we find that  $p'(i^s)+p'(\overline{i_j}) = u_p(i)+w(ij)-\min\{u_p(i),w(ij)\}\geq w(ij)=w'(i^s\overline{i_j})$.
Now consider each edge $\overline{i_j}i_j \notin M'$. Then $p'(\overline{i_j})=w(ij)-u_p(i)$ and $p'(i_j)=p(i,j)$.
As $u_p(i)\leq p(i,j)$, we obtain $p'(\overline{i_j})+p'(i_j)
= w(ij)-u_p(i)+p(i,j) \geq w(ij)=w'(\overline{i_j}i_j)$.
Finally, consider each edge $i_jj_i \notin M'$. Then $p'(i_j)=\min\{u_p(i),w(ij)\}$ and $p'(j_i)=\min\{u_p(j),w(ij)\}$.
If $w(ij)\leq u_p(i)$ or $w(ij)\leq u_p(j)$, then $p'(i_j)+p'(j_i)\geq w(ij)=w'(i_jj_i)$.
Now suppose that $w(ij)>u_p(i)$ and $w(ij)>u_p(j)$.
Then $p'(i_j)+p'(j_i)=u_p(i)+u_p(j)\geq w(ij)=w'(i_jj_i)$, where the inequality follows from the fact that $(M,p)$ is a stable solution for $(G,b,w)$.
Hence the ``$\Rightarrow$''-directions of statements (i) and~(ii) have been proven.

\medskip
\noindent
We will now prove the ``$\Leftarrow$''-directions of statements~(i) and~(ii). We do this as follows. First we assume that $(G',1,w')$ has a stable solution. We then show that this stable solution can be changed into a stable solution for $(G',1,w')$ as prescribed by statement~(ii), followed by a proof of the
 ``$\Leftarrow$''-direction of statement~(ii).

Let $(M'',p'')$ be a stable solution for $(G',1,w')$.
Let $M'$ be a matching of $G'$ that can be reduced
from a maximum weight $b$-matching $M$ of $G$.
By Lemma~\ref{l-obs}~(i), we find that $M'$ is a maximum weight matching of $G'$
By Lemma~\ref{compatible},
$(G',1,w')$ has a stable solution $(M',p')$, where $p'$ is equivalent to $p''$.
We define $p(i,j)=p'(i_j)$ and $p(j,i)=p'(j_i)$ for every $ij\in M$ and $p(i,j)=p(j,i)=0$ for every $ij\in E\setminus M$.
We claim that $(M,p)$ is a stable solution for $(G,b,w)$.

First we show that $(M,p)$ is a solution for $(G,b,w)$. Let $ij\in E$. If $ij\in E\setminus M$, then $p(i,j)=p(j,i)=0$ by definition of $p$.
If $ij\in M$, then $i_jj_i\in M'$ by construction.
As $(M',p')$ is a solution, this means that $p(i,j)+p(i,j)=p'(i_j)+p'(j_i)=w'(i_jj_i)=w(ij)$.

Now we show that $(M,p)$ is stable.
For contradiction, assume that $G$ has an edge $ij\in E\setminus M$ that blocks $(M,p)$, that is,
$$u_p(i)+u_p(j)<w(ij).$$
Note that $i_jj_i\in E'\setminus M'$ by construction.
We claim that $p'(i_j)\leq u_p(i)$ and $p'(j_i)\leq u_p(j)$ hold. Then $p'(i_j)+p'(j_i)\leq u_p(i)+u_p(j)<w(ij)=w'(i_jj_i)$, which means. together with $i_jj_i\in E'\setminus M'$, that we have obtained a contradiction with the stability of $(M',p')$.

In order to prove that $p'(i_j)\leq u_p(i)$ and $p'(j_i)\leq u_p(j)$, we distinguish between two cases.
First suppose that $i$ is not saturated by $M$. Then $G'$ contains a copy of $i$, say~$i^s$, that is not matched by $M'$, which means that $p'(i^s)=0$.
As $(M',p')$ is stable and $i^s\overline{i_j}\in E\setminus M'$, we find that $u_{p'}(i^s)+u_{p'}(\overline{i_j})=p'(i^s)+p'(\overline{i_j})\geq w'(i^s\overline{i_j})$.
Hence, $p'(\overline{i_j})\geq w(ij)$.
As $\overline{i_j}i_j\in M'$ by construction and $(M',p')$ is a solution for $(G',1,w)$, we obtain $p'(\overline{i_j})+p'(i_j)=w'(\overline{i_j}i_j)$.  As $p'(\overline{i_j})\geq w(ij)$ and
$p'\geq 0$, this means
 that $p'(i_j)=0$. Hence, $p'(i_j)\leq u_p(i)$.

Now suppose that $i$ is saturated by $M$. Then all  $b(i)$ copies $i^h$ of $i$ in $(G',1,w')$ are matched by $M'$. Let $k$ be the partner of $i$ for which $p(i,k)$ is minimal, so $u_p(i)=p(i,k)=p'(i_k)$. As $i^s\overline{i_k}\in M'$ and $(M',p')$ is a solution for $(G',1,w')$, we find that $p'(i^s)+p'(\overline{i_k})=w(ik)$. From the stability of $(M',p')$ it follows that $p'(\overline{i_k})+p'(i_k)\geq w(ik)$. Hence $p'(i_k)\geq p'(i^s)$ holds.
From the stability of $(M',p')$ it also follows that $p'(i^s)\geq w(ij)-p'(\overline{i_j})$. As $(M',p')$ is a solution for $(G',1,w')$ and $\overline{i_j}i_j\in M'$, we find that
$w(ij)-p'(\overline{i_j})=p'(i_j)$. Putting the last three conditions together yields $p'(i_j)\leq p'(i_k)=u_p(i)$, just as we claimed.
We can show $p'(j_i)\leq u_p(j)$ by exactly the same arguments.
Hence the two ``$\Leftarrow$''-directions of statements (i) and (ii) have been proven.

\medskip
\noindent
We are left to prove the ``$\Leftarrow$''-direction of statement (iii). This direction can be seen as follows.
Let $(M,p)$ be a solution for $(G,b,w)$, where $M$ is a maximum weight $b$-matching of $G$.
If the starting stable solution $(M'',p'')$ is reduced from a solution $(M,p)$ of $(G,b,w)$ (instead of being an arbitrary stable solution for $(G',1,w)$) then the above argument shows that $(M,p)$ is stable. This completes the proof of the theorem.
\qed
\end{proof}

Note that not all stable solutions for $(G',1,w')$ can be reduced from stable solutions for $(G,b,w)$, as $u_p(i)>p'(i^s)$ is possible for some $s\in 1\ldots b(i)$. This might be the case even for very small instances. Let $G$ consist of only two adjacent players $i$ and $j$ with $b(i)=b(j)=1$ and $w(ij)=7$.
Then the pair $(M,p)$ with $M=\{ij\}$ and $p(i,j)=3$, $p(j,i)=4$ is a stable solution, which has only one reduced solution $(M',p')$ given by
$M'=\{i^1\overline{i_j}, i_jj_i, \overline{j_i}j^1\}$ and payoffs $p'(i^1)=3$, $p'(\overline{i_j})=4$, $p'(i_j)=3$, $p'(j_i)=4$, $p'(\overline{j_i})=3$ and $p'(j^1)=4$. An alternative stable solution for $(G',1,w')$ is given by $(M',p'')$, where $p''(i^1)=1$, $p''(\overline{i_j})=6$, $p''(i_j)=3$, $p''(j_i)=4$, $p''(\overline{j_i})=4$, $p''(j^1)=3$.

\begin{remark}\label{fourtotwo}
In the proof of Theorem~\ref{t-stt} we could have used a simpler construction, namely the one where each 4-player path in the reduced instance, $\overline{i_j},i_j,j_i,\overline{j_i}$ is replaced with a 2-player path $\overline{i_j},\overline{j_i}$.
However, this reduction would not give us the one-to-one correspondence described in
Theorem~\ref{t-stt}~(ii), and this correspondence is crucial for some of our other results.
\end{remark}

\bigskip
\noindent
{\bf The LP Relaxation.}
Let $(G,b,w)$ be an instance of SFP.
Let $M$ be a matching of $G$.
With $M$ we associate a binary vector $x^M:E\to \{0,1\}$ called the {\it characteristic function} of $M$, which is
defined by $x^M(ij)=1$ for all $ij\in M$ and $x^M(ij)=0$ for all $ij\in E\setminus M$.  Then we can write $\sum_{j:ij\in E} x^M(ij)\leq b(i)$ for each $i\in N$ as an alternative way to state the capacity condition. This leads to the following LP relaxation, which we call Primal-$(G,b,w)$:

\begin{equation}\tag{P-obj}
\label{P_obj}
\max \sum_{ij\in E}w(ij)x(ij)
\end{equation}
\begin{equation}\tag{a}
\label{a}
\mbox {subject to}\;\;\; \sum_{j:ij\in E}x(ij)\leq b(i) \mbox{ for each } i\in N,
\end{equation}
\begin{equation}\tag{b}
\label{b}
\hspace*{18mm}0\leq
x(ij)\leq 1 \mbox{ for each } ij\in E.
\end{equation}
We make the following observation.

\begin{lemma}\label{l-primal}
The (optimal) solutions of  the integer linear program corresponding to Primal-$(G,b,w)$ correspond to the (maximum weight) $b$-matchings of $G$ and vice versa.
\end{lemma}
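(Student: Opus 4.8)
The plan is to unwind the definitions and exhibit the correspondence explicitly via characteristic functions. First I would observe that the integer linear program corresponding to Primal-$(G,b,w)$ is obtained by adding the constraints $x(ij)\in\Z$ for every $ij\in E$ to~\eqref{a} and~\eqref{b}; combined with~\eqref{b} this forces $x(ij)\in\{0,1\}$ for every edge. Hence every feasible integral solution~$x$ is a binary vector on~$E$, and therefore equals $x^M$ for exactly one set $M:=\{ij\in E: x(ij)=1\}\subseteq E$; conversely, every $M\subseteq E$ yields a binary vector $x^M$ that satisfies~\eqref{b} and the integrality constraints. This gives a bijection between the binary vectors satisfying~\eqref{b} (plus integrality) and the subsets of~$E$.

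Next I would check that, under this bijection, constraint~\eqref{a} translates precisely into the $b$-matching condition: for $x=x^M$ we have $\sum_{j:ij\in E}x^M(ij)=|\{j\in N: ij\in M\}|$, which is the number of edges of~$M$ incident with~$i$. Thus \eqref{a} holds for all $i\in N$ if and only if $i$ is incident with at most $b(i)$ edges of~$M$ for every $i\in N$, i.e., if and only if $M$ is a $b$-matching. Consequently, the feasible integral solutions of Primal-$(G,b,w)$ are exactly the characteristic vectors $x^M$ of the $b$-matchings $M$ of~$G$.

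Finally, for the optimality part, I would note that the bijection is value-preserving: the objective value at $x^M$ equals $\sum_{ij\in E}w(ij)x^M(ij)=\sum_{ij\in M}w(ij)=w(M)$. Hence $x$ maximizes~\eqref{P_obj} over all feasible integral solutions if and only if the corresponding $b$-matching~$M$ has maximum weight among all $b$-matchings of~$G$, which is the claim. I do not anticipate any real obstacle here, since the statement is essentially a reformulation of the definitions; the only point that requires a little care is that it is constraint~\eqref{b} \emph{together with} the integrality constraints (and not~\eqref{b} alone) that pins~$x$ down to a $0/1$ vector, so that the characteristic-function correspondence is exact.
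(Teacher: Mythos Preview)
Your argument is correct and is exactly the standard verification one would write; the paper itself treats this lemma as an observation and gives no proof at all, so there is nothing to compare beyond noting that your write-up is the natural unwinding of the definitions the authors presumably had in mind.
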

We now formulate the dual LP Dual-$(G,b,w)$ of Primal-$(G,b,w)$:

\begin{equation}\tag{D-obj}
\label{D_obj}
\min \sum_{i\in N}b(i)y(i)+\sum_{ij\in E}d(ij)
\end{equation}
\begin{equation}\tag{a'}
\label{a'}
\mbox{subject to}\;\;\;y(i)+y(j)+d(ij)\geq w(ij) \mbox{ for each } ij\in E,
\end{equation}
\begin{equation*}\tag{b'}
\label{b'}
0 \leq y(i) \mbox{ for all } i\in N,
\end{equation*}
\begin{equation*}\tag{c'}
\label{c'}
\hspace*{3mm}0 \leq d(ij) \mbox{ for all } ij\in E.
\end{equation*}
Note that for an optimal dual solution $(y,d)$, it holds that $d(ij)=[w(ij)-y(i)-y(j)]_+$ (where the latter notation means $\max\{w(ij)-y(i)-y(j), 0\}$).

\medskip
\noindent
We now characterize stable solutions of instances of SFP in terms of integral optimal solutions of Primal-$(G,b,w)$.

\begin{theorem}\label{char}
An instance $(G,b,w)$ of \emph{SFP}
has a stable solution if and only if the LP Primal-$(G,b,w)$ has an integral optimal solution.
\end{theorem}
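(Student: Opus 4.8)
The plan is to exploit LP duality between Primal-$(G,b,w)$ and Dual-$(G,b,w)$, reading the dual variable $y(i)$ as the utility $u_p(i)$ and the dual variable $d(ij)$ as the ``excess'' $[w(ij)-u_p(i)-u_p(j)]_+$ of an edge. The bridge in both directions is complementary slackness: the primal bound $x(ij)\le 1$ is slack exactly when $ij\notin M$, and then its dual variable $d(ij)$ must vanish, which together with dual feasibility (a$'$) is precisely the stability condition $u_p(i)+u_p(j)\ge w(ij)$; dually, constraint~(a$'$) is tight on $ij\in M$, which amounts to $u_p(i)+u_p(j)\le w(ij)$, a property every compatible payoff satisfies automatically since $u_p(i)\le p(i,j)$ and $u_p(j)\le p(j,i)$ for $ij\in M$; and the capacity constraint (a) is slack at an unsaturated vertex $i$, forcing $y(i)=0$, which matches the convention $u_p(i)=0$ for unsaturated $i$.

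For the ``only if'' direction, suppose $(M,p)$ is a stable solution. I would set $y(i):=u_p(i)$ and $d(ij):=[w(ij)-u_p(i)-u_p(j)]_+$. Feasibility of $(y,d)$ for Dual-$(G,b,w)$ is immediate from $p\ge 0$ and the definition of $d$, and $x^M$ is feasible for Primal-$(G,b,w)$ because $M$ is a $b$-matching. It then suffices to check that the two objectives are equal. For $ij\in M$ we have $u_p(i)+u_p(j)\le w(ij)$ (as above), so $d(ij)=w(ij)-u_p(i)-u_p(j)$; for $ij\notin M$ stability gives $d(ij)=0$. Combining these with the identity $\sum_{i\in N}b(i)u_p(i)=\sum_{ij\in M}(u_p(i)+u_p(j))$ --- valid since a saturated vertex lies in exactly $b(i)$ edges of $M$ while an unsaturated vertex has $u_p(i)=0$ --- collapses the dual objective to $\sum_{ij\in M}w(ij)=w(M)$, which equals the primal value at $x^M$. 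By weak duality $x^M$ is an integral optimal solution of Primal-$(G,b,w)$, consistently with Lemma~\ref{l-sfp_compatible}.

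For the ``if'' direction, suppose Primal-$(G,b,w)$ has an integral optimal solution $x$; by Lemma~\ref{l-primal}, $x=x^M$ for a $b$-matching $M$. Fix an optimal dual solution $(y,d)$, which exists since both LPs are feasible and bounded. Complementary slackness yields $y(i)=0$ whenever $i$ is unsaturated by $M$; $d(ij)=0$ whenever $ij\notin M$, which with dual feasibility (a$'$) gives $y(i)+y(j)\ge w(ij)$ for all $ij\notin M$; and $y(i)+y(j)+d(ij)=w(ij)$ for all $ij\in M$, which with $d(ij)\ge 0$ gives $y(i)+y(j)\le w(ij)$ for all $ij\in M$. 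I would then define $p$ by splitting, for each $ij\in M$, $p(i,j):=y(i)+\frac12 d(ij)$ and $p(j,i):=y(j)+\frac12 d(ij)$ (any split respecting the lower bounds $y(i),y(j)$ works), and $p(i,j)=p(j,i)=0$ for $ij\notin M$; this $p$ is nonnegative and compatible with $M$. Since $p(i,j)\ge y(i)$ for every $ij\in M$, we get $u_p(i)\ge y(i)$ for saturated $i$ and $u_p(i)=0=y(i)$ for unsaturated $i$, hence $u_p(i)+u_p(j)\ge y(i)+y(j)\ge w(ij)$ for every $ij\notin M$, i.e. $(M,p)$ is stable.

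The routine parts are the two feasibility checks and the objective-matching computation. The only point needing care is keeping the two inequalities straight --- $y(i)+y(j)\le w(ij)$ on matching edges versus $y(i)+y(j)\ge w(ij)$ off matching edges --- and tracking which complementary slackness condition (slackness of a primal constraint versus tightness of a dual constraint) delivers each. Conceptually there is no real obstacle beyond recognizing that the stability inequalities are exactly the complementary slackness conditions for this particular LP relaxation, after which both implications are bookkeeping.
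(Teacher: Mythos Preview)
Your proof is correct and proceeds directly via complementary slackness between Primal-$(G,b,w)$ and Dual-$(G,b,w)$, identifying $y(i)$ with $u_p(i)$. This is a genuinely different route from the paper's. The paper first invokes Theorem~\ref{t-stt}(i) to pass to the reduced SRP instance $(G',1,w')$, and then establishes the equivalence between stable solutions of $(G',1,w')$ and integral optima of Primal-$(G,b,w)$; in particular, the paper's dual variable is $y(i)=p'(i^s)$, the common payoff of the copies of $i$ in $G'$, and complementary slackness is verified edge-by-edge in $G'$. The paper itself acknowledges (in a footnote) that your direct argument is available, and explains that it chose the detour through $(G',1,w')$ deliberately because it yields Corollary~\ref{c-bijection2}, a description of the stable solutions of the reduced instance in terms of optimal dual solutions. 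Your argument is shorter and more elementary, and makes the identification $y(i)=u_p(i)$ transparent; the paper's argument is longer but delivers the extra structural corollary on the SRP side. Both approaches ultimately rest on the same LP-duality backbone, and your construction $p(i,j)=y(i)+\tfrac{1}{2}d(ij)$ for $ij\in M$ is essentially the paper's choice $p'(i_j)=y(i)+\xi(i,j)$ with the symmetric split $\xi(i,j)=\xi(j,i)=\tfrac{1}{2}d(ij)$.
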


\begin{proof}
Let $(G,b,w)$ be an instance of SFP.
By Theorem~\ref{t-stt}~(i) it suffices to show that the instance $(G',1,w')$ of SRP reduced from SFP has a stable solution if and only if
Primal-$(G,b,w)$ has an integral optimal solution. We show this
below.~\footnote{Note that we could have proven Theorem~\ref{char} directly, without  reducing to an instance of SRP. However, as a byproduct of our proof
we obtain an alternative description of the set of stable solutions of the reduced instance  (Corollary~\ref{c-bijection2}), which we think is worth noting.}

\medskip
\noindent
{\bf ($\Rightarrow$)}
Suppose that $(M'',p'')$ is a stable solution for $(G',1,w')$, where we again assume that $p''$ is a total payoff vector
(defined on $N'$).
Let $M'$ be a maximum weight matching of $G'$ that can be reduced from a maximum weight $b$-matching $M$ of $G$. By Lemma~\ref{compatible},
$(G',1,w')$ has a stable solution $(M',p')$, where $p'$ is equivalent to $p''$.

Before we define a feasible solution $(y,d)$ of Dual-$(G,b,w)$, we will first show that $p'(i^s)=p'(i^t)$ for every player $i\in N$ and every two indices
$s,t\in \{1,\ldots, b(i)\}$. For contradiction, suppose that $p'(i^s)>p'(i^r)$ for some $i\in N$ and indices  $r,s\in \{1,\ldots, b(i)\}$. Then, as
$p'(i^r)\geq 0$, we have that $p'(i^s)>0$. This means that  $i^s\overline{i_j} \in M'$ for some $j\in N\setminus \{i\}$, which implies that
$p'(i^s)+ p'(\overline{i_j}) =w'(i^s\overline{i_j})=w(ij)$. Consequently, $p'(i^r)+ p'(\overline{i_j})<p'(i^s)+ p'(\overline{i_j})=w(ij)=w'(i^r\overline{i_j})$,
thus $i^r\overline{i_j}$ would block $(M',p')$. This contradicts the fact that $(M',p')$ is a stable solution for $(G',1,w)$.
Hence we may set
\begin{equation}
\label{y}
y(i)=p'(i^s)\; \mbox{for every copy}\; i^s\;
\mbox{of
every}\; i\in N,
\end{equation}
as this is well-defined.
We also define
\begin{equation}
\label{d}
d(ij)=[w(ij)-y(i)-y(j)]_+\; \mbox{for
every}\;
ij\in E.
\end{equation}
In this way we have obtained
our feasible solution $(y,d)$ of Dual-$(G,b,w)$.

We define an integral solution $x$ of Primal-$(G,b,w)$ as follows. For each $ij\in E$, let $x(ij)=1$ if $i_jj_i\in M'$ and $x(ij)=0$ otherwise.
Recall that $i^s\overline{i_j}\in M'$ for exactly one index $s\in \{1,\ldots,b(i)\}$ if and only if $i_jj_i\in M'$. Hence, as each $i^s$ is incident to at most one edge of $M'$,
we find that $\sum_{j:ij\in E}x(ij)=\sum_{j:i_jj_i\in M'}1\leq b(i)$. As $0\leq x(ij)\leq 1$ for every $ij\in E$, we find that $x$ is a feasible solution.

It remains to prove that $x$ is an optimal solution of Primal-$(G,b,w)$.
In fact we will show
that $x$ and $(y,d)$ are both optimal,
as they satisfy the complementary slackness
conditions:
\begin{itemize}
\item For every $i\in N$,
if $\sum_{j:ij\in E}x(ij)< b(i)$ then
(as $i^s\overline{i_j}\in M'$ for exactly one index $s\in \{1,\ldots,b(i)\}$ if and only if $i_jj_i\in M'$)
some copy of $i$, say $i^r$, is unmatched in $M'$. Therefore $y(i)=p'(i^r)=0$.
This means that dual condition~(\ref{b'}) is binding. \\[-8pt]
\item
For every $ij\in E$, if $x(ij)<1$, then $x(ij)=0$ and thus $i_jj_i\notin M'$ . Then, by construction of $M'$,
we have that  $i^s\overline{i_j}\notin M'$ for
any
$s\in \{1,\ldots, b(i)\}$,
 $\overline{i_j}i_j\in M'$, $j_i\overline{j_i}\in M'$ and $\overline{j_i}j^t\notin M'$ for
any
$t\in \{1,\ldots, b(j)\}$. These facts, together with the stability conditions
for  $i^s\overline{i_j}$ and $\overline{j_i}j^t$ and the fact that $y(i)=p'(i^s)$ holds by definition, imply that
\[\begin{array}{lcl}
y(i)+p'(\overline{i_j}) = p'(i^s)+p'(\overline{i_j})
\geq w'(i^s\overline{i_j})
= w(ij) = w(\overline{i_j}i_j)
=p'(i_j)+p'(\overline{i_j}),
\end{array}\]
and thus $y(i)\geq p'(i_j)$, and similarly, $y(j)\geq p'(j_i)$, which means that
 $y(i)+y(j)\geq p'(i_j) +p'(j_i)\geq w'(i_jj_i)=w(ij)$, where
the second inequality is implied by the stability condition for $i_jj_i$. As a consequence, $d(ij)=[w(ij)-y(i)-y(j)]_+=0$
holds.
This means that dual condition~(\ref{c'}) is binding. \\[-8pt]
\item
For every $ij\in E$,
if $x(ij)>0$, then $x(ij)=1$ and thus $i_jj_i\in M'$
and therefore, by construction of $M'$,
we have that  $i^s\overline{i_j}\in M'$ for some $s\in \{1,\ldots, b(i)\}$,
 $\overline{i_j}i_j\notin M'$, $j_i\overline{j_i}\notin M'$ and $\overline{j_i}j^t\in M'$ for some $t\in \{1,\ldots, b(j)\}$. These facts, together with the stability conditions for  $\overline{i_j}i_j$ and $j_i\overline{j_i}$ and the fact that $y(i)=p'(i^s)$ holds by definition, imply that
\[\begin{array}{lcl}
y(i)+p'(\overline{i_j}) = p'(i^s)+p'(\overline{i_j})
= w'(i^s\overline{i_j})
= w(ij) = w(\overline{i_j}i_j)
\leq p'(i_j)+p'(\overline{i_j}),
\end{array}\]
and thus $y(i)\leq p'(i_j)$, and similarly, $y(j)\leq p'(j_i)$, which means that
$y(i)+y(j)\leq p'(i_j)+p'(j_i)=w'(i_jj_i)=w(ij)$;
as a consequence, $d(ij)=[w(ij)-y(i)-y(j)]_+=w(ij)-y(i)-y(j)$
holds. This means that dual condition~(\ref{a'}) is binding.
\end{itemize}
Hence $x$ is an
integral optimal
solution of Primal-$(G,b,w)$, as required.

\medskip
\noindent
{\bf ($\Leftarrow$)}
Let $x$ be an integral optimal solution for Primal-$(G,b,w)$
and let $(y,d)$ be an optimal solution of Dual-$(G,b,w)$.
From $x$ and $(y,d)$ we create a stable solution $(M',p')$ for $(G',1,w')$ as follows.

As $x$ is an integral optimal solution, $x$ is a characteristic function of some maximum weight $b$-matching $M$ by Lemma~\ref{l-primal}. Let $M'$ be
a
maximum weight matching in $G'$ reduced from $M$. For every $i\in N$, let $p'(i^s)=y(i)$ for each $s\in \{1, \ldots , b(i)\}$.
For each $ij\notin M$
we define $p'(i_j)=y(i)$ and $p'(\overline{i_j})=w(ij)-y(i)$.
For each $ij\in M$ we choose $\xi(i,j) \ge 0$ and $\xi(j,i) \ge 0$ with $\xi(i,j)+\xi(j,i) =d(ij)$ and define $p'(i_j) = y(i) + \xi(i,j)$ and $p'(\overline{i_j})=w(ij)-y(i)$.

We will first show that $(M',p')$ is a solution for $(G',1,w')$, that is, $p'(i')+p'(j')=w'(i'j')$ for every $i'j'\in M'$.
Suppose $ij\in M$. Then $i^s\overline{i_j}\in M'$ for some $s=1,\ldots,b(i)$ and $j^t\overline{j_i}\in M'$ for some $t=1,\ldots,b(j)$.
By symmetry, we only need to consider the edge $i^s\overline{i_j}$. We find that $p'(i^s)+p'(\overline{i_j})=y(i)+w(ij)-y(i)=w(ij)=w'(i^s\overline{i_j})$.
We also have $i_jj_i\in M'$ and find that
$$p'(i_j)+p'(j_i) = y(i) + \xi(i,j)+y(j) + \xi(j,i)=y(i)+y(j)+d(ij)=w(ij)=w'(i_jj_i),$$ where the one-but-last equality follows from the fact that
dual condition~(\ref{a'}) is binding for $ij\in M$, as $x(ij)=1>0$.

Now suppose $ij\notin M$. Then $\overline{i_j}i_j\in M'$ and $\overline{j_i}j_i\in M'$. By symmetry we only need to consider the edge $\overline{i_j}i_j$.
We find that $p'(\overline{i_j})+p'(i_j)=w(ij)-y(i)+y(i)=w(ij)=w'(\overline{i_j}i_j)$.
Hence, $p'(i')+p'(j')=w'(i'j')$ for every $i'j'\in M'$, which means that $(M',p')$ is a solution for $(G',1,w')$.

Now we show that $(M',p')$ is stable, that is, we show that $p'(i')+p'(j')\geq w'(i'j')$ for every $i'j'\in E'\setminus M'$.
First suppose $ij\in M$. Then $\overline{i_j}i_j\in E'\setminus M'$ and $p'(\overline{i_j})+p'(i_j)=w(ij)-y(i)+y(i)=w(ij)=w'(\overline{i_j}i_j)$, as required.
We also have $\overline{j_i}j_i\in E'\setminus M'$ and can show that the stability condition is satisfied for $\overline{j_i}j_i$ in the same way.

Now suppose $ij\in E\setminus M$. Then $i_jj_i\in E'\setminus M'$.
As $x(ij)=0<1$, dual condition~(\ref{c'}) is binding for $ij$, so $d(ij)=0$.
Consequently, we deduce that
$$p'(i_j)+p'(j_i)=y(i)+y(j)=y(i)+y(j)+d(ij)\geq w(ij)=w'(i_jj_i),$$
where the last inequality
follows from the fact that $(y,d)$ is a solution of Dual-$(G,b,w)$.
We also have $i^s\overline{i_j}\notin E'\setminus M'$ for $s=1,\ldots,b(i)$ and $j^t\overline{j_i}\notin E'\setminus M'$ for $t=1,\ldots,b(j)$.
By symmetry it suffices to consider an edge $i^s\overline{i_j}$ for some $s\in \{1,\ldots,b(i)\}$. Then we obtain
$p'(i^s)+p'(\overline{i_j})=y(i)+w(ij)-y(i)=w(ij)=w'(i^s\overline{i_j})$.
We conclude that the stability condition is satisfied for every $i'j'\in E'\setminus M'$.
This completes our proof.\qed
\end{proof}

The following corollary is obtained directly from the proof of Theorem~\ref{char}.
It gives a full description of the set of stable solutions of every instance of SFP that is reduced from some instance of SRP.
In particular, this explains why this set may contain stable solutions that cannot be reduced from stable solutions for $(G,b,w)$ (recall that we gave an explicit example of such a stable solution in the paragraph after the proof of Theorem~\ref{t-stt}).

\begin{corollary}\label{c-bijection2}
Let $(G',1,w')$ be an instance of \emph{SRP} reduced from an instance $(G,b,w)$ of \emph{SFP} that has a stable solution.
Then $(G',1,w')$ has a stable solution $(M',p')$
 if and only if  Dual-$(G,b,w)$ has an optimal solution $(y,d)$ with $y(i)=p'(i^s)$ for every $i\in N$ and $s\in \{1, \ldots , b(i)\}$
 (and $d(ij)=[w(ij)-y(i)-y(j)]_+$ for every $ij\in E$).
\end{corollary}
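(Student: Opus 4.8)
The plan is to extract Corollary~\ref{c-bijection2} directly from the machinery already built in the proof of Theorem~\ref{char}, observing that both directions of that proof actually establish the sharper equivalence between stable solutions $(M',p')$ of $(G',1,w')$ and optimal dual solutions $(y,d)$ of Dual-$(G,b,w)$, linked by $y(i) = p'(i^s)$.

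For the ``$\Rightarrow$'' direction, I would start from an arbitrary stable solution $(M',p')$ of $(G',1,w')$ (not necessarily reduced from anything). The first step is to recall the argument, already carried out in the ``$\Rightarrow$'' part of the proof of Theorem~\ref{char}, that $p'(i^s)$ is independent of the copy index $s$: if $p'(i^s) > p'(i^r)$ then $i^s\overline{i_j}\in M'$ for some $j$, forcing $p'(i^s) + p'(\overline{i_j}) = w(ij)$, and then $i^r\overline{i_j}$ would block $(M',p')$. Hence $y(i) := p'(i^s)$ is well-defined, and setting $d(ij) := [w(ij)-y(i)-y(j)]_+$ gives a feasible dual solution. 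The remaining step is to note that the complementary-slackness verification in that same proof shows $(y,d)$ together with the integral primal solution $x$ (where $x(ij)=1$ iff $i_jj_i\in M'$) are both optimal; in particular $(y,d)$ is an optimal solution of Dual-$(G,b,w)$ with exactly the claimed relation to $p'$. One caveat to address: that argument assumed the given stable solution had been replaced (via Lemma~\ref{compatible}) by one compatible with a maximum weight matching $M'$ reduced from some maximum weight $b$-matching of $G$; but here we want the statement for \emph{every} stable solution $(M',p')$, so I should check that the copy-independence and the slackness computations only used stability of $(M',p')$ and Lemma~\ref{l-obs}, and in fact never used that $M'$ is itself reduced — or, more carefully, I should reduce to the maximum-weight case by first applying Lemma~\ref{compatible} to see that any stable $(M',p')$ has $M'$ a maximum weight matching of $G'$, and then re-examine which edges are present. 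This verification — that the structural facts ``$i^s\overline{i_j}\in M'$ for exactly one $s$ iff $i_jj_i\in M'$'' and the six edge-types used in the slackness argument hold for an arbitrary maximum weight matching of $G'$ — is the main obstacle, though it is essentially contained in the analysis of Lemma~\ref{l-obs}~(i), which classifies maximum weight matchings of $G'$ as precisely those reducible from $b$-matchings of $G$.

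For the ``$\Leftarrow$'' direction, the argument is verbatim the ``$\Leftarrow$'' part of the proof of Theorem~\ref{char}: given an optimal dual solution $(y,d)$, pick any maximum weight $b$-matching $M$ of $G$ (which exists since the instance has a stable solution, hence Primal-$(G,b,w)$ has an integral optimum by Theorem~\ref{char}), let $M'$ be reduced from $M$, set $p'(i^s) = y(i)$ for all copies, $p'(i_j) = y(i)$, $p'(\overline{i_j}) = w(ij)-y(i)$ for $ij\notin M$, and split $d(ij)$ arbitrarily as $\xi(i,j)+\xi(j,i)$ with $p'(i_j)=y(i)+\xi(i,j)$ for $ij\in M$. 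The construction there yields a stable solution $(M',p')$ of $(G',1,w')$ satisfying $y(i) = p'(i^s)$, which is exactly what is required.

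So the proof proposal is simply: ``\emph{Immediate from the proof of Theorem~\ref{char}}, together with the observation that the copy-independence argument and the complementary-slackness computation there apply to any stable solution of $(G',1,w')$ (using Lemma~\ref{compatible} and the characterization of maximum weight matchings of $G'$ from Lemma~\ref{l-obs}~(i)), not only to those reduced from a solution of $(G,b,w)$.'' I would write this out in three or four sentences, pointing explicitly at the ``$p'(i^s)=p'(i^t)$'' paragraph and the complementary-slackness bullet list in that earlier proof, and remark that this is precisely what makes the set of stable solutions of $(G',1,w')$ larger than the image of the reduction map — since there can be optimal dual solutions $(y,d)$ with $y(i) < u_p(i)$ for the natural $p$ on the original instance, matching the explicit example given after the proof of Theorem~\ref{t-stt}.
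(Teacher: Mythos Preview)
Your proposal is correct and follows the paper's own approach exactly: both directions are extracted directly from the corresponding halves of the proof of Theorem~\ref{char}, pointing at equations~(\ref{y}) and~(\ref{d}) for ``$\Rightarrow$'' and at the explicit construction of $(M',p')$ from $(y,d)$ for ``$\Leftarrow$''. Your caveat about the ``$\Rightarrow$'' direction is real but admits a simpler resolution than the one you sketch: since Lemma~\ref{compatible} produces a $p'$ that is \emph{equivalent} to the original stable payoff $p''$, and in the $b\equiv 1$ setting equivalence of total payoff vectors means equality, the copy values $p'(i^s)=p''(i^s)$ are unchanged by the replacement---so equations~(\ref{y}) and~(\ref{d}) already yield an optimal $(y,d)$ with $y(i)=p''(i^s)$ for the arbitrary stable solution you started with, and no re-examination of the structure of $M'$ is needed.
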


\begin{proof}
Let $(G',1,w')$ be an instance of SRP reduced from an instance $(G,b,w)$ of SFP that has a stable solution.

\medskip
\noindent
{\bf ($\Rightarrow$)} Suppose that $(G',1,w')$ has a stable solution $(M',p')$.
Then  Dual-$(G,b,w)$ has an optimal solution $(y,d)$ with $y(i)=p'(i^s)$ for every $i\in N$ and $s\in \{1, \ldots , b(i)\}$
 and $d(ij)=[w(ij)-y(i)-y(j)]_+$ for every $ij\in E$ due to equations~(\ref{y}) and~(\ref{d}) in the proof of Theorem~\ref{char}.

\medskip
\noindent
{\bf ($\Leftarrow$)}
Suppose that Dual-$(G,b,w)$ has an optimal solution $(y,d)$ with $y(i)=p'(i^s)$ for every $i\in N$ and $s\in \{1, \ldots , b(i)\}$.
Recall that the equalities $d(ij)=[w(ij)-y(i)-y(j)]_+$ $(ij\in E)$ follow directly from the optimality of $(y,d)$.
As $(G,b,w)$ has a stable solution,  Theorem~\ref{char} tells us that  Primal-$(G,b,w)$ has an integer optimal solution~$x$.
In the proof of Theorem~\ref{char} we created a stable solution $(M',p')$ from $x$
and $(y,d)$. \qed
\end{proof}

We finish Section~\ref{s-chsfp} by summarizing our main results in the following corollary (the $1\Leftrightarrow 2$ proof follows directly from
Theorem~\ref{t-stt}~(i), whereas $2\Leftrightarrow 3$ follows directly from
Theorem~\ref{char}).

\begin{corollary}\label{c-char}
Let $(G,b,w)$ be an instance of \emph{SFP} and $(G',1,w')$ be
the
instance of \emph{SRP} reduced from $(G,b,w)$.
The following statements are equivalent:
\begin{enumerate}
\item $(G,b,w)$ has a stable solution.
\item $(G',1,w')$ has a stable solution.
\item Primal-$(G,b,w)$ has an integral optimal solution.
\end{enumerate}
\end{corollary}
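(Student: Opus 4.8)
The plan is to prove \textbf{Corollary~\ref{c-char}} by simply assembling the two equivalences that have already been established, so the ``proof'' is essentially a one-line bookkeeping exercise. First I would recall the setup: $(G,b,w)$ is an instance of SFP and $(G',1,w')$ is \emph{the} instance of SRP reduced from it via the Tutte-style gadget construction described above. I want to show that statements $1$, $2$ and $3$ are pairwise equivalent; since equivalence is transitive, it suffices to prove $1\Leftrightarrow 2$ and $2\Leftrightarrow 3$.

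For $1\Leftrightarrow 2$ I would invoke Theorem~\ref{t-stt}~(i), which states precisely that $(G,b,w)$ has a stable solution if and only if $(G',1,w')$ has a stable solution. That is exactly the equivalence of statements~$1$ and~$2$, with nothing further to check. For $2\Leftrightarrow 3$ the situation is marginally less immediate: Theorem~\ref{char} is phrased as ``$(G,b,w)$ has a stable solution if and only if Primal-$(G,b,w)$ has an integral optimal solution'', i.e.\ it directly gives $1\Leftrightarrow 3$. But its proof proceeds via the reduced SRP instance and in fact establishes that $(G',1,w')$ has a stable solution if and only if Primal-$(G,b,w)$ has an integral optimal solution --- this is the statement actually proven in the body of Theorem~\ref{char}'s proof (the reduction to Theorem~\ref{t-stt}~(i) is applied at the very start). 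So I would cite that inner assertion for $2\Leftrightarrow 3$; alternatively, one can obtain $2\Leftrightarrow 3$ by chaining Theorem~\ref{t-stt}~(i) ($2\Leftrightarrow 1$) with Theorem~\ref{char} ($1\Leftrightarrow 3$). Either route closes the cycle $1\Leftrightarrow 2\Leftrightarrow 3$.

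There is essentially no obstacle here: the corollary is a consolidation of Theorem~\ref{t-stt}~(i) and Theorem~\ref{char}, both of which are proven in full above, and the only mild subtlety is making sure to quote the version of Theorem~\ref{char} (or the combination of results) that yields the $2\Leftrightarrow 3$ link cleanly rather than the $1\Leftrightarrow 3$ link. A clean write-up is:

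\begin{proof}
The equivalence $1\Leftrightarrow 2$ is exactly Theorem~\ref{t-stt}~(i). For $2\Leftrightarrow 3$, combine $1\Leftrightarrow 2$ with Theorem~\ref{char}, which gives $1\Leftrightarrow 3$; transitivity then yields $2\Leftrightarrow 3$. Hence statements $1$, $2$ and $3$ are pairwise equivalent. \qed
\end{proof}
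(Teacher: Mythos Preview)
Your proposal is correct and matches the paper's own argument: the paper also derives $1\Leftrightarrow 2$ from Theorem~\ref{t-stt}~(i) and $2\Leftrightarrow 3$ from Theorem~\ref{char} (noting, as you do, that the body of the proof of Theorem~\ref{char} in fact establishes the $2\Leftrightarrow 3$ link directly). There is nothing to add.
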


\subsection{Solving SFP efficiently}\label{s-gen3}

We first show how to solve SFP on an instance $(G,b,w)$ by using the results from Section~\ref{s-chsfp}.
We construct the instance $(G',1,w')$ of SRP. This takes
$O(n^3)$ time as we may assume without loss of generality that $b(i)\leq n$ for all $i\in N$ and thus
$|N'|=\sum_{i\in N}b(i)+4m=O(n^2)$ and $|E'| \leq \sum_{i\in N}b(i)n+3m\leq n^3 +3m= O(n^3)$.
We then use the aforementioned algorithm of Bir\'o et al.~\cite{BKP12} to
compute in $O(n'm'+n'^2\log n')=O(n^5)$ time a stable solution for $(G',1,w')$ or else conclude that $(G',1,w')$ has no stable solution.
In the first case we can modify the stable solution into a stable solution for $(G,b,w)$ in $O(n^3)$ time, as described in the proof of Theorem~\ref{char}. In the second case, Theorem~\ref{char} tells us that $(G,b,w)$ has no stable solution. The total running time is $O(n^5)$.
Below we present an algorithm that solves SFP in $O(n^2m \log (n^2/m))$ time.

A \emph{half-$b$-matching} in a graph $G=(N,E)$ with an integer vertex capacity function $b$ and an edge weighting $w$
is an edge mapping $f$ that maps each edge $e$ to a value in
$\{0,\frac{1}{2},1\}$,
such that $\sum_{e:v\in e}f(e)\leq b(v)$ for each $v\in N$. The weight of $f$ is $w(f)=\sum_{e\in E}w(e)f(e)$.

Let $(G,b,w)$ be an instance of SFP. We define its \emph{duplicated} instance $(\hat{G}, \hat{b}, \hat{w})$ of MPA as follows. We replace each player $i$ of $G$ by two players $i'$ and $i''$ in $\hat{G}$ with the same capacities, that is, we set $\hat{b}(i')=\hat{b}(i'')=b(i)$. Moreover, we replace each edge $ij$ by two edges $i'j''$ and $i''j'$ with half-weights, that is, we set $\hat{w}(i'j'')=\hat{w}(i''j')=\frac{1}{2}w(ij)$.

In a previous paper~\cite{BKP12}, three of us proved the following statement for instances of SRP only. We now generalize it  for instances of SFP by using similar arguments.

\begin{theorem}\label{t-ch}
An instance $(G,b,w)$ of \emph{SFP} admits a stable solution if and only if the maximum weight of a $b$-matching in $G$ is equal to the maximum weight of a half-$b$-matching in $G$.
\end{theorem}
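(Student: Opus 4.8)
The plan is to reduce the statement to the characterisation already established and then identify the optimum of the relevant LP with the maximum weight of a half-$b$-matching. Concretely, I would start from Theorem~\ref{char} (equivalently Corollary~\ref{c-char}): $(G,b,w)$ admits a stable solution if and only if Primal-$(G,b,w)$ has an integral optimal solution. Next I would invoke Lemma~\ref{l-primal}: the integer linear program corresponding to Primal-$(G,b,w)$ has optimum value equal to the maximum weight, call it $\nu(G)$, of a $b$-matching. Since the feasible integral points of Primal-$(G,b,w)$ are exactly the $b$-matchings and the integer optimum is always at most the LP optimum, Primal-$(G,b,w)$ has an integral optimal solution if and only if its LP optimum equals $\nu(G)$: for one direction an integral LP-optimal point is a $b$-matching of LP-optimal weight, hence of weight $\nu(G)$; for the other direction the characteristic vector of a maximum weight $b$-matching is an integral LP-feasible point of weight $\nu(G)$, hence LP-optimal. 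So the theorem comes down to showing that the optimum of Primal-$(G,b,w)$ equals the maximum weight $\nu_{1/2}(G)$ of a half-$b$-matching of $G$.

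For the easy inequality, every half-$b$-matching $f$ satisfies $0\le f(e)\le 1$ and $\sum_{e\ni i}f(e)\le b(i)$, so it is feasible for Primal-$(G,b,w)$, giving (LP optimum) $\ge \nu_{1/2}(G)$. For the reverse inequality I would route through the duplicated instance $(\hat G,\hat b,\hat w)$, which was set up precisely for this. Given any feasible $x$ of Primal-$(G,b,w)$, put $\hat x(i'j'')=\hat x(i''j')=x(ij)$; then $\hat x$ is feasible for Primal-$(\hat G,\hat b,\hat w)$ and has the same objective value, since $\hat w(i'j'')=\hat w(i''j')=\tfrac12 w(ij)$. As $\hat G$ is bipartite its incidence matrix is totally unimodular, so the underlying polyhedron is integral and Primal-$(\hat G,\hat b,\hat w)$ attains its optimum at an integral point, i.e.\ its optimum equals the maximum weight of a $\hat b$-matching of $\hat G$. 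Finally, any $\hat b$-matching $\hat M$ of $\hat G$ projects to a half-$b$-matching $f$ of $G$ of the same weight via $f(ij)=\tfrac12\big(x^{\hat M}(i'j'')+x^{\hat M}(i''j')\big)$: the capacity bound holds because $\sum_{e\ni i}f(e)=\tfrac12\big(\deg_{\hat M}(i')+\deg_{\hat M}(i'')\big)\le b(i)$, and the weight is preserved by the half-weighting. Chaining these, (LP optimum of Primal-$(G,b,w)$) $\le$ (maximum weight $\hat b$-matching of $\hat G$) $\le \nu_{1/2}(G)$, which together with the easy inequality gives equality.

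Assembling everything: $(G,b,w)$ has a stable solution iff Primal-$(G,b,w)$ has an integral optimal solution iff its LP optimum equals $\nu(G)$ iff $\nu_{1/2}(G)=\nu(G)$, as desired. I expect the only non-routine point to be the use of the duplicated bipartite instance to pin the LP optimum to $\nu_{1/2}(G)$; within that, the delicate bookkeeping is keeping the correspondences $x\mapsto\hat x$ and $\hat M\mapsto f$ exactly value-preserving, and cleanly separating ``Primal-$(G,b,w)$ has an integral optimal solution'' from ``the LP optimum equals the integer optimum''. An alternative to the duplication step would be to quote the known half-integrality of the fractional $b$-matching polytope, which directly furnishes a half-$b$-matching attaining the LP optimum; I would prefer the duplication argument since it stays self-contained relative to what the paper has already built.
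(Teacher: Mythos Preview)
Your proof is correct and follows the same high-level strategy as the paper: both reduce the statement, via Theorem~\ref{char} and Lemma~\ref{l-primal}, to showing that the LP optimum of Primal-$(G,b,w)$ equals the maximum weight of a half-$b$-matching, and both establish this equality through the duplicated bipartite instance $(\hat G,\hat b,\hat w)$, projecting a maximum weight $\hat b$-matching of $\hat G$ back to a half-$b$-matching of $G$ of the same weight. The technical execution differs, however. The paper invokes Sotomayor's Theorem~\ref{sot92} to assert that the duplicated instance has a stable solution, applies Theorem~\ref{char} there to obtain an integral primal optimum $\hat x$, and then explicitly constructs a feasible dual $(y,d)$ for Dual-$(G,b,w)$ with the same objective value as the projected half-$b$-matching $f$; weak duality then pins the LP optimum to $w(f)$. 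You bypass both Sotomayor's theorem and the dual construction by appealing directly to total unimodularity of the bipartite incidence matrix (hence integrality of Primal-$(\hat G,\hat b,\hat w)$), and by lifting any primal-feasible $x$ to a feasible $\hat x$ of the same value. Your route is slightly more elementary and avoids the somewhat circular flavour of quoting Theorem~\ref{sot92} (whose alternative proof in Section~\ref{s-existencestable} relies on Theorem~\ref{char}); the paper's route has the virtue of making explicit that the dual optima of the two LPs coincide, which is useful for the running-time discussion that follows.
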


\begin{proof}
Suppose that $(\hat{G}, \hat{b}, \hat{w})$ is the duplicated MPA instance of $(G,b,w)$. From
Theorem~\ref{sot92} we know that $(\hat{G}, \hat{b}, \hat{w})$ admits a stable solution $(\hat{M}, \hat{p})$, where $\hat{M}$ is a maximum weight $b$-matching in $\hat{G}$.

Let $\hat{x}$ be the characteristic function of $\hat{M}$.
Let $(\hat{y}, \hat{d})$ denote a dual optimal solution, which has the same objective function value as $\hat{x}$ by Theorem \ref{char}.
We define a half-$b$-matching $f$ for $(G,b,w)$ by setting $f(ij)=\frac{\hat{x}(i'j'')+\hat{x}(i''j')}{2}$ for each $ij\in E(G)$.
We observe that $$w(f)=\sum_{ij\in E(G)}w(ij)f(ij)=\sum_{ij\in E(G)}\hat{w}(i'j'')\hat{x}(i'j'')+\hat{w}(i''j')\hat{x}(i''j')=\hat{w}(\hat{x}).$$
Therefore $f$ is a feasible solution for Primal-$(G,b,w)$ with the same value as $\hat{x}$.
We create a feasible dual solution $(y,d)$ for Dual-$(G,b,w)$ from $(\hat{y}, \hat{d})$ with the
same objective function
value, namely
$\hat{w}(\hat{x})$, as follows. Let $y(i)=\hat{y}(i')+\hat{y}(i'')$ for every player $i$ of $G$ and let $d(ij)=\hat{d}(i'j'')+\hat{d}(i''j')$ for every edge $ij$ in $E(G)$. Let us verify that $(y,d)$ is a feasible dual solution, i.e. that for
every edge $ij\in E(G)$
we have
\[\begin{array}{lcl}
y(i)+y(j)+d(ij) &= &(\hat{y}(i')+\hat{y}(i''))+(\hat{y}(j')+\hat{y}(j''))+(\hat{d}(i'j'')+\hat{d}(i''j'))\\[5pt]
&= &(\hat{y}(i')+\hat{y}(j'')+\hat{d}(i'j''))+(\hat{y}(i'')+\hat{y}(j')+\hat{d}(i''j'))\\[5pt]
&\geq &\hat{w}(i'j'')+\hat{w}(i''j')\\[5pt]
&= &w(ij).
\end{array}
\]
Therefore $f$ is a feasible solution for Primal-$(G,b,w)$ and $(y,d)$ is a feasible solution for Dual-$(G,b,w)$ with the same
value, namely
$\hat{w}(\hat{x})$, so by the Weak Duality theorem both $f$ and $(y,d)$ are optimal solutions.
Hence we get that $f$ is indeed a maximum weight half-$b$-matching with total weight equal to the primal-dual optimum for both $(G,b,w)$ and $(\hat{G}, \hat{b}, \hat{w})$. It is now readily seen that $(G,b,w)$ has a stable solution if and only if the maximum weight of a $b$-matching is equal to the maximum weight of a half-$b$-matching of $(G,b,w)$.\qed
\end{proof}

We observe that the maximum weight of a $b$-matching can be computed in time $O(n^2m\log (n^2/m))$, as described in Lemma~\ref{lrt}.
The maximum weight of a half-$b$-matching can be computed in the same
time, since the maximum weight of a half-$b$-matching in $(G,b,w)$ is the same as the maximum weight of a $b$-matching in a duplicated bipartite graph $(\hat{G}, \hat{b}, \hat{w})$, as explained in the
proof of Theorem~\ref{t-ch}.
This leads to the following result.

\begin{theorem}\label{algo}
\emph{SFP} can be solved in $O(n^2m \log (n^2/m))$ time.
\end{theorem}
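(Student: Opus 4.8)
The plan is to make Theorem~\ref{t-ch} algorithmic: $(G,b,w)$ admits a stable solution if and only if the maximum weight $\mu$ of a $b$-matching of $G$ equals the maximum weight $\nu$ of a half-$b$-matching of $G$, so it suffices to compute $\mu$ and $\nu$ and compare them. Thus the algorithm I would use has three steps: compute $\mu$; compute $\nu$; answer ``yes'' precisely when $\mu=\nu$. Correctness is exactly the statement of Theorem~\ref{t-ch}, so nothing more needs to be argued for that.

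For the first step I would apply Lemma~\ref{lrt} to $(G,b,w)$ directly; it returns a maximum weight $b$-matching, and hence $\mu$, in $O(n^2m\log(n^2/m))$ time. For the second step I would pass to the duplicated MPA instance $(\hat G,\hat b,\hat w)$ defined just before Theorem~\ref{t-ch}: as recorded in that proof, the maximum weight of a $\hat b$-matching of $(\hat G,\hat b,\hat w)$ equals $\nu$, since the map $f(ij)=\tfrac{1}{2}\bigl(\hat x(i'j'')+\hat x(i''j')\bigr)$ turns an optimal $\hat b$-matching $\hat x$ of $\hat G$ into a half-$b$-matching of $G$ of the same weight, and the weak-duality argument in that proof shows this $f$ is optimal. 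Since $\hat G$ has $2n$ vertices and $2m$ edges, a second call to Lemma~\ref{lrt}, now on $\hat G$, computes $\nu$ in $O\bigl((2n)^2(2m)\log((2n)^2/(2m))\bigr)=O(n^2m\log(n^2/m))$ time, because $\log((2n)^2/(2m))=\log 2+\log(n^2/m)=\Theta(\log(n^2/m))$ for a simple graph (where $n^2/m>2$). The final comparison is $O(1)$, so the three steps together run in $O(n^2m\log(n^2/m))$ time.

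I expect no serious obstacle, since all the substantive work has already been carried out in Theorem~\ref{t-ch} (which in turn rests on Theorem~\ref{char}, Theorem~\ref{sot92} and LP duality). The one place I would be careful is the reduction used in the second step: one must verify that running the routine of Lemma~\ref{lrt} on the bipartite instance $\hat G$ returns $\nu$ itself and not merely a bound on it, which is precisely what the optimality of $f$ in the proof of Theorem~\ref{t-ch} guarantees, and that $\hat G$ stays linear in the size of $G$ so that the time bound is preserved verbatim. Everything else is a direct appeal to Theorem~\ref{t-ch} and Lemma~\ref{lrt}.
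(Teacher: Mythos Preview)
Your proposal is correct and follows essentially the same approach as the paper: the paper likewise invokes Theorem~\ref{t-ch}, computes the maximum weight of a $b$-matching via Lemma~\ref{lrt}, computes the maximum weight of a half-$b$-matching by applying Lemma~\ref{lrt} to the duplicated bipartite instance $(\hat G,\hat b,\hat w)$, and compares the two values. Your write-up is in fact slightly more explicit than the paper's in checking that $\hat G$ has $2n$ vertices and $2m$ edges and that the logarithmic factor is unaffected.
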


\subsection{Generalizing Lemma~\ref{compatible} and Theorem~\ref{sot92} to SFP}\label{s-gen2}

We will now completely generalize Lemma~\ref{compatible} from SRP to SFP.
In this way we will also extend corresponding statements in Theorem~\ref{sot92} from MPA to SFP (see also Section~\ref{s-existencestable}). Note that there exist instances of SRP (for instance, take a triangle) and thus of SFP with no stable solution. Hence, it is not possible to give a full generalization of Theorem~\ref{sot92},
as the statement that each instance of MPA has a stable solution cannot be generalized.
For proving our result we need one additional lemma.

\begin{lemma}\label{l-lastpart}
Let $(G,b,w)$ be an instance of \emph{SFP} that has a stable solution $(M,p)$, and
let $(G',1,w')$ be the instance of \emph{SRP} reduced from $(G,b,w)$. Let $(M',p')$ be a solution for $(G',1,w')$ reduced from $(M,p)$.
Then for every maximum weight $b$-matching $\hat{M}$ of~$G$, the pair $(\hat{M},\hat{p})$, where
\begin{itemize}
\item $\hat{p}(i,j)=p'(i_j)$ and $\hat{p}(j,i)=p'(j_i)$ for $ij\in \hat{M}$
\item $\hat{p}(i,j)=\hat{p}(j,i)=0$ for $ij\in E\setminus \hat{M}$,
\end{itemize}
is a stable solution for $(G,b,w)$. Moreover, $\hat{p}$ is equivalent to $p$.
\end{lemma}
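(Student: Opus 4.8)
The plan is to leverage Theorem~\ref{t-stt} and Lemma~\ref{compatible} together. Given the stable solution $(M,p)$ for $(G,b,w)$ and the reduced solution $(M',p')$ for $(G',1,w')$, Theorem~\ref{t-stt}~(iii) (``$\Rightarrow$'' direction) tells us that $(M',p')$ is a \emph{stable} solution for $(G',1,w')$, and by Lemma~\ref{l-obs}~(i) the matching $M'$ is a maximum weight matching of $G'$. Now take an arbitrary maximum weight $b$-matching $\hat{M}$ of $G$ and let $\hat{M}'$ be a matching of $G'$ reduced from $\hat{M}$; by Lemma~\ref{l-obs}~(i) again, $\hat{M}'$ is also a maximum weight matching of $G'$. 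Applying Lemma~\ref{compatible} to the stable solution $(M',p')$ and the maximum weight matching $\hat{M}'$, we obtain a payoff vector $\hat{p}'$ on $N'$ that is equivalent to $p'$ and compatible with $\hat{M}'$. The natural candidate is then to set $\hat{p}(i,j)=\hat{p}'(i_j)$, $\hat{p}(j,i)=\hat{p}'(j_i)$ for $ij\in\hat M$, and $0$ otherwise.

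First I would check that this $(\hat{M},\hat{p})$ is a solution for $(G,b,w)$: for $ij\in\hat M$ the edge $i_jj_i$ lies in $\hat M'$, so compatibility of $\hat p'$ with $\hat M'$ gives $\hat p(i,j)+\hat p(j,i)=\hat p'(i_j)+\hat p'(j_i)=w'(i_jj_i)=w(ij)$. Next I would verify stability. The cleanest route is to invoke Theorem~\ref{t-stt}~(iii) (``$\Leftarrow$'' direction) in the form used in its proof: since $\hat p'$ is equivalent to the stable payoff $p'$, the pair $(\hat M',\hat p')$ is still a stable solution for $(G',1,w')$ (equivalence preserves the utilities $u_{p'}(i')=p'(i')$ on every vertex, hence preserves all stability inequalities), and it is precisely a solution reduced from $(\hat M,\hat p)$ in the sense of the construction — the payoffs on the inner edges $i_jj_i$ match by definition of $\hat p$, and the payoffs on the copies $i^s$ and the $\overline{i_j}$ vertices are forced by equivalence with $p'$ to be the reduced values. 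Then the ``$\Leftarrow$'' direction of Theorem~\ref{t-stt}~(iii) yields that $(\hat M,\hat p)$ is stable. (Alternatively one can check stability directly: for $ij\in E\setminus\hat M$ the edge $i_jj_i$ is outside $\hat M'$, and the argument from the proof of Theorem~\ref{t-stt} showing $\hat p'(i_j)\le u_{\hat p}(i)$ and $\hat p'(j_i)\le u_{\hat p}(j)$, combined with the stability of $(\hat M',\hat p')$ at $i_jj_i$, gives $u_{\hat p}(i)+u_{\hat p}(j)\ge \hat p'(i_j)+\hat p'(j_i)\ge w'(i_jj_i)=w(ij)$.)

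Finally I would establish that $\hat p$ is equivalent to $p$ in the four-condition sense defined in the introduction. The key fact is that equivalence of $\hat p'$ and $p'$ on $G'$ forces $y$-values to agree: $p'(i^s)=\hat p'(i^s)$ for every copy (both equal $u_{\hat p'}(i^s)=u_{p'}(i^s)$), and hence $u_{p}(i)=u_{\hat p}(i)$ for every $i\in N$, since $u_p(i)=p'(i^s)$ by the construction reducing $(M,p)$ and likewise $u_{\hat p}(i)=\hat p'(i^s)$ for the copies in the reduction of $(\hat M,\hat p)$. With the utilities pinned down, the remaining three conditions follow: on $ij\in M\cap\hat M$ one shows $p(i,j)=\hat p(i,j)$ using $ij\in M\cap\hat M \Rightarrow i_jj_i\in M'\cap\hat M'$ and equivalence of $p'$ and $\hat p'$ on $M'\cap\hat M'$; on $ij\in M\setminus\hat M$ the edge $i_jj_i\notin\hat M'$, so by the same stability-at-$\overline{i_j}i_j$ argument $\hat p'(i_j)=y(i)=u_{\hat p}(i)=u_p(i)$, and symmetrically for $j$, matching the required equalities (and the symmetric statement for $ij\in\hat M\setminus M$ is the mirror image). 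The main obstacle I anticipate is the bookkeeping in this last step: one must be careful that the ordering of partners used in reducing $(\hat M,\hat p)$ is consistent with the matching $\hat M'$ produced by Lemma~\ref{compatible}, and that equivalence of $p'$ and $\hat p'$ really does transfer through the correspondence $ij\mapsto i_jj_i$ between $M$ (resp.\ $\hat M$) and the inner edges of $M'$ (resp.\ $\hat M'$); once that dictionary is set up explicitly, each of the four equivalence conditions is a short verification.
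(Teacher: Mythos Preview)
Your overall strategy matches the paper's: reduce to the SRP instance, use Lemma~\ref{compatible} to transfer the stable payoff to the new maximum weight matching $\hat{M}'$, and pull back via Theorem~\ref{t-stt}. Two points where the paper is cleaner, and one place where your argument is circular:

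\medskip
\noindent\textbf{(1) $\hat{p}'$ is just $p'$.} Since $b'\equiv 1$ in $G'$, equivalence of payoff vectors means equality of total payoff vectors (Remark~\ref{r-bis1} and the definition of equivalence). So the $\hat{p}'$ you obtain from Lemma~\ref{compatible} satisfies $\hat{p}'(i')=p'(i')$ for every $i'\in N'$; the paper therefore simply writes $(\hat{M}',p')$ throughout, and your definition $\hat{p}(i,j)=\hat{p}'(i_j)$ is literally the $\hat{p}(i,j)=p'(i_j)$ of the lemma statement.

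\medskip
\noindent\textbf{(2) Use Theorem~\ref{t-stt}~(ii), not~(iii).} The paper invokes the $\Leftarrow$ direction of Theorem~\ref{t-stt}~(ii): it only requires that $(\hat{M}',p')$ is stable with $\hat{M}'$ reduced from $\hat{M}$ and $p'(i_j)=\hat{p}(i,j)$ for $ij\in\hat{M}$, which holds by construction. Your route via~(iii) obliges you to check that $(\hat{M}',\hat{p}')$ is a \emph{full} reduced solution of $(\hat{M},\hat{p})$---in particular that $\hat{p}'(i^s)=u_{\hat{p}}(i)$ and $\hat{p}'(\overline{i_j})=w(ij)-u_{\hat{p}}(i)$---and this is exactly where the circularity below bites. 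Your ``alternative'' direct stability check is essentially the proof of Theorem~\ref{t-stt}~(ii)~$\Leftarrow$ and is the right way to go.

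\medskip
\noindent\textbf{(3) Circularity in the equivalence step.} You argue $u_{\hat{p}}(i)=\hat{p}'(i^s)$ ``by the construction reducing $(\hat{M},\hat{p})$'', but that presupposes $(\hat{M}',\hat{p}')$ \emph{is} the reduction of $(\hat{M},\hat{p})$, which in turn requires already knowing $u_{\hat{p}}(i)$. The paper avoids this by computing $u_{\hat{p}}(i)$ directly from the values $\hat{p}(i,j)=p'(i_j)$: for $ij\in M\cap\hat{M}$ one has $p'(i_j)=p(i,j)$; for $ij\in\hat{M}\setminus M$ one has $\overline{i_j}i_j\in M'$ and $i^s\overline{i_j}\in\hat{M}'$, and the two compatibility equations force $p'(i_j)=p'(i^s)=u_p(i)$ (and symmetrically for $ij\in M\setminus\hat{M}$). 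From these one reads off $u_{\hat{p}}(i)\ge u_p(i)$ as a minimum over the $\hat{M}$-partners, and then a short saturation argument gives equality. With $u_p=u_{\hat{p}}$ in hand, the remaining three equivalence conditions are immediate.
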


\begin{proof}
Let $(G,b,w)$ be an instance of {SFP} that has a stable solution $(M,p)$.
Let $(G',1,w')$ be the instance of {SRP} reduced from $(G,b,w)$.
Moreover, let $(M',p')$ be a solution for $(G',1,w')$ reduced from $(M,p)$.
Now let $\hat{M}$ be a maximum weight $b$-matching of $G$, and let
$\hat{p}$ be the pay-off vector defined by $\hat{p}(i,j)=p'(i_j)$ and $\hat{p}(j,i)=p'(j_i)$ for $ij\in \hat{M}$  and
$\hat{p}(i,j)=\hat{p}(j,i)=0$ for $ij\in E\setminus \hat{M}$.
We must prove that $(\hat{M},\hat{p})$ is a stable solution for $(G,b,w)$.
Consider a matching $\hat{M}'$ of~$G'$ that is reduced from~$\hat{M}$.
By Lemma~\ref{l-obs}~(i), we find that $\hat{M'}$ is a maximum weight matching of $G'$.
As $(M',p')$ is a stable solution of $(G',1,w')$ by Theorem~\ref{t-stt}~(iii), we may apply Lemma~\ref{compatible} to find that $(\hat{M}',p')$ is a stable solution for $(G',1,w')$ as well
(recall that we view $p'$ as a total payoff vector defined on $N'$).
By Theorem~\ref{t-stt}~(ii) we find that $(\hat{M},\hat{p})$ is a stable solution for $(G,b,w)$.
We are left to prove that $p$ and $\hat{p}$ are equivalent, that is, we show the following four conditions:
\begin{itemize}
\item $u_p(i)=u_{\hat{p}}(i)$ for every $i\in N$,
\item $p(i,j)=\hat{p}(i,j)$ and $p(j,i)=\hat{p}(j,i)$  for every $ij\in M \cap \hat{M}$,
\item $p(i,j)=u_p(i)=u_{\hat{p}}(i)$ and $p(j,i)=u_p(j)=u_{\hat{p}}(j)$ for every $ij\in M\setminus\hat{M}$, and
 \item $\hat{p}(i,j)=u_{\hat{p}}(i)=u_p(i)$ and $\hat{p}(j,i)=u_{\hat{p}}(j)=u_p(j)$ for every $ij\in \hat{M}\setminus M$.
\end{itemize}
First suppose that $ij\in M \cap \hat{M}$. Then $p(i,j)=p'(i_j)=\hat{p}(i,j)$.
Now suppose that $ij\in \hat{M}\setminus M$.
Then $ij\in E\setminus M$. By definition, this means that $\overline{i_j}i_j\in M'$. As $(M',p')$ is a solution for $(G',1,w')$, we find that
$p'(\overline{i_j})+p'(i_j)=w'(\overline{i_j}i_j)$. As $ij\in \hat{M}$ and $\hat{M'}$ is reduced from $\hat{M}$, we also find by definition that $i^s\overline{i_j}\in \hat{M}'$ for some $s\in \{1,\ldots,b(i)\}$. 
As $(\hat{M}',p')$ is a solution for $(G',1,w')$, this means that $p'(i^s)+p'(\overline{i_j})=w(ij)$. We conclude that
$\hat{p}(i,j)=p'(i_j)=p'(i^s)=u_p(i)$, where the last equality follows from the definition of $p'(i^s)$.
The same arguments show that for $ij \in M\setminus\hat{M}$ we have $p(i,j)=p'(i_j)=p'(i^s)=u_p(i)$.
Also, for every $i\in N$ we deduce that
\[\begin{array}{lcl}
u_{\hat{p}}(i) &= &\displaystyle\min_{ik\in \hat{M}}\hat{p}(i,k)\\[10pt]
&= &\displaystyle\min_{ik\in \hat{M}}p'(i_k)\\[10pt]
&=&\displaystyle\min{\big\{}\min_{ik\in \hat{M}\cap M}p'(i_k),\min_{ik\in \hat{M}\setminus M}p'(i_k)\big\}\\[10pt]
&\geq &\displaystyle \min\big\{\min_{ik\in M}p'(i_k),\min_{ik\in \hat{M}\setminus M}p'(i_k)\big\} \\[10pt]
&= &\displaystyle \min\big\{\min_{ik\in M}p(i,k),\min_{ik\in \hat{M}\setminus M}p'(i_k)\big\} \\[10pt]
&= &\displaystyle\min \big\{  u_p(i),\min_{ik\in \hat{M}\setminus M}p'(i_k) \big \} \\[10pt]
&=&\min \big \{ u_p(i),u_p(i) \big\} \\[10pt]
&= &u_p(i),
\end{array}\]
where the one-but-last equality uses the fact that  $p'(i_k)=u_p(i)$ for every $ik\in \hat{M}\setminus M$.
If $u_p(i)=0$, then the facts that $p(i,j)=\hat{p}(i,j)$ for every $ij\in M \cap \hat{M}$ and $\hat{p}(i,j)=u_p(i)$ for every $ij\in \hat{M}\setminus M$ imply that
$u_p(i)=u_{\hat{p}}(i)=0$. If $u_p(i)>0$, then $u_{\hat{p}}(i)\geq u_p(i)>0$. Hence, $i$ is saturated by $\hat{M}$, which implies that $u_p(i)=u_{\hat{p}}(i)$ as well.
Consequently, for $ij \in M\setminus\hat{M}$ we find that $p(i,j)=u_p(i)=u_{\hat{p}}(i)$.
Hence,  for all $i\in N$, we find that $u_p(i)=\min_{j:ij\in E}p(i,j)=\min_{j:ij\in E}\hat{p}(i,j)=u_{\hat{p}}(i)$.
We conclude that $p$ and $\hat{p}$ are indeed equivalent.
This completes the proof of the lemma.
\qed
\end{proof}

We are now ready to present the generalization of Lemma~\ref{compatible} from SRP to SFP and appropriate parts of Theorem~\ref{sot92} from MPA to SFP.

\begin{theorem}\label{sfp_compatible}
Let $(G,b,w)$ be an instance of {\sc SFP}.
If $(M,p)$ is a stable solution for $(G,b,w)$ then $M$ is a maximum weight $b$-matching and $p^t$ is a core allocation of the corresponding
multiple partners matching game.
Moreover, for every other maximum weight $b$-matching $\hat{M}$ of $G$ there exists an equivalent payoff vector~$\hat{p}$ of $p$ that is compatible with $\hat{M}$.
\end{theorem}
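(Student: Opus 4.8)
The first two assertions require no new work: they are exactly the content of Lemma~\ref{l-sfp_compatible}. So the plan is simply to quote that lemma to get that $M$ is a maximum weight $b$-matching and that $p^t$ is a core allocation of the corresponding multiple partners matching game. The only remaining statement is the ``moreover'' clause, and here the idea is to route everything through the reduction to SRP developed in Section~\ref{s-chsfp}.

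First I would form the reduced SRP instance $(G',1,w')$ and fix any solution $(M',p')$ for $(G',1,w')$ that is reduced from $(M,p)$; by Lemma~\ref{l-obs}~(ii) this is a genuine solution, and since $(M,p)$ is stable and $M$ is a maximum weight $b$-matching (by the first part), Theorem~\ref{t-stt}~(iii) guarantees that $(M',p')$ is stable. Now, given an arbitrary maximum weight $b$-matching $\hat{M}$ of $G$, I would apply Lemma~\ref{l-lastpart} verbatim: with the data $(G,b,w)$, $(G',1,w')$, $(M',p')$ as just set up, it produces the payoff vector $\hat{p}$ given by $\hat{p}(i,j)=p'(i_j)$, $\hat{p}(j,i)=p'(j_i)$ for $ij\in\hat{M}$ and $\hat{p}(i,j)=\hat{p}(j,i)=0$ for $ij\in E\setminus\hat{M}$, and it asserts both that $(\hat{M},\hat{p})$ is a stable solution for $(G,b,w)$ and that $\hat{p}$ is equivalent to $p$. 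In particular $(\hat{M},\hat{p})$ being a solution means that $\hat{p}$ is a payoff vector compatible with $\hat{M}$, which is precisely what the theorem demands.

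There is essentially no obstacle left at this stage of the paper, since all the technical content has been front-loaded into Lemmas~\ref{l-obs} and~\ref{l-lastpart} and Theorem~\ref{t-stt}. The two things to check are just that the hypotheses of Lemma~\ref{l-lastpart} are met: that $(M',p')$ is a reduced solution (true by construction) and that $\hat{M}$ is a maximum weight $b$-matching (given). The genuinely delicate point --- establishing that $p$ and $\hat{p}$ are equivalent, i.e.\ tracking the utilities $u_p(i)$ through the Tutte-type gadget, showing $u_{\hat{p}}(i)=u_p(i)$ for all $i$, and verifying that the off-$\hat{M}$ entries collapse to the utilities --- is already carried out inside the proof of Lemma~\ref{l-lastpart}, so here it is only invoked. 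Hence the whole theorem reduces to two citations.
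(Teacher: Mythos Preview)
Your proposal is correct and matches the paper's own proof, which likewise reduces the theorem to two citations: Lemma~\ref{l-sfp_compatible} for the first statement and Lemma~\ref{l-lastpart} for the ``moreover'' clause. Your intermediate invocation of Lemma~\ref{l-obs}~(ii) and Theorem~\ref{t-stt}~(iii) is harmless but not strictly needed, since Lemma~\ref{l-lastpart} only requires $(M',p')$ to be a reduced solution, not a stable one (stability is established inside its proof).
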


\begin{proof}
The first statement follows immediately from Lemma~\ref{l-sfp_compatible}.
The second statement follows immediately from Lemma~\ref{l-lastpart}.\qed
\end{proof}

\section{Connecting MPA with SMP}\label{s-alternative}

In this section we show how Theorem~\ref{t-stt} readily implies the three main results of Sotomayor for MPA in~\cite{Sotomayor92},~\cite{Sotomayor99} and~\cite{Sotomayor07}, respectively. Essentially what we do in each of our three alternative proofs is reducing MPA to SMP, that is, to the one-to-one case, which then enables us to apply classical results for SMP. These alternative proofs are therefore not just
simpler than the originals, but also shed light on the connection between
MPA and SMP.

Sotomayor obtained her results for MPA in the context of economic markets. Consider a two-sided market, consisting of
two disjoint groups, namely
a group $I$ of {\it sellers} and a group  $J$ of {\it buyers} ($I$ and $J$ are also called the {\it buyer} and {\it seller side} of the market, respectively).
 The set $E$ consists of pairs $(i,j)$ with $i\in I$ and $j\in J$ that may go in business with each other. Hence,
we can define a bipartite graph $G=(I\cup J,E)$ where $I$ and $J$ are the two partition classes.
Each seller~$i\in I$ has $b(i)$ identical objects (and hence can sell at most $b(i)$ objects).
Each buyer~$j\in J$ may purchase at most one object from each seller and can buy at most $b(j)$ objects in total.
So, the vector $b$ can be seen as a vertex capacity vector.
For each $(i,j)\in E$, we denote the gain of a transaction between seller $i$ and buyer $j$ by $w(ij)$.
This yields an instance $(G,b,w)$ of MPA.

\subsection{The Existence of a Stable Solution}~\label{s-existencestable}

First we provide an alternative proof for Theorem~\ref{sot92}, which we restate below.

\medskip
\noindent
{\bf Theorem~\ref{sot92}~(\cite{Sotomayor92}).}
{\it Every instance $(G,b,w)$ of {\sc MPA} has a stable solution, which can be found in polynomial time.
If $(M,p)$ is a stable solution then $M$ is a maximum weight $b$-matching and $p^t$ is a core allocation of the corresponding
multiple partners
assignment game.
Moreover, for every other maximum weight $b$-matching $\hat{M}$ of $G$ there exists an equivalent payoff vector~$\hat{p}$ of $p$ that is compatible with $\hat{M}$.}

\begin{proof}
Combining Theorems~\ref{t-kb} and~\ref{t-stt} immediately implies the existence of a stable solution for every instance of {\sc MPA}.
The other parts of the theorem are covered by Theorem~\ref{algo} (the polynomial-time result on finding a stable solution) and Theorem~\ref{sfp_compatible} (the remaining part).
\qed
\end{proof}

\subsection{The Lattice Structure of Stable Solutions}

A {\it complete lattice} is a partially ordered set in which every subset has a supremum (obtained via join operations) and an infimum (obtained via meet operations).
Sotomayor \cite{Sotomayor99} proved the following result on the lattice structure of the payoff vectors of stable solutions of an instance of MPA,
for which we give a simple alternative proof.
Here, a stable solution $(M,p)$ of an instance $(G,b,w)$ is {\it optimal} for a set $S$ if there is no stable solution $(M',p')$ with
$\sum_{j:ij\in E}p'(i,j) > \sum_{j:ij\in E}p(i,j)$ for some $i\in S$.

\begin{theorem}[Sotomayor \cite{Sotomayor99}]\label{sot99}
The payoff vectors of stable solutions of an instance of {\sc MPA} form a complete lattice with unique optimal outcomes for
 each side of the market.
\end{theorem}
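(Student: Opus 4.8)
The plan is to deduce Theorem~\ref{sot99} from the classical fact that the stable solutions of an instance of \textsc{SMP} (equivalently, the core of an assignment game) form a complete lattice --- see, e.g., Shapley and Shubik~\cite{SS72} --- by pushing that structure through the reduction of Theorem~\ref{t-stt}, exactly as in the alternative proof of Theorem~\ref{sot92} above. Fix an instance $(G,b,w)$ of \textsc{MPA} with $G=(I\cup J,E)$ bipartite and let $(G',1,w')$ be the reduced instance of \textsc{SMP}, which is again bipartite. First I would pin down the partial order: for stable solutions $(M,p),(\hat M,\hat p)$ of $(G,b,w)$ write $p\succeq_I\hat p$ if $u_p(i)\ge u_{\hat p}(i)$ for every $i\in I$. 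By Lemma~\ref{l-lastpart} and the equivalence relation this depends only on the utility vector $u_p$, so $\succeq_I$ is a genuine partial order on the equivalence classes of stable solutions, and ``optimal for a side'' in the statement corresponds to the top element (for $I$) and the bottom element (for $J$) of this poset; completeness will make these unique automatically. The identification I would use is that $u_p$ is precisely the $y$-part of an optimal solution of Dual-$(G,b,w)$, and conversely, by Corollary~\ref{c-bijection2} together with Theorem~\ref{t-stt}(ii) and Lemma~\ref{l-lastpart}, every optimal $y$ arises this way; hence the equivalence classes of stable solutions of $(G,b,w)$ are in order-preserving bijection with the optimal $y$-vectors of Dual-$(G,b,w)$, and it suffices to prove that the latter form a complete lattice under $\le_I$ (coordinatewise $\le$ on $I$).

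Next I would transport the \textsc{SMP} lattice. Choose the bipartition of $G'$ so that the copies $i^s$ ($i\in I$) and the splitter vertices $i_j$ lie on one side and the $j^t$ ($j\in J$) and $j_i$ on the other --- this is forced by parity along each path $i^s,\overline{i_j},i_j,j_i,\overline{j_i},j^t$. The classical lattice theorem for \textsc{SMP} says that the payoff vectors of stable solutions of $(G',1,w')$ form a complete lattice under the ``better for the $I$-side'' order, with the join and meet of two stable solutions obtained by taking, on the $I$-side, the coordinatewise maximum resp.\ minimum of the two total payoff vectors (and dually on the other side), and with a unique $I$-optimal and a unique $J$-optimal element. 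Restricting to the $i^s$-coordinates turns this order into $\succeq_I$ on those stable solutions of $(G',1,w')$ that are dual-induced (equivalently, reduced from stable solutions of $(G,b,w)$). The remaining task is to show that the dual-induced solutions form a \emph{sublattice}: given two of them with associated optimal duals $y,y'$, their \textsc{SMP}-join assigns all copies $i^1,\dots,i^{b(i)}$ the common value $\max\{y(i),y'(i)\}$ (a coordinatewise maximum of two vectors each constant on that block), so the vector $y_\vee$ with $y_\vee(i)=\max\{y(i),y'(i)\}$ for $i\in I$ and $y_\vee(j)=\min\{y(j),y'(j)\}$ for $j\in J$, together with $d_\vee(ij)=[w(ij)-y_\vee(i)-y_\vee(j)]_+$, is a feasible dual solution whose objective does not exceed the optimum and hence is optimal; by Corollary~\ref{c-bijection2} it is dual-induced, and by Theorem~\ref{t-stt}(ii) and Lemma~\ref{l-lastpart} it comes from a stable solution of $(G,b,w)$ with utility vector $y_\vee$, which is the $\succeq_I$-supremum of the two starting solutions. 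The meet is handled symmetrically; since the poset is an order-preserving image of a bounded polyhedral face, suprema and infima of arbitrary subsets exist, giving a complete lattice, and the $I$- resp.\ $J$-optimal stable solutions of $(G',1,w')$ yield the side-optimal stable solutions of $(G,b,w)$.

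I expect two points to need the most care. The first is the sublattice verification: showing that the ``mixed max/min'' $y_\vee$ (and $y_\wedge$) of two optimal solutions of the matching dual Dual-$(G,b,w)$ stays optimal --- feasibility is an edge-by-edge inequality and the non-increase of the objective $\sum_i b(i)y(i)+\sum_{ij}d(ij)$ is the standard submodularity argument, but one must be careful about how this interacts with the block structure of $G'$, i.e.\ that all copies $i^1,\dots,i^{b(i)}$ of a player receive equal value throughout. The second is reconciling the order $\succeq_I$ (phrased via the utilities $u_p$, which is what the reduction naturally controls) with the notion of optimality in Theorem~\ref{sot99}, which is phrased via the total payoffs $p^t$: here one realises all stable solutions on a common maximum weight $b$-matching $M$ (legitimate by Lemma~\ref{l-lastpart} and the invariance of $p^t$ under equivalence) and observes that, in the reduced instance, the vertices $i_j$ with $ij\in M$ sit on the same side as the copies $i^s$, so in the $I$-optimal \textsc{SMP} solution every $p'(i_j)=p(i,j)$ is simultaneously maximal; hence $p^t(i)=\sum_{j}p(i,j)$ is maximal for every $i\in I$, so the $\succeq_I$-top class is also $p^t$-optimal for the $I$-side (and symmetrically for $J$). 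Once these are settled, Theorem~\ref{sot99} follows.
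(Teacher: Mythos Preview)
Your overall strategy --- reduce to \textsc{SMP} via Theorem~\ref{t-stt} and invoke the Shapley--Shubik lattice --- is exactly the paper's. The gap is in the object you put the lattice structure on. You define $\succeq_I$ via the utility vector $u_p$ and then assert that ``the equivalence classes of stable solutions of $(G,b,w)$ are in order-preserving bijection with the optimal $y$-vectors of Dual-$(G,b,w)$''. This bijection fails: distinct, non-equivalent stable payoffs can share the same utility vector. For a tiny example take $I=\{i\}$, $J=\{j_1,j_2\}$, $b(i)=2$, $b(j_1)=b(j_2)=1$, $w\equiv 2$; here every split $p(i,j_1),p(i,j_2)\in[0,2]$ is stable, and $p=(1,1)$ and $\hat p=(1,1.5)$ are non-equivalent but have $u_p(i)=u_{\hat p}(i)=1$. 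Consequently $\succeq_I$ is only a preorder on equivalence classes (antisymmetry fails), and the lattice you would obtain on optimal $y$-vectors is a proper quotient of the lattice of stable payoff vectors the theorem speaks about. The ``sublattice'' step you isolate is also moot: by the proof of Theorem~\ref{char}, \emph{every} stable $p'$ for $(G',1,w')$ already has $p'(i^s)$ constant over $s$, so there is nothing to restrict to at the level of the $i^s$-coordinates.

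The paper fixes this by restricting the Shapley--Shubik lattice not to the $i^s$-coordinates but to the set $S'=\{i_j,j_i:ij\in E\}$, and defining the join and meet directly as coordinatewise $\min/\max$ on the individual payoffs $p(i,j)$ (for $i\in I$) and $p(j,i)$ (for $j\in J$); Theorem~\ref{t-stt}(ii) then identifies this restricted lattice with the set of stable payoff vectors of $(G,b,w)$. Your final paragraph almost lands here --- you correctly observe that the $i_j$ with $i\in I$ sit on the $I$-side of $G'$, so the $I$-optimal \textsc{SMP} outcome maximizes every $p(i,j)$ simultaneously --- but the lattice statement concerns the whole poset, not just its extremes. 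Rephrasing your argument with the order $p\preceq \hat p\iff p(i,j)\le \hat p(i,j)$ for all $i\in I$ and $ij\in M$ (equivalently, restricting to $S'$ rather than to the $i^s$'s) closes the gap and recovers the paper's proof verbatim.
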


\begin{proof}
Shapley and Shubik~\cite{SS72} proved that the payoff vectors of stable solutions of an instance of SMP form a complete lattice with unique optimal outcomes for each side of the market.
They defined the meet and joint operations for two payoff vectors $p$ and $q$ by setting $[p\vee q](i)=\min\{p(i),q(i)\}$ and $[p\wedge q](i)=\max\{p(i),q(i)\}$ for every $i\in I$, and $[p\vee q](j)=\max\{p(j),q(j)\}$ and $[p\wedge q](i)=\min\{p(j),q(j)\}$ for every $j\in J$.

We combine the result of Shapley and Shubik with Theorem~\ref{t-stt} after making the following additional remarks.
By Theorem~\ref{t-stt}~(ii)
there is a one-to-one correspondence between the stable solutions
for an instance $(G,b,w)$ of SFP and the payoffs of some of the players in its reduced instance $(G',1,w')$ of SRP. Namely, a pair $(M,p)$ is a stable solution for $(G,b,w)$ if and only if there exists a stable solution $(M',p')$ for $(G',1,w')$ with $p'(i_j)=p(i,j)$ and $p'(j_i)=p(j,i)$ for every $ij\in M$.

We define meet and joint operations for the set of stable payoff vectors for $(G,b,w)$. Let $p$ and $q$ be two payoff vectors of two stable solutions for $(G,b,w)$ and let $p'$ and $q'$ be the payoff vectors of the reduced stable solutions for $(G',1,w')$, respectively. For each edge $ij\in M$, where $i\in I$ and $j\in J$, let
\begin{itemize}
\item $[p\vee q](i,j)=\min\{p(i,j),q(i,j)\}=\min\{p'(i_j),q'(i_j)\}=[p'\vee q'](i_j)$,
\item $[p\wedge q](i,j)=\max\{p(i,j),q(i,j)\}=\max\{p'(i_j),q'(i_j)\}=[p'\wedge q'](i_j)$,
\item $[p\vee q](j,i)=\max\{p(j,i),q(j,i)\}=\max\{p'(j_i),q'(j_i)\}=[p'\vee q'](j_i)$,
\item $[p\wedge q](j,i)=\min\{p(j,i),q(j,i)\}=\min\{p'(j_i),q'(j_i)\}=[p'\wedge q'](j_i)$.
\end{itemize}
We need to show that $\vee$ and $\wedge$ are well defined operators on the set of stable payoff vectors for $(G,b,w)$. Let us consider operation $\vee$ (operation $\wedge$ can be treated in the same way). By
Theorem~\ref{t-stt}~(ii)
we can transform $p$ and $q$ to $p'$ and $q'$, respectively, where $p'$ and $q'$ are stable payoffs for $(G',1,w')$. Then we create $p'\vee q'$ as Shapley and Shubik defined, which is a stable payoff vector for $(G',1,w')$. Finally we construct $p\vee q$ as defined above for $(G,b,w)$. This is a payoff vector of a stable solution for $(G,b,w)$ by
Theorem~\ref{t-stt}~(ii).

Let $S'\subset N'$ be the set of players of form $i_j$ and $j_i$ in $G'$. Shapley and Shubik~\cite{SS72} proved that the payoff vectors of the stable solutions for $(G',1,w')$ form a lattice, therefore the restrictions of the stable payoff vectors on $S'$ also form a lattice. This lattice is equivalent to the lattice  of the payoff vectors of the stable solutions for $(G,b,w)$,
as we explained above. This also implies the existence of unique optimal outcomes for each side of the market for $(G,b,w)$.
\qed
\end{proof}

\subsection{Competitive Equilibrium Outcomes}

In economic theory, competitive equilibrium outcomes in MPA  form a well studied solution concept. For instance they are used in multi-unit auction mechanisms (see, for example,~\cite{M00}).
Course allocation, where students bid for courses with virtual money, is  another interesting application;
such a system has been recently introduced at Wharton University~\cite{BK14}.
In order to explain competitive equilibrium outcomes, we first need to extend our terminology.

Let $(G,b,w)$ be an instance of MPA.
Recall that $G=(I\cup J,E)$ is a bipartite graph with partition classes $I$ (the set of sellers) and $J$ (the set of buyers).
Each player $i$ is endowed with $b(i)$ identical, player-specific goods.
Let $q(i^s)$ denote the price of the $s$th item of seller $i\in I$. This yields a
{\it price vector}~$q$.
Recall that every buyer and seller can only make one transaction of a single item between them
and that they may only do so if they are connected by an edge.
For $ij\in E$, the weight $w(ij)$ expresses the mutual benefit when seller~$i$ and buyer~$j$ make such a transaction: if $j$ buys the $s$th item of $i$ then the {\it payoff} for $i$ is
 $p(i,j,s)=q(i^s)$ and the {\it payoff} for $j$  is $p(j,i,s)=w(ij)-q(i^s)$.
This leads to a vector~$p$, which is called the {\it payoff vector} with respect to $q$.

For a buyer $j\in J$, a {\it bundle} $B(j)$ is a set of items that can be bought by $j$ and that satisfies the following conditions:
\begin{itemize}
\item [(i)] $B(j)$ contains at most one item of every seller;
\item [(ii)] $B(j)$ contains at most $b(j)$ items.
\end{itemize}
For a bundle $B(j)$ and a price vector $q$, the {\it total payoff} for $j$ is the sum of the payoffs $p(j,i,s)$ over all items in $B(j)$.
The {\it demand set} $D_q(j)$ of buyer $j$ is
the set of all bundles of $B(j)$
that maximize the total payoff for $j$ over all bundles of $j$.

Let $B=(B(1),\ldots,B(|J|))$ be a vector, where $B(j)$ is a bundle for buyer $j$ for $j=1,\ldots,|J|$, and let $q$ be a price vector. Then $(B,q)$ is called a {\it competitive equilibrium outcome} for $(G,b,w)$ if the following three conditions hold:
\begin{itemize}
\item [(i)] for every buyer $j\in J$,
$B(j)\in D_q(j)$ holds;
\item [(ii)] every unsold copy of an item has zero price, that is, $q(i^s)=0$ if the $s$th item of buyer $i$ is not in some $B(j)$;
\item [(iii)] the sets $B(1),\ldots, B(|J|)$ are pairwise disjoint.
\end{itemize}
Note that this definition implies that for every competitive equilibrium outcome~$(B,q)$, we must have that $q(i^s)=q(i^t)$ for every $i,s,t$, that is, prices of (identical)
items
of each seller must be the same:
if two items of the same seller would have different prices, that is, if for some $i,j,s,t$, $q(i^s)>q(i^t)$,
then in particular, $q(i^s) >0$, so $i^s$ must be sold
due to condition (ii).
However, if $j$ buys the $s$th item of $i$, that is, $i^s \in B(j)$, then $B(j)\notin D_q(j)$
violating condition~(i).
Hence $(B,q)$ would not be a competitive equilibrium
outcome for $(G,b,w)$.

A competitive equilibrium
outcome $(B,q)$ for an instance $(G,b,w)$ of MPA
is readily seen to correspond to a stable solution $(M,p)$, where
 $p$ is the payoff vector with respect to $q$, where $p(i,j)=p(i,k)$ for every $i\in I$ and $j,k\in J$, and $p(i,j)=0$ if $i$ is unsaturated in $M$.
Note that we could have defined the competitive equilibrium outcomes simply as stable solutions where for every player on one side of the market her payoffs are identical in every pair she is involved in and equal to zero if she is unsaturated. However, we decided to deduce this property from the general definition of competitive equilibrium outcomes, just as Sotomayor did in \cite{Sotomayor07}, to explain why this well-known notion can be characterized by this simple property for MPA.

The main result of Sotomayor \cite{Sotomayor07} is the following
theorem, for which we give a simple alternative proof.

\begin{theorem}[Sotomayor \cite{Sotomayor07}]\label{sot07}
The set of competitive equilibrium outcomes of
an instance of {\sc MPA}
forms a complete lattice with unique optimal outcomes for each side of the market. The
competitive equilibrium outcomes are stable solutions.
Moreover, the buyer-optimal stable solution is equivalent to the buyer-optimal competitive equilibrium outcome.
\end{theorem}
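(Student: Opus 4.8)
The strategy is to combine the dictionary stated just before the theorem with the lattice result of Theorem~\ref{sot99}. Recall that a competitive equilibrium outcome $(B,q)$ for an MPA instance $(G,b,w)$ corresponds to a stable solution $(M,p)$ in which $p(i,j)=u_p(i)$ for every seller $i\in I$ and every edge $ij\in M$: concretely, the common price of seller $i$'s items equals $q(i)=u_p(i)$, which is forced to be $0$ when $i$ is unsaturated, and the buyer side just receives the residuals $p(j,i)=w(ij)-u_p(i)$. Call a stable solution with this property \emph{seller-flat}. Given this correspondence, the statement that every competitive equilibrium outcome is a stable solution is immediate, and it remains to (a) exhibit at least one seller-flat stable solution and identify it with the buyer-optimal stable solution, and (b) show that the seller-flat stable payoff vectors form a complete sublattice of the lattice of Theorem~\ref{sot99}.

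For (a): let $(M,p)$ be the buyer-optimal stable solution of $(G,b,w)$, which exists and is unique up to equivalence by Theorem~\ref{sot99}. I claim $(M,p)$ is seller-flat. If not, there are $i\in I$ and a partner $j$ of $i$ under $M$ with $p(i,j)>u_p(i)$. Form $\tilde p$ from $p$ by lowering $p(i,j)$ to $u_p(i)$ and raising $p(j,i)$ to $w(ij)-u_p(i)$, keeping all other entries. Then $\tilde p$ is still a nonnegative payoff vector compatible with $M$; moreover $u_{\tilde p}(i)=u_p(i)$ (we only lowered an entry of $i$ to its existing minimum) and $u_{\tilde p}(j)\ge u_p(j)$ (we raised an entry of $j$), while all other utilities are unchanged, so $u_{\tilde p}\ge u_p$ coordinatewise. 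Hence the stability inequality $u_p(k)+u_p(\ell)\ge w(k\ell)$ for each $k\ell\in E\setminus M$ is inherited by $\tilde p$, so $(M,\tilde p)$ is a stable solution in which buyer $j$ earns strictly more than under $(M,p)$, contradicting buyer-optimality. Thus $(M,p)$ is seller-flat, hence a competitive equilibrium outcome; since competitive equilibrium outcomes are stable solutions, $(M,p)$ weakly dominates each of them on the buyer side, so it is the (unique up to equivalence) buyer-optimal competitive equilibrium outcome. This proves the ``Moreover'' part and gives nonemptiness for (b).

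For (b): first, by the equivalence part of Theorem~\ref{sfp_compatible} and the definition of equivalent payoff vectors, if $p$ is a seller-flat stable payoff vector then so is any equivalent $\hat p$ (equivalent payoffs share all $u$-values, and on the matched edges of the new matching they equal these $u$-values); in particular we may assume that any two seller-flat stable payoff vectors $p,q$ to be combined are compatible with a common maximum weight $b$-matching $M$, and we pass to their reductions $p',q'$ in the reduced instance $(G',1,w')$, where $p'(i_j)=p(i,j)$ for $ij\in M$ and $p'(i^s)=u_p(i)$ for every copy $i^s$ of $i$ by the definition of the solution reduced from $(M,p)$. Inspecting the construction of $G'$, the copies $i^s$ and all the vertices $i_j$ for a fixed seller $i\in I$ lie on the same side of the bipartition of $G'$, so the Shapley--Shubik meet and join used in the proof of Theorem~\ref{sot99} apply the \emph{same} operation ($\min$, resp.\ $\max$) to all of them. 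Consequently, for seller-flat $p,q$ we obtain $[p\vee q](i,j)=\min\{p(i,j),q(i,j)\}=\min\{u_p(i),u_q(i)\}$ independently of $j$, so $[p\vee q](i,j)=u_{p\vee q}(i)$; the same holds for $\wedge$ and, verbatim, for arbitrary infima and suprema. Hence the set of seller-flat stable payoff vectors is closed under the meet and join of the complete lattice of Theorem~\ref{sot99}, and therefore is itself a complete lattice; it is nonempty by (a), its unique top and bottom are the two side-optimal competitive equilibrium outcomes, and its buyer-optimal extreme is the solution from (a). Together with the dictionary this establishes all three parts.

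The main obstacle is the bookkeeping underlying (b): pinning down the correct reformulation of ``seller-flat'' in the reduced instance $(G',1,w')$ and verifying that the seller copies $i^s$ and the vertices $i_j$ end up on the same side of its bipartition, so that the lattice operations of Theorem~\ref{sot99} act on them by one and the same min-or-max and thus preserve seller-flatness. The flattening argument in (a) is short, but it is the conceptual heart: it is precisely buyer-optimality that forces every seller to extract only her minimum from each partner, which is exactly the competitive-equilibrium condition.
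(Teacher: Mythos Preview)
Your proof is correct and follows essentially the same strategy as the paper: the key flattening move (lower a seller's payoff on a matched edge down to her utility $u_p(i)$ and hand the surplus to the buyer, then check that stability is preserved) together with the observation that seller-flat stable payoffs are closed under the meet and join of Theorem~\ref{sot99}. The only cosmetic difference is that the paper performs the flattening inside the reduced SMP instance $(G',1,w')$ and does it for all edges of $M$ at once before specialising to the buyer-optimal solution, whereas you do a single-edge flattening directly in $(G,b,w)$ to obtain the contradiction with buyer-optimality; your remarks about the bipartition of $G'$ in part~(b) are correct but not actually needed, since the sublattice closure already follows from the coordinatewise formulas $[p\vee q](i,j)=\min\{p(i,j),q(i,j)\}$ used in the proof of Theorem~\ref{sot99}.
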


\begin{proof}
Again we
reduce an instance $(G,b,w)$ of MPA to an instance $(G',1,w')$ of SMP, as described in Theorem~\ref{t-stt}.
For any stable solution $(M,p)$ of $(G,b,w)$ let $(M',p')$ be the reduced stable solution
for $(G',1,w')$. Now we can create another stable solution $(M',\hat{p}')$ for $(G',1,w')$, where for every $ij\in M$ we set $\hat{p}'(i_j)=u_p(i)\leq p'(i_j)=p(i,j)$
and $\hat{p}'(j_i)=w_{ij}-u_p(i)\geq p(j,i)$ and
keep the other payoffs the same.
Since we only modified the redistribution between $i_j$ and $j_i$ (that is, $j_i$ is now getting all the potential surplus in this pair), the only potential blocking pair
for $\hat{p}'$ is
the pair $\{i_j,\overline{i_j}\}$.
However, this pair cannot be blocking since $\hat{p}'(\overline{i_j})=p'(\overline{i_j})=w_{ij}-u_p(i)$, so
$\hat{p}'(\overline{i_j})+\hat{p}'(i_j)=w(ij)$.

Let $(M,\hat{p})$ be the solution for $(G,b,w)$ from which $(M',\hat{p}')$ is reduced.
Here, every seller
$i$ receives identical payoffs from each of her buyers, that is, for
every
$ij\in M$ we have $\hat{p}(i,j)=u_p(i)$. Therefore this solution is a competitive equilibrium
outcome $(M,q)$, where the price of
the items of each seller~$i$ is $q(i)=u_p(i)$
(and the corresponding bundles $B(j)$ are given by the items matched to $j$ in $M$).
Moreover, the price vectors of the set of competitive equilibrium outcomes form a lattice, which is a sublattice of the lattice of the payoff vectors of the stable
solutions. This is because the meet and joint operations (as defined in the proof of Theorem~\ref{sot99}) result in price vectors of competitive equilibrium outcomes
when applied to price vectors of competitive equilibrium outcomes
(indeed, if all copies of an item have identical prices with respect to two given stable payoffs $p$ and $\hat{p}$, then this also holds for their meet and join).

Finally, we show that the buyer-optimal stable solution is equivalent to the buyer-optimal competitive equilibrium outcome.
Let $(M,p)$ be a buyer-optimal solution (where $p$ is uniquely determined). The above construction yields, via $p'$ and $\hat{p}$, a stable solution $(M,\hat{p})$,
which is a competitive equilibrium. By construction, we have $\hat{p}'(i,j) = u_p(i) \le p(i,j)$. However, since $p$ was buyer optimal, a strict inequality is impossible.
Hence $\hat{p}=p$ must hold. \qed
\end{proof}

\section{Core Properties}\label{s-core}

To get started we recall that the core of a multiple partners matching game may be empty, as we could take
for instance a triangle on three vertices and set  $b\equiv 1$ and $w\equiv 1$.
Below we show that this may also be the case for $b\neq 1$
by presenting the following example, which shows that the core of a multiple partners matching game may
be empty even if
$b\equiv \alpha$ for an arbitrary constant $\alpha\geq 2$.

\medskip
\noindent
{\it Example 4.}
Let $\alpha\geq 2$. We start by taking a cycle on three vertices $s_1$, $s_2$ and $s_3$.
For $i=1,\ldots,3$, we add $\alpha-1$ pendant (degree~1) vertices $t_i^1,\ldots,t_i^\alpha-1$ to $s_i$; see also Figure~\ref{f-four}.
We set $b\equiv \alpha$ and $w\equiv 1$.
Then $v(N)=3(\alpha-1)+1=3\alpha-2$. We may assume without loss of generality that $p(t_i^j)=0$ for $i=1,2,3$ and $j=1,\ldots,\alpha-1$.
Then, by symmetry, $p(s_i)=\alpha-\frac{2}{3}$ for $i=1,2,3$. Take the coalition $S=\{s_1,s_2,t_1^1,\dots,t_1^{\alpha-1},t_2^1,\ldots,t_2^{\alpha-1}\}$. It holds that $p(S)=2\alpha-\frac{4}{3}<2\alpha-1 = v(S)$ and thus  the core of this game is empty.

\begin{figure}
\begin{center}
\scalebox{0.55}{\includegraphics{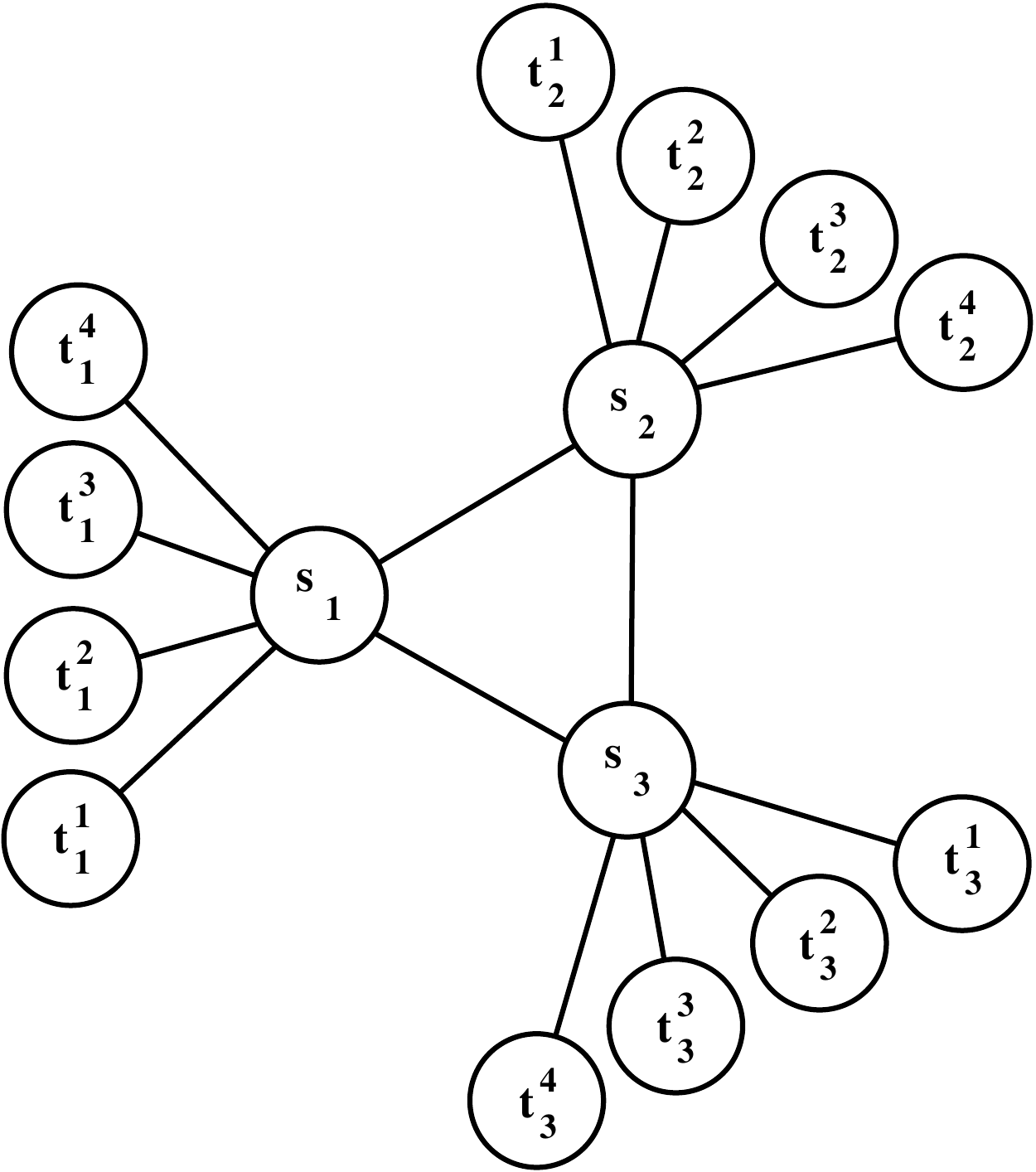}}
\caption{The graph from Example~4 for $\alpha=5$.}\label{f-four}
\end{center}
\end{figure}

To increase our insight into the core of a multiple partners matching game, we now show that every core allocation $x$ of a multiple partners matchings game equals the total payoff vector~$p^t$ of a suitable payoff vector~$p$, which we can find in polynomial time. More precisely, we may even fix an arbitrary maximum weight $b$-matching~$M^*$ to obtain suitable payoffs $p(i,j)$ from the edges of $M^*$.
For proving this, we first present a small lemma which basically states that  a corresponding system of equations (without any non-negativity constraints) is solvable. For a matching $M$ of a graph $G=(N,E)$, we let $N(M)$ denote the subset of vertices of $G$ incident with at least one edge of $M$.

\begin{lemma}\label{lem:solvep}
Let $(N,v)$ be a multiple partners matching game on some triple $(G,b,w)$.
Let $M^*$ be a maximum weight $b$-matching of $G$, and let $x \in \R^N$ be a vector satisfying
 $x(N(M^*_k))=w(M^*_k)$) for every connected component  $M^*_k$ of $M^*$. Then it is possible to find in polynomial time a vector $p$ with (not necessarily positive) entries $p(i,j)$ such that
 \[\begin{array}{llll}
 p(i,j)+p(j,i) = w(ij) &&\text{for~all~} {ij}\in M^*\\
 p(i,j)=p(j,i) = 0 &&\text{for~all~} {ij} \in E\backslash M^*\\
 p^t(i):=\sum_{j:ij\in E}p(i,j)=x(i)  &&\text{for~all~} i \in N.
 \end{array}
 \]
 \end{lemma}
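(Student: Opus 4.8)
The plan is to recast the three conditions as a linear system that decouples over the connected components of $M^*$ and then to solve it constructively on each component via a spanning tree. Since $p(i,j)=p(j,i)=0$ is forced on every edge outside $M^*$, the equations $p^t(i)=x(i)$ reduce to $\sum_{j:\,ij\in M^*}p(i,j)=x(i)$, so the whole system splits over the components of the graph $(N,M^*)$. A vertex of $G$ not covered by $M^*$ has $p^t(i)=0$ and only needs $x(i)=0$, which is part of the hypothesis (reading such a vertex as a trivial component of weight $0$), so it suffices to fix a nontrivial component with vertex set $V_k:=N(M^*_k)$ and edge set $M^*_k$; the hypothesis then gives $\sum_{i\in V_k}x(i)=w(M^*_k)$.

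On this component I would first compute a spanning tree $T_k$ of $(V_k,M^*_k)$ and, for each edge $ij\in M^*_k\setminus T_k$, set $p(i,j)=p(j,i)=w(ij)/2$ (any split summing to $w(ij)$ works). Put $x'(i):=x(i)-\sum_{j:\,ij\in M^*_k\setminus T_k}p(i,j)$; summing over $V_k$ gives $\sum_{i\in V_k}x'(i)=w(M^*_k)-w(M^*_k\setminus T_k)=w(T_k)$. Then iterate: pick a leaf $\ell$ of the current tree with tree-neighbour $m$, set $p(\ell,m):=x'(\ell)$ and $p(m,\ell):=w(\ell m)-x'(\ell)$, decrease $x'(m)$ by $p(m,\ell)$, and delete $\ell$ and the edge $\ell m$; stop when a single vertex remains.

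Correctness would follow from the invariant that the current graph is a tree, every already-deleted vertex satisfies its equation $\sum_{j:\,ij\in M^*_k}p(i,j)=x(i)$, and the sum of the current residuals $x'(\cdot)$ equals the total $w$-weight of the current tree edges. Deleting $\ell$ and $\ell m$ removes $x'(\ell)+(w(\ell m)-x'(\ell))=w(\ell m)$ from the residual sum and $w(\ell m)$ from the tree-weight sum, so the last part of the invariant is preserved; and $\ell$'s equation holds because, by the time $\ell$ is deleted, $x'(\ell)$ equals $x(\ell)$ minus the payoffs already fixed on all other $M^*_k$-edges at $\ell$, and that residual is precisely what we assign to $p(\ell,m)$. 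When only the root $r$ is left there are no tree edges, so $x'(r)$ has been driven to $0$; unrolling the subtractions, this says exactly $\sum_{j:\,rj\in M^*_k}p(r,j)=x(r)$. Hence every vertex of $V_k$ satisfies its equation, while $p(i,j)+p(j,i)=w(ij)$ on $M^*_k$ and $p=0$ off $M^*$ hold by construction; carrying this out on every component gives the desired $p$. The work is one spanning-forest computation plus a linear sweep, so this runs in polynomial (indeed near-linear) time.

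The whole construction hinges on the global balance condition, which is precisely the hypothesis $x(N(M^*_k))=w(M^*_k)$: without it the root would be left with a nonzero residual and no edge to absorb it, and indeed summing the equations over $V_k$ shows this balance is necessary. So I expect no genuine obstacle, only bookkeeping; the one point worth isolating is the elementary fact that the stated hypothesis is exactly the consistency condition of the system. (Alternatively one can avoid the explicit algorithm: after substituting $p(j,i)=w(ij)-p(i,j)$ the remaining equations form a linear system over $\R$ whose matrix is, up to signs, the vertex--edge incidence matrix of $(N,M^*)$; it is consistent iff the per-component balance holds, and a consistent rational linear system is solved in polynomial time by Gaussian elimination.)
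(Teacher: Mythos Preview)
Your proof is correct and follows essentially the same approach as the paper: split non-tree edges of $M^*$ evenly (the paper phrases this as iteratively removing one edge per cycle, you phrase it as fixing a spanning tree and handling all non-tree edges at once), then leaf-strip the resulting forest with exactly the same assignments $p(\ell,m)=x'(\ell)$, $p(m,\ell)=w(\ell m)-x'(\ell)$. Your explicit invariant and the closing linear-algebra remark are nice additions, but the underlying algorithm is the same.
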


\begin{proof}
Suppose $M^*$ contains a cycle~$C$. We do as follows. We pick an edge $ij\in C$ and define $p(i,j)=p(j,i)=\frac{1}{2}w(ij)$. Then we decrease $x(i)$ and $x(j)$ by the same amount $\frac{1}{2}w(ij)$. We remove the edge $ij$ and decrease $b(i)$ and $ b(j)$ by one each. In this way, $M^* \backslash \{ij\}$ is a maximum weight matching of the reduced instance. We repeat this operation on another cycle that contains only edges of $M^*\backslash \{ij\}$ until such cycles no longer exist. As each time we apply the operation we reduce $M^*$ by at least one edge, this procedure terminates in polynomial time.

From the above we may assume that $M^*$ induces a forest.
If $M^*$ induces a tree, we do as follows. Fix a leaf~$i$ of the forest and let $i \in N(M^*)$ be matched by ${ij} \in M^*$. Set $p(i,j):= x(i)$ and $p(j,i):=w(ij) -x(i)$. We set ${x}(j):=x(j)-p(j,i)=x(j)+x(i)-w(ij)$ and perform the same operation on a leaf of $M^*\setminus \{ij\}$ until $M^*$ has no more edges. If $M^*$ induces a forest, we consider each connected component of $M$ separately.  Each time we apply the operation we reduce $M^*$ by at least one edge. Hence this procedure, and thus our whole algorithm, runs in polynomial time.\qed
 \end{proof}

\medskip
Our first theorem in this section can be obtained by applying Lemma \ref{lem:solvep} to core allocations $x$.

\begin{theorem}\label{thm:structure}
Let $(N,v)$ be a multiple partners matching game on some triple $(G,b,w)$.
For every maximum weight $b$-matching $M^*$ of $G$ and every core allocation~$x$ of $(N,v)$, we can find
in polynomial time a payoff vector $p$ compatible with $M^*$, such
that $x=p^t$.
\end{theorem}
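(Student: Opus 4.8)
The plan is to reduce Theorem~\ref{thm:structure} to Lemma~\ref{lem:solvep}. The only gap between the two statements is that Lemma~\ref{lem:solvep} requires the hypothesis $x(N(M^*_k))=w(M^*_k)$ for every connected component $M^*_k$ of $M^*$, whereas in the theorem we are merely given that $x$ is a core allocation of $(N,v)$ and that $M^*$ is some maximum weight $b$-matching. So the first step is to verify that this component-wise equation is in fact a consequence of $x$ being a core allocation and $M^*$ being of maximum weight.

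First I would recall two facts. Since $M^*$ is a maximum weight $b$-matching of $G$, we have $w(M^*)=v(N)=x(N)$, where the last equality is the normalization condition in~(\ref{coreq}). Next, fix a connected component $M^*_k$ of $M^*$ and let $S_k=N(M^*_k)$. On one hand, $M^*_k$ is a $b$-matching of $G[S_k]$, so $v(S_k)\ge w(M^*_k)$; combined with the core inequality $x(S_k)\ge v(S_k)$ this gives $x(S_k)\ge w(M^*_k)$. For the reverse inequality I would sum these component inequalities over all components $k$ and add $x(i)\ge 0$ (which holds because singleton coalitions give $v(\{i\})=0$, so $x(i)\ge v(\{i\})=0$) for every vertex $i$ not saturated by $M^*$; the left-hand sides add up to exactly $x(N)$ and the right-hand sides to exactly $w(M^*)$. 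Since $x(N)=w(M^*)$, every one of these inequalities must be tight, in particular $x(S_k)=w(M^*_k)$ for each component. This is the component-wise hypothesis needed by Lemma~\ref{lem:solvep}.

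Having established $x(N(M^*_k))=w(M^*_k)$ for all $k$, I would simply invoke Lemma~\ref{lem:solvep} with this $x$ and $M^*$: it produces, in polynomial time, a vector $p$ with $p(i,j)+p(j,i)=w(ij)$ for all $ij\in M^*$, with $p(i,j)=p(j,i)=0$ for all $ij\in E\setminus M^*$, and with $p^t(i)=x(i)$ for all $i\in N$. The first two conditions are exactly the statement that $p$ is a payoff vector compatible with the $b$-matching $M^*$ (note that $M^*$ plays the role of ``$M$'' in the definition of compatibility), and the third says $p^t=x$. The only remaining point to check is that $p$ is \emph{nonnegative}, i.e.\ a genuine payoff vector in the sense of the paper rather than merely a real solution of the equation system; here I would argue that the core inequalities $x(\{i,j\})\ge v(\{i,j\})\ge w(ij)$ for each edge $ij\in M^*$, together with $x(i)\ge 0$, force the per-component elimination in the proof of Lemma~\ref{lem:solvep} to produce only nonnegative values — more carefully, one shows by induction along the leaf-stripping order that the running quantity $x(i)$ assigned to the vertex being processed stays in the interval $[0,w(ij)]$, using that the submatching restricted to the remaining subtree is still of maximum weight and $x$ restricted there still dominates its value function.

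The main obstacle I anticipate is precisely this last nonnegativity verification: the bare system-solving in Lemma~\ref{lem:solvep} is oblivious to signs, so the work is in checking that the core constraints propagate correctly through the recursive construction to keep every $p(i,j)\ge 0$. Concretely one must confirm that at each leaf-stripping step the ``localized'' core inequality $x(i)\le w(ij)$ (equivalently $x(j)\ge x(N(\text{subtree}))-\text{(value of subtree)}\ge 0$, rewritten) continues to hold after the update $x(j)\leftarrow x(j)+x(i)-w(ij)$; this amounts to showing that $x$ restricted to any ``downward-closed'' set of vertices together with their induced max-weight submatching still satisfies the relevant core bound, which follows because removing a leaf edge of a max-weight matching forest leaves a max-weight matching of the induced subgraph. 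Everything else is a direct bookkeeping exercise, and the polynomial-time claim is immediate from the polynomial-time claim already proved for Lemma~\ref{lem:solvep}.
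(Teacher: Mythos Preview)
Your reduction to Lemma~\ref{lem:solvep} and your verification of the component-wise hypothesis $x(N(M^*_k))=w(M^*_k)$ are correct and cleanly argued; this is exactly how the paper begins.

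The gap is in the nonnegativity step. You claim that the elimination procedure inside Lemma~\ref{lem:solvep} automatically yields $p\ge 0$ when $x$ is a core allocation, and you support this with an induction along the leaf-stripping order. That induction is sound when $M^*$ induces a forest, but it ignores the \emph{cycle-removal phase} that Lemma~\ref{lem:solvep} executes first whenever $M^*$ contains a cycle. In that phase an edge $ij$ on a cycle receives $p(i,j)=p(j,i)=\tfrac12 w(ij)$ and both $x(i)$ and $x(j)$ are decreased by $\tfrac12 w(ij)$; after this the residual vector need not satisfy any core constraint, not even $x(i)\ge 0$. A concrete failure: take a triangle $v_1v_2v_3$ with $b\equiv 2$, $w\equiv 1$ and the core allocation $x=(0,\tfrac32,\tfrac32)$. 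Removing $v_1v_2$ first sends $x(v_1)$ to $-\tfrac12$, and the subsequent leaf-stripping produces $p(v_1,v_3)=-\tfrac12<0$. So the output of Lemma~\ref{lem:solvep} is genuinely not guaranteed to be nonnegative, and your inductive hypothesis (``$x$ restricted there still dominates its value function'') is already false at the start of the leaf-stripping phase.

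The paper does not try to show that the output of Lemma~\ref{lem:solvep} is nonnegative. Instead it accepts the possibly negative $p$ and repairs it with a separate argument: whenever some $p(i,j)<0$, it forms the arc set $A=\{ij\in M^*: p(j,i)>0\}$, and applies a core inequality to the complement of the set of vertices reachable from $j$ along $A$ to conclude that $A$ contains a directed cycle through the arc $ij$. Shifting $\epsilon=\min_{k\ell\in C} p(\ell,k)$ around this cycle preserves all the defining equations, strictly reduces the total negative payoff, and zeroes at least one positive entry; hence polynomially many iterations yield $p\ge 0$. This cycle-shifting repair is the idea missing from your proposal.
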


\begin{proof}
The core constraints ensure that $x$ satisfies the requirements $x(N(M^*_k))=w(M^*_k)$ for every component $M^*_k$ of $M^*$. Hence we may apply
Lemma~\ref{lem:solvep} to obtain in polynomial time a vector $p$ with (not necessarily positive) entries $p(i,j)$
 such that
 \[\begin{array}{llll}
 p(i,j)+p(j,i) = w(ij) &&\text{for~all~} {ij}\in M^*\\
 p(i,j)=p(j,i) = 0 &&\text{for~all~} {ij} \in E\backslash M^* \\
 p^t(i):=\sum_{j:ij\in E} p(i,j)=x(i)  &&\text{for~all~} i \in N.
 \end{array}
 \]
If $p \ge 0$ then we are done. Suppose not.
Pick an arbitrary edge $ij \in M^*$ with $p(i,j) < 0$.  Let $$A := \{ij\; |\; ij \in M^* \text{~and~} p(j,i)>0\}.$$
Note that $ij \in A$, since $p(j,i)=w(ij)-p(i,j)>w(ij) \ge 0$.
We claim that there exists a directed path of arcs in $A$ that starts in $j$ and that ends in $i$. For contradiction, suppose that such a directed path does not exist. Let $S\subseteq N$ denote the set of vertices that can be reached from $j$ along arcs in $A$. Thus $i \notin S$, $j \in S$,
and $p(k,\ell) \le 0$ for all $k \notin S, \ell \in S$. Note that for $k=i$ and $\ell=j$ we get a strict inequality. Hence, for $\bar{S}:= N\backslash S$ we conclude that
\[\begin{array}{lcl}
x(\bar{S}) &= &\sum_{k \notin S} x(k)\\[6pt]
&= &\sum_{k\notin S} \sum_{\ell:k\ell\in E} p(k,\ell)\\[6pt]
&= &\sum_{k \notin S} (\sum_{\ell\in S:k\ell \in E}p(k,\ell) +\sum_{\ell \notin S:k\ell\in E} p(k,\ell))\\[6pt]
&< &\sum_{k \notin S} \sum_{\ell \notin S:k\ell\in E} p(k,\ell)\\[6pt]
&=&w(M^*_{\bar{S}}),
\end{array}\]
where $M^*_{\bar{S}}$ is the restriction of $M^*$ to edges with both end-vertices in $\bar{S}$ (recall that $p(i,j)=p(j,i)=0$ if $ij\in E\setminus M^*$).
This strict inequality contradicts our assumption that $x$ is a core allocation of $G$.

Due to the above claim we can find (in polynomial time) a directed path $P$ of arcs in $A$ that starts in $j$ and that ends in $i$.
Together with
arc $ij \in A$ this gives us a directed cycle~$C \subseteq A$ such that for each arc $ij \in C$ we can increase $p(i,j)$ and decrease $p(j,i)$
by  $\epsilon = \min_{{ij} \in C} p(j,i)>0$. This reduces the minimal total amount of negative payoffs of $p$. We repeat this operation until $p\geq 0$. Each time we do this, we reduce at least one payoff $p(j,i)$ to zero. Hence, our algorithm runs in polynomial time.\qed
\end{proof}

Theorem~\ref{sfp_compatible} tells us that if $(M,p)$ is a stable solution for an instance $(G,b,w)$ of SFP, then $p^t$ is a core allocation of the corresponding multiple partners matching game. This is in line with corresponding results for the other models and thus shows that the notion of stability is well defined with respect to the core definition.
However, Theorem~\ref{thm:structure} seems to suggest that core allocations may also correspond to solutions that are not stable.
Indeed the following result shows that this may be the case even if the corresponding instance of {\sc SFP} has no stable solutions.

\begin{theorem}\label{t-nonemptyempty}
There exist multiple partners matching games with a non-empty core but for which the corresponding instance of {\sc SFP} has no stable solutions.
\end{theorem}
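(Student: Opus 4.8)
The plan is to exhibit one explicit instance and verify the two requirements separately, using the characterisations already established. For the construction I would take $G=K_4$ on the vertex set $N=\{a_1,a_2,a_3,c\}$, with capacities $b(a_1)=b(a_2)=b(a_3)=2$ and $b(c)=1$, and with unit edge weights $w\equiv 1$.

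First I would argue that the corresponding instance of {\sc SFP} has no stable solution. By Corollary~\ref{c-char} it suffices to show that Primal-$(G,b,w)$ has no integral optimal solution, and by Theorem~\ref{t-ch} this amounts to showing that the maximum weight of a $b$-matching of $G$ is strictly smaller than the maximum weight of a half-$b$-matching. Since $\sum_{v\in N}b(v)=7$ and every edge contributes $2$ to the degree sum, a $b$-matching of $G$ has at most three edges; the triangle $\{a_1a_2,a_1a_3,a_2a_3\}$ attains this, so the maximum weight of a $b$-matching equals $3$. On the other hand, the half-$b$-matching $f$ given by $f(a_1a_3)=f(a_2a_3)=1$, $f(a_1a_2)=\tfrac12$, $f(ca_1)=f(ca_2)=\tfrac12$, $f(ca_3)=0$ is feasible — each $a_i$ carries load exactly $2$ and $c$ carries load $1$ — and has weight $\tfrac72>3$. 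Hence the instance has no stable solution.

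Next I would show that the core of the associated multiple partners matching game $(N,v)$ is non-empty by exhibiting the allocation $p$ with $p(a_1)=p(a_2)=p(a_3)=1$ and $p(c)=0$. We have $p\ge 0$ and $p(N)=3=v(N)$, so it remains to check $p(S)\ge v(S)$ for all subsets $S\subseteq N$. The only non-trivial values are $v(\{x,y\})=1$ for each edge $xy$ (matched by $p(\{a_i,a_j\})=2$ and $p(\{a_i,c\})=1$), $v(\{a_i,a_j,c\})=2$ (matched by $p=2$, using that a $b$-matching of $G[\{a_i,a_j,c\}]$ can contain at most the two edges $a_ia_j$ and $a_ic$), $v(\{a_1,a_2,a_3\})=3$ (matched by $p=3$), and $v(N)=3$. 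Hence $p$ lies in the core, which is therefore non-empty.

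The crux of the proof — and where all the design effort goes — is producing an instance with both properties at once. The ``oddness'' that forces the stability gap (here: a $b$-matching of $K_4$ has at most three edges, while the degree-constrained relaxation reaches $\tfrac72$) tends to generate many tight coalitional inequalities $p(S)\ge v(S)$ that are jointly incompatible with $p(N)=v(N)$; this is precisely why an ordinary triangle with $b\equiv 1$ has an empty core (cf.\ Example~4). The point of the instance above is that the single capacity-$1$ vertex $c$ is exactly what blocks a maximum-weight integral $b$-matching from covering the whole structure, while at the same time the equality $v(\{a_1,a_2,a_3\})=v(N)$ forces $p(c)=0$, after which the remaining constraints are satisfied with slack rather than with equality.
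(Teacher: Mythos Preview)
Your proof is correct and follows essentially the same approach as the paper: exhibit a small $4$-vertex instance with three capacity-$2$ vertices and one capacity-$1$ vertex and unit weights, invoke Theorem~\ref{t-ch} to rule out stable solutions via a weight-$\tfrac72$ half-$b$-matching versus a maximum $b$-matching of weight~$3$, and then verify directly that the allocation giving~$1$ to each capacity-$2$ vertex and~$0$ to the capacity-$1$ vertex lies in the core. The only difference is cosmetic: you take $G=K_4$, whereas the paper uses the diamond ($K_4$ minus one edge, namely the edge between the capacity-$1$ vertex and one of the others); the extra edge in your instance is harmless and all the verifications go through unchanged.
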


\begin{proof}
Consider the following example.
Take a {\it diamond}, that is, a cycle on three vertices $s_1, s_2, s_3$ to which we add a fourth vertex $u$ with edges $s_2u$ and $s_3u$.
We set $b(s_i)=2$ for $i=1,2,3$ and $b(u)=1$, and $w\equiv 1$.
Then Theorem~\ref{t-ch} tells us that a stable solution does not exist, since the
maximum weight of a $b$-matching is $3$, whilst the maximum weight of a half-$b$-matching is $3\frac{1}{2}$
(for the latter, take $f(s_1s_2)=f(s_1s_3)=1$ and $f(s_2s_3)=f(s_2u)=f(s_3u)=\frac{1}{2}$).
However, the total payoff vector $p^t$ defined by $p^t(s_i)=1$ for $i=1,\ldots,3$ and $p^t(u)=0$, which
corresponds to, say, the $b$-matching $M=\{s_1s_2, s_1s_3, s_2s_3\}$ with payoffs
$p(s_1,s_2)=1$, $p(s_2,s_3)=1$ and $p(s_3,s_1)=1$ and zero payoffs for the other edges
is in the core.\qed
\end{proof}

Due to Theorem~\ref{t-nonemptyempty}, the analysis of the core cannot be reduced to the the case in which we have unit vertex capacities (in contrast to our  results in Section~\ref{s-fix}). We distinguish between the cases $b\leq 2$ (which includes the known case of matching games, where $b=1$) and
and $b=3$.

\subsection{The Case $b\leq 2$}

\medskip
\noindent
We analyze the case $b(i) \le 2$ for $i=1, ..., n$ as follows. Note that a player $i$ with $b(i)=0$
necessarily gets $0$ payoff in any core allocation, so the problem reduces trivially to $G[N\backslash\{i\}]$. For
this reason we assume $b(i) \ge 1$ for all $i \in N$ in the following.
For $b\leq 2$ we can provide a positive answer to problem~P3, which implies that testing core membership can be done in polynomial time.
Recall that a positive answer to P3 implies the existence of a polynomial-time algorithm that either finds that the core is empty, or else obtains a core allocation.
As mentioned in Section~\ref{s-intro}, our algorithm uses an algorithm that solves the tramp steamer problem (which we formally define below) as a subproblem.

Let $G=(N,E)$ be a graph with an edge weighting
$p:E\to \R_+$ called the {\it profit function} and an edge weighting $w:E\to \R_+$ called the {\it cost function}.
Let $C=(N_C,E_C)$ be a
cycle of $G$.
The {\it profit-to-cost ratio} for a cycle $C$ is $\frac{p(C)}{w(C)}$ where we write $p(C)=p(E_C)$ and $w(C)=w(E_C)$.
The tramp steamer problem is that of finding a cycle $C$ with maximum profit-to-cost ratio.
This problem is well-known to be polynomial-time solvable
both for directed and undirected graphs
(see~\cite{L76}, or \cite{M79} for a treatment of more general ``fractional optimization'' problems).

\begin{lemma}\label{l-tramp}
The tramp steamer problem can be solved in polynomial time.
\end{lemma}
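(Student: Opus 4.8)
The plan is to reduce the optimisation problem to a sequence of decision problems by parametric search on the value of the ratio, solving each decision problem by negative-cycle detection; this is essentially the argument of Lawler~\cite{L76}. For a parameter $\lambda\in\R$, introduce the modified edge weighting $w_\lambda$ given by $w_\lambda(e)=\lambda w(e)-p(e)$. Assuming $w(C)>0$ for every cycle $C$ (the standing assumption for the tramp steamer problem; zero-cost edges are handled separately), a cycle $C$ satisfies $\frac{p(C)}{w(C)}\ge\lambda$ if and only if $w_\lambda(C)\le 0$. Hence the optimal ratio $\lambda^{\ast}:=\max_C\frac{p(C)}{w(C)}$ is precisely the largest $\lambda$ for which the graph $G$, weighted by $w_\lambda$, still contains a cycle of non-positive total weight: for $\lambda<\lambda^{\ast}$ such a cycle (indeed a negative one) exists, while for $\lambda>\lambda^{\ast}$ none does. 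So it suffices to have (i) an oracle that, given $\lambda$, decides whether $(G,w_\lambda)$ contains a negative cycle, and (ii) a way to pin down $\lambda^{\ast}$ from polynomially many oracle calls.

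For step~(i), in the directed case one applies the Bellman--Ford algorithm from an auxiliary source joined to every vertex by a zero-weight arc; a negative cycle is detected in $O(nm)$ time. In the undirected case, deciding whether a negative cycle exists (and, if not, finding a minimum-weight cycle) reduces, via the classical $T$-join construction, to a minimum-weight perfect matching computation in an auxiliary graph of polynomial size, which is solvable in polynomial time --- exactly the kind of matching machinery already used elsewhere in this paper. In either case the oracle is polynomial.

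For step~(ii), the simplest route is bisection. All input numbers are rationals of total encoding length $L$, so $\lambda^{\ast}$ is a quotient of two integers of absolute value at most $2^{O(L)}$; bisecting the candidate interval $[0,\,p(E)]$ therefore isolates $\lambda^{\ast}$ within less than the minimum possible gap between two such quotients after $O(L)$ halving steps, and a final continued-fraction rounding recovers $\lambda^{\ast}$ exactly. A bound independent of $L$ --- strong polynomiality --- follows instead from Megiddo's parametric-search technique~\cite{M79} or from a Dinkelbach-type discrete Newton iteration; for the present purpose the weakly polynomial bound is enough. One more call of the oracle with weights $w_{\lambda^{\ast}}$, now searching for a cycle of weight exactly zero, outputs an optimal cycle.

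I expect the only genuine subtlety to lie in step~(ii): ensuring that the search terminates with the \emph{exact} optimum rather than merely an approximation, and --- in the undirected setting --- making sure that negative-cycle detection really is polynomial, which is where the reduction to minimum-weight matching is needed. The equivalence underlying step~(i) and the per-iteration bookkeeping are routine.
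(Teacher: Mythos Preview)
The paper does not actually prove Lemma~\ref{l-tramp}; it merely states the result and cites Lawler~\cite{L76} and Megiddo~\cite{M79} for it. Your proposal supplies precisely the standard parametric-search argument these references contain --- reducing the ratio maximisation to negative-cycle detection under the reweighting $w_\lambda(e)=\lambda w(e)-p(e)$, handling the undirected case via $T$-joins/matching, and locating $\lambda^\ast$ by bisection or Megiddo's technique --- so it is both correct and entirely in line with what the paper invokes.
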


We are now ready to prove the following result.

\begin{theorem}\label{t-core}
The problem of deciding whether an allocation~$p$ is in the core of a multiple partners matching game can be solved in polynomial time if $b\leq 2$.
Moreover, if $p$ is not in the core, then a coalition $S$ with $p(S)<v(S)$ can be found in polynomial time as well.
\end{theorem}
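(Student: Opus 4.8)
The plan is to decide core membership of an allocation $p$ by reducing to the question of whether there is a coalition $S$ with $p(S) < v(S)$, i.e.\ a ``violated'' coalition, and to show that for $b \le 2$ such a violated coalition, if it exists, can be detected by solving a bounded number of tramp steamer instances. First I would observe that checking $p(N) = v(N)$ is a single maximum weight $b$-matching computation (Lemma~\ref{lrt}), so the real work is the separation problem: decide whether $p(S) \ge v(S)$ for all $S$. Equivalently, writing $v(S) = w(M_S)$ for a maximum weight $b$-matching $M_S$ of $G[S]$, a violated coalition exists if and only if there is a $b$-matching $M'$ in $G$ (on some vertex set $S = N(M')$) with $w(M') > p(N(M'))$. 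Since $p \ge 0$ it is never harmful to enlarge $S$ to include isolated vertices at zero extra cost, so we may in fact optimise directly over $b$-matchings $M'$ of $G$ and ask whether $\max_{M'} \big( w(M') - p(N(M')) \big) > 0$.

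The key structural step is to understand the edge sets $M'$ that can arise. When $b \le 2$, every $b$-matching is a vertex-disjoint union of paths and cycles (each vertex has $M'$-degree at most $2$). For a single path or cycle $P$, the quantity $w(P) - p(N(P))$ can be rewritten by ``distributing'' each vertex penalty $p(i)$ onto the incident edges of $P$: for a cycle, each vertex has exactly two incident edges, so $w(P) - p(N(P)) = \sum_{ij \in P}\big(w(ij) - \tfrac12 p(i) - \tfrac12 p(j)\big)$; for a path the two endpoints are charged only once, which I would handle either by a small gadget (adding, for each vertex, a zero-cost pendant edge that ``absorbs'' the leftover $\tfrac12 p(i)$) or by treating path-components separately. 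The upshot is that maximising $w(M') - p(N(M'))$ over $b$-matchings reduces to finding a maximum weight subgraph of maximum degree $2$ with respect to the modified edge weights $w'(ij) := w(ij) - \tfrac12 p(i) - \tfrac12 p(j)$, which in turn is a classical polynomial problem; and the positive-ratio / positive-weight cycle detection that controls whether any cyclic component can push the objective above $0$ is exactly an instance of the tramp steamer problem of Lemma~\ref{l-tramp} (take the profit function to be the positive part of $w'$ and the cost function an appropriate strictly positive weighting, or directly search for a positive-weight cycle in the $w'$-weighted graph).

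So the algorithm is: form $w'$ from $p$ and $w$; compute a maximum weight degree-$\le 2$ subgraph under $w'$ (via the path/cycle gadget so that endpoint charges are accounted for); if its weight is $> 0$, output the corresponding coalition $S = N(M')$, which satisfies $p(S) < w(M') \le v(S)$; otherwise conclude $p$ is in the core. Reconstructing the violating $S$ is immediate from the optimal subgraph, giving the ``moreover'' part. The main obstacle I expect is the bookkeeping around path endpoints — ensuring the reduction from ``$w(M')-p(N(M'))$ over $b$-matchings'' to a clean tramp-steamer / degree-bounded subgraph problem is exact rather than off by the endpoint terms — and checking that the gadget does not introduce spurious coalitions; once that is pinned down, polynomiality follows from Lemma~\ref{lrt} and Lemma~\ref{l-tramp}.
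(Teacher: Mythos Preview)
Your high-level strategy is right: when $b\le 2$ any $b$-matching decomposes into vertex-disjoint paths and cycles, so the core separation problem reduces to checking $p(N(C))\ge w(C)$ for cycles and $p(N(P))\ge w(P)$ for paths, and for cycles the ``discharge'' $w'(ij)=w(ij)-\tfrac12 p(i)-\tfrac12 p(j)$ turns this into a tramp steamer instance exactly as in the paper. Where your proposal breaks down is the attempt to fold the path constraints into the same computation. The objective $w(M')-p(N(M'))$ is \emph{not} equal to $w'(M')$: for a path component with endpoints $u,v$ one has $w'(P)=w(P)-p(N(P))+\tfrac12 p(u)+\tfrac12 p(v)$, so a maximum $w'$-weight $b$-matching (or degree-$\le 2$ subgraph) over-counts exactly the half-charges at path endpoints. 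Your pendant-edge gadget does not repair this: a pendant edge $ii'$ with $w(ii')=0$ and $p(i')=0$ has $w'(ii')=-\tfrac12 p(i)\le 0$, so the \emph{optimal} $w'$-matching will simply omit it, and you are back to the over-count. Consequently you obtain only the one-sided implication ``$\max w'(M')\le 0 \Rightarrow p$ is in the core'', and a positive maximum need not witness a violated coalition. You also silently drop the constraint that vertices with $b(i)=1$ may only appear as path endpoints (never on cycles or as inner path vertices), which your ``degree-$\le 2$ subgraph'' formulation ignores.

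The paper handles paths and cycles \emph{separately}. Cycles live in $G_2:=G[\{i:b(i)=2\}]$ and are dispatched by the tramp steamer lemma, as you suggest. For paths, the paper fixes the two endpoints $i_0,j_0$ and the length $k$ and builds a layered auxiliary graph $G_k(i_0,j_0)$ (a subgraph of $G\times P_{k+1}$) with edge weights $\bar w(ij)=\tfrac12(p(i)+p(j))-w(ij)$, adjusted at the two boundary layers so that the endpoint charges are counted in full. Any $i_0$--$j_0$ walk in this layered graph projects to a genuine path in $G$ plus possibly some cycles; the crucial point (which your proposal does not supply) is that once the cycle inequalities have been verified, these extra cycles contribute nonnegatively, so the layered graph has no negative cycles and Bellman--Ford computes a true shortest path. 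Ranging over all $O(n^3)$ choices of $(i_0,j_0,k)$ then either certifies all path constraints or exhibits a violating path. Your single max-weight-subgraph reduction would need an equally careful mechanism to force exact endpoint charging while remaining polynomial; as written, that mechanism is missing.
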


\begin{proof}
Let $(N,v)$ be a multiple partners matching game defined on a triple $(G,b,w)$, where $b(i)\leq 2$ for all
$i\in N$. Given $S \subseteq N$, a maximum weight
$b$-matching in $G[S]$ is composed of cycles and paths. Hence the core can be alternatively described by the
following (slightly smaller) set of constraints:

\begin{equation}\label{coreqred}
 ~~~~\begin{array}{rrrl}
  p(C)&\ge&w(C),& {\rm for ~all~ cycles~} C \in \mathcal{C}\\
  p(P)&\ge&w(P),& {\rm for~ all~ paths~~} P \in \mathcal{P}\\
  p(N)&=&v(N).&
 \end{array}
\end{equation}
Here, $\mathcal{C}$ stands for the set of
cycles $C \subseteq E$ in $G$  with
$b(i) = 2$ for all $i \in V(C)$.
Similarly, $\mathcal{P}$ stands for the set of
paths $P$ with
$b(i) = 2$ for all inner points on~$P$.

Given $p \in \R^N$, we can check in polynomial time whether $p(N)=v(N)$ holds by computing a
maximum weight  $b$-matching in $G$, which can be done in polynomial time by
Lemma~\ref{lrt}.
Thus we are left with the inequalities for cycles and paths in~(\ref{coreqred}).

 We deal with the cycles first. Let
 $N_2 := \{i\in N:b(i) = 2\}$ and $G_2 := G[N_2]$. In the induced graph $G_2=(N_2,E_2)$, we
 ``discharge'' the given allocations $p(i)$ to the edges by setting $p(i,j) := (p(i)+p(j))/2$ for every edge $ij$ in $G_2$.
 This defines an edge weighting $p: E_2 \rightarrow \R$ such that the core constraints for cycles are equivalent to
 $$ \max_{C\in \mathcal{C}} ~\frac{w(C)}{p(C)}~ \le~ 1,$$
 where the maximum is taken over all cycles in $G_2$. Hence we obtained an instance of the tramp steamer problem, which is polynomial-time
 solvable by Lemma~\ref{l-tramp}.
 Note that by solving the above minimization problem we either find that all cycle constraints
  in (\ref{coreqred}) are satisfied or we end up with a particular cycle $C$ corresponding to a violated core inequality.
  (The latter is of particular importance if we intend to use the ``membership oracle'' as a subroutine for the ellipsoid
  method.)

In what follows, we thus assume that all cycle constraints in (\ref{coreqred}) are satisfied by the given vector $p \in \R^N$
and turn to the path constraints. We process these separately for all possible endpoints $i_0,j_0 \in N$ (with $i_0 \neq
j_0$) and all possible lengths $k=1, ..., n-1$. Let $\mathcal{P}_k(i_0,j_0) \subseteq \mathcal{P}$ denote the set of
$i_0-j_0$-paths of length $k$ in $G$. We construct a corresponding auxiliary graph $G_k(i_0,j_0)$, a subgraph of
$G\times P_{k+1}$, the product of $G$ with a path of length $k$. To this end,
 let $N_2^{(1)}, ..., N_2^{(k-1)}$ be $k-1$
copies of $N_2$. The vertex set of $G_k(i_0,j_0)$ is then $\{i_0,j_0\} \cup N_1^{(1)} \cup ... \cup N_2^{(k-1)}$. We denote
the copy of $i\in N_2$ in $N_2^{(r)}$ by $i^{(r)}$. The edges of $G_k(i_0,j_0)$ and their weights~$\bar{w}$ can then be defined as
\[
 \begin{array}{l}
 _0j^{(1)} ~~~~~{\rm ~ for ~} i_0j \in E ~~{\rm ~ with~ weight~} \bar{w}(i_0j):=
p(i_0)+p(j)/2-w(i_0j)\\
i^{(r-1)}j^{(r)} {~\rm for ~} ij \in E ~{\rm ~~~ with~ weight~}
\bar{w}(ij)~:=(p(i)+p(j))/2-w(ij)\\
i^{(k-1)}j_0~~ {\rm ~ for ~} ij_0 \in E ~~{\rm ~ with~ weight~} \bar{w}(ij_0):=p(i)/2+p(j_0)-w(ij_0).\\
 \end{array}
 \]
We claim that $p(P) \ge w(P)$ holds for all $P \in \mathcal{P}$ if and only if the ({\em w.r.t.} $\bar{w}$) shortest
 $i_0-j_0$--path in $G_k(i_0,j_0)$ has weight $\ge 0$ for all $i_0 \neq j_0$ and $ k=1, .., n-1$.
 Then what is left to do,  in order to verify whether  $p(P) \ge w(P)$ holds for all $P \in \mathcal{P}$,
is to solve $O(n^3)$ instances of the shortest path problem, each of which have size $O(n^2)$
by using the well-known Bellman-Ford algorithm~\cite{Be58};
as we show below, if this yields a shortest path with weight $<0$ we have immediately found a coalition corresponding to a violated core inequality,

 First suppose some
 $P \in \mathcal{P}$ has $p(P) < w(P)$. Let $i_0$ and $j_0$ denote its endpoints and let $k$ denote its length.
 Then $P\in \mathcal{P}_k(i_0,j_0)$ corresponds to an $i_0-j_0$-path $\bar{P}$ in $G_k(i_0,j_0)$ of weight $\bar{w}(\bar{P})<0.$
Now we will show that $p(P) \ge w(P) $ for all $P \in \mathcal{P}$ implies $\bar{w}(\bar{P}) \ge 0$ for all $i_0-j_0$-paths $\bar{P}$
 in any $G_k(i_0,j_0)$. Indeed, an $i_0-j_0$-path~$\bar{P}$ visiting players $i_0, i_1^{(1)}, ..., i_{k-1}^{(k-1)}, j_0$
 corresponds to a
 $i_0-j_0$ path $P \subseteq E$ in $G$ plus possibly a number of cycles $C_1, ..., C_s \subseteq E$.
 Furthermore, by definition of $\bar{w}$, we have $\bar{w}(\bar{P}) = p(P)-w(P)+\sum_i p(C_i)-w(C_i) \ge p(P)-w(P)$, as
 we assume that $p(C) \ge w(C)$ holds for all cycles. Hence, indeed, $\bar{w}(\bar{P}) \ge 0$, as claimed.\qed
 \end{proof}

 \subsection{The Case $b=3$}

To complement Theorem~\ref{t-core} we will prove that the case $b\equiv 3$ is co-\NP-complete even for multiple partners assignment games. In order to do this we reduce from the
{\sc Cubic Subgraph} problem, which is that of testing whether a graph has a {\it 3-regular subgraph}, that is, a subgraph in which every vertex has degree~3.
Plesn\'ik~\cite{Pl84} proved that {\sc Cubic Subgraph} is \NP-complete even for planar bipartite graphs of maximum degree~4.

\begin{theorem}\label{t-core2}
The problem of deciding whether an allocation is in the core of a multiple partners assignment game is co-\NP-complete if $b\equiv 3$ and $w\equiv 1$.
\end{theorem}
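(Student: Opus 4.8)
\noindent
The statement splits into a containment part and a hardness part, and I would handle them in that order.

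\medskip\noindent\emph{Membership in \textup{co-}\NP.}
If a solution $p$ is \emph{not} in the core, there is a coalition $S\subseteq N$ with $p(S)<v(S)$, and such an $S$ is a polynomial-size certificate: given $S$, the value $v(S)=w(M_S)$ for a maximum weight $b$-matching $M_S$ of $G[S]$ is computable in polynomial time by Lemma~\ref{lrt}, so the inequality $p(S)<v(S)$ is checkable in polynomial time. Hence testing core membership is in co-\NP\ (this holds for every multiple partners matching game, not just for $b\equiv 3$ and $w\equiv 1$).

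\medskip\noindent\emph{The hardness reduction.}
I would reduce from \textsc{Cubic Subgraph}, which by Plesn\'ik~\cite{Pl84} is \NP-complete for planar bipartite graphs of maximum degree~$4$. Given such a graph $H$, first normalise it: repeatedly delete vertices of degree at most $2$ (no such vertex can lie in a cubic subgraph, so the answer is unchanged), and decide the instance trivially if the result is empty or $3$-regular (a $3$-regular graph \emph{is} a cubic subgraph); thus one may assume $H$ is bipartite with all degrees $3$ or~$4$ and at least one vertex of degree~$4$. From $H$ I would build, in polynomial time, a bipartite graph $G$ with $b\equiv 3$ and $w\equiv 1$, together with an explicit allocation~$p$ of polynomial bit-size, so that $p$ lies in the core if and only if $H$ has \emph{no} cubic subgraph. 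The skeleton of $G$ I have in mind is the vertex--edge incidence graph of $H$ (bipartite, with the edge-vertices having degree~$2$, so that $v(N)=3|V(H)|$), with each degree-$4$ vertex of $H$ replaced by a small, cubic-subgraph-free ``choose $3$ of $4$'' gadget encoding which of its four edges is dropped; the allocation $p$ would give each vertex on the ``incidence'' side a payoff just below the value $3$ it effectively contributes in an optimal $b$-matching, compensated by a tiny positive payoff on the edge-vertices, tuned so that $p(N)=v(N)$.

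\medskip\noindent\emph{Why it should work, and the two directions.}
For a coalition $S$ representing a subgraph $(W,F)$ of $H$, $v(S)$ is the maximum number of incidences selectable with each vertex used at most $3$ times and each edge at most $2$ times; this attains its largest possible value $3|W|$ exactly when $F$ saturates every vertex of $W$ to degree~$3$, i.e.\ exactly when $(W,F)$ is a cubic subgraph of $H$. The unperturbed allocation $p_0$ (value $3$ per vertex of $H$, value $0$ on the edge-vertices) satisfies $p_0(S)=3|W|\ge v(S)$ for every $S$, so $p_0$ is in the core, and it is tight precisely on the cubic-subgraph coalitions. Perturbing $p_0$ slightly downward on the incidence side therefore makes \emph{every} cubic-subgraph coalition a strict violation: $v(W\cup F)=3|W|$ while $p(W\cup F)<3|W|$, so $p\notin\mathrm{core}$ — this is the easy direction. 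For the converse, if $H$ has no cubic subgraph then every coalition has a strict deficiency ($v(S)\le 3|W|-1$, or $v(S)\le 3|W|$ together with a compensating surplus of edges), and a quantitative comparison shows that a perturbation of size $\Theta(1/|V(H)|)$ is too small to push $p(S)$ below $v(S)$, so $p$ stays in the core.

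\medskip\noindent\emph{Where the difficulty lies.}
The hard part is exactly this last comparison. Since $\delta(H)\ge 3$, the graph $H$ is full of subgraphs that are only \emph{almost} $3$-regular (for instance, $3$-regular except at a single vertex, or $3$-regular but with one vertex of degree~$4$), and the corresponding coalitions differ from genuine cubic ones by only an additive constant in both $p$ and $v$; the perturbation must be chosen small enough to spare all of them while still defeating every real cubic subgraph. Degree-$4$ vertices of $H$ are the true source of trouble here — leaving them in $G$ creates near-cubic coalitions that the perturbation cannot distinguish from cubic ones — which is why they have to be expanded into the ``choose $3$ of $4$'' gadget. Designing that gadget (bipartite, with $b\equiv 3$ and $w\equiv 1$, free of cubic subgraphs, and interacting correctly with both the incidence structure and the calibrated payoffs) and then carrying out the ensuing case analysis over all coalition types is, I expect, where essentially all of the work of the proof goes.
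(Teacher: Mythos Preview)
Your co-\NP\ membership argument is correct and matches the paper's.

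For hardness there is a real gap. First, the claimed correspondence is false: in the incidence graph one has $v(S)=3|W|$ whenever every $w\in W$ has at least three incident edge-vertices in $F$, which does \emph{not} force $(W,F)$ to be cubic (edges of $F$ may leave $W$, and vertices of $W$ may have degree four in $F$). The correct statement your perturbation would need is that the \emph{minimum} $|F|$ with $v(S)=3|W|$ equals $\tfrac32|W|$, attained exactly by cubic subgraphs. Second, and more importantly, you leave the degree-$4$ gadget undesigned and the case analysis undone, while conceding that this is where ``essentially all of the work'' lies. Because you pass through the incidence graph, both $v(N)$ and the calibration of $p$ depend on the degree sequence of $H$, and this dependency is precisely what drags you into the gadget business in the first place; as it stands, the proposal is an outline with its central component missing.

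The paper's construction is far simpler and sidesteps all of this. It keeps the bipartite input $G=(N,E)$ as is---no incidence graph, no degree normalisation---and to each $u\in N$ attaches a private $K_{3,3}$ on five fresh vertices $a_u,b_u,c_u,x_u,y_u$ (all nine edges between $\{a_u,b_u,c_u\}$ and $\{u,x_u,y_u\}$). Taking the nine gadget edges at every $u$ is a maximum $3$-matching, so $v(N^*)=9n$ irrespective of the edge structure of $G$, and one sets $p(u)=\tfrac32-\tfrac1n$ on original vertices and $\tfrac32+\tfrac{1}{5n}$ on the new ones. A cubic subgraph of $G$ on vertex set $W$ gives $v(W)=\tfrac32|W|>p(W)$; conversely, a short case split on $|S\cap\{a_u,b_u,c_u,x_u,y_u\}|$ shows that any minimal violating coalition can be shrunk to one contained in $N$, where the absence of a cubic subgraph forces $v(S)\le\tfrac32|S|-1\le p(S)$. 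The point is that the $K_{3,3}$ pads decouple $v(N^*)$ from $E(G)$, so no ``choose $3$ of $4$'' device is needed at all.
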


\begin{proof}
The problem is readily seen to be in co-\NP: given a ``certificate'' coalition~$S$ we can efficiently verify that a given allocation $p$ is not in the core by checking that
$v(S)>p(S)$ holds; the latter can be done in polynomial time due to Lemma~\ref{lrt}.
 To prove co-\NP-hardness we
reduce from the {\sc Cubic Subgraph} problem restricted to bipartite graphs.
Recall that this problem is \NP-complete~\cite{Pl84}.

Let $G=(N,E)$ be a bipartite graph that is an instance of {\sc Cubic Subgraph}.
Let $n=|N|$.
For each vertex $u\in N$ we add a set of five new vertices $a_u$, $b_u$, $c_u$, $x_u$, $y_u$,
so in total we add $5n$ extra vertices to $G$. We also add every edge between a vertex in $\{a_u,b_u,c_u\}$ and a vertex in $\{u,x_u,y_u\}$,
so in total we add $9n$ extra edges to $G$.
We call the resulting graph $G^*=(N^*,E^*)$ (see also Figure~\ref{f-last}).
Note that each $6$-tuple $\{a_u,b_u,c_u,u,x_u,y_u\}$ induces a complete bipartite graph $K_{3,3}$ in $G^*$ (and that $u$ is a cut vertex of $G^*$ that separates
$\{a_u,b_u,c_u,x_u,y_u\}$ and $N\setminus \{u\}$).
Moreover, $G^*$ is a bipartite graph as desired.

\begin{figure}
\begin{center}
\scalebox{0.75}{\includegraphics{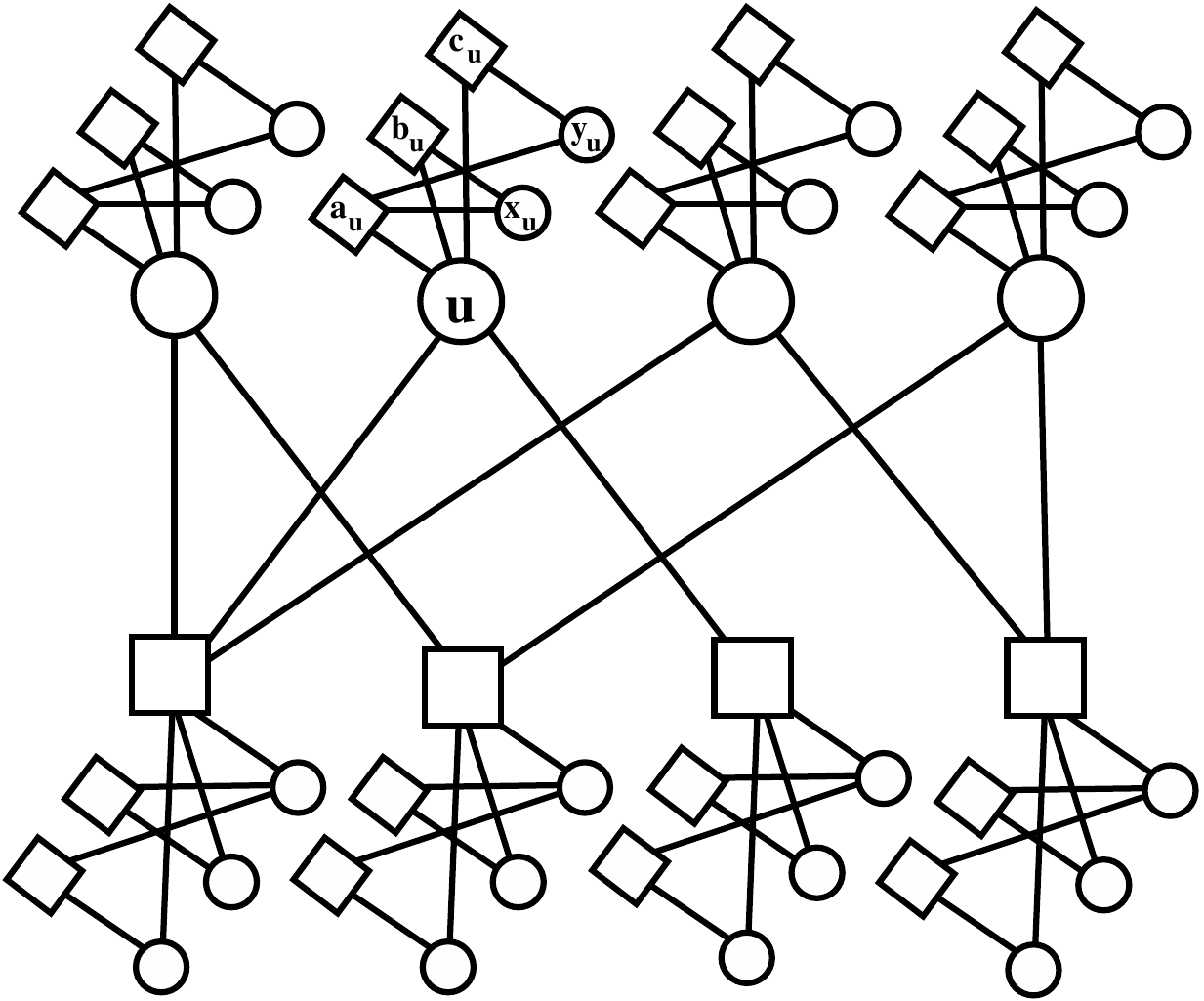}}
\caption{An example of the graph $G^*$ constructed in the proof of Theorem~\ref{t-core2}.}\label{f-last}
\end{center}
\end{figure}

Let $(N^*,v)$ be the multiple partners assignment game  defined on $(G^*,3,1)$, and
let $p$ be the vector, where we set, for every $u\in U$, $p(u)=\frac{3}{2}-\frac{1}{n}$ and $p(i)=\frac{3}{2}+\frac{1}{5n}$ for every $i\in \{a_u,b_u,c_u,x_u,y_u\}$. Then $p$ is an allocation,
as $v(N^*)=9n=p(N^*)$, where the first equality follows by taking a $b$-matching that consists of all nine edges in $G[\{a_u,b_u,c_u,u,x_u,y_u\}]$ (so, $9n$ edges in total).
We are left to prove that $G$ has a cubic subgraph if and only if $p$ does not belong to the core of $(N^*,v)$.

First suppose that $G$ has a 3-regular subgraph $H=(N_H,E_H)$. Then the coalition $N_H\subseteq N^*$ will not be satisfied: indeed, $v(N_H)=\frac{3}{2}|N_H|>\frac{3}{2}|N_H|-\frac{1}{n}|N_H|=p(N_H)$, so
$p$ does not belong to the core of $(N^*,v)$.

Now suppose that $G$ has no 3-regular subgraph. For contradiction, assume there exists a coalition $S\subseteq N^*$, such that
$p(S)<v(S)$. We choose $S$ such that $p(S)-v(S)$ is minimum, and moreover, we assume that $S$ is a coalition of smallest size over all such coalitions.

First consider the case that $S$ only contains vertices of $N$.
Then, as $G$ has no 3-regular subgraph, neither does $G[S]$, implying that $v(S)\leq \frac{3}{2}|S|-1$.  We use this inequality to find that
$p(S)=\frac{3}{2}|S|-\frac{1}{n}|S|\geq \frac{3}{2}|S|-1\geq v(S)$, a contradiction with our assumption that $p(S)<v(S)$. Hence $S$ contains at least
one vertex from a set $\{a_u,b_u,c_u,x_u,y_u\}$ for some $u\in N$. Let $T_u=S\cap \{a_u,b_u,c_u,x_u,y_u\}$, so we assume that $|T_u|\geq 1$.
If $|T_u|\leq 4$, then
$$p(S\setminus T_u)-v(S\setminus T_u)\leq p(S)-v(S)-|T_u|\left(\frac{3}{2}+\frac{1}{5n}\right) +|E(G[T_u\cup \{u\}])|\leq p(S)-v(S),$$
contradicting our choice of $S$.
Now suppose that $|T_u|=5$. If $u\notin S$, then
$$p(S\setminus T_u)-v(S\setminus T_u)\leq p(S)-v(S)-5\left(\frac{3}{2}+\frac{1}{5n}\right) +6\leq p(S)-v(S),$$
contradiction our choice of $S$ again. If $u\in S$, then
\[\begin{array}{lcl}
p(S\setminus (T_u\cup \{u\}))-v(S\setminus (T_u\cup \{u\})) &\leq &p(S)-v(S)-5\displaystyle\left(\frac{3}{2}+\frac{1}{5n}\right)-\left(\frac{3}{2}-\frac{1}{n}\right)+9\\[5pt]
&\leq &p(S)-v(S),
\end{array}\]
another contradiction with our choice of $S$.
Hence we conclude that a coalition $S$ with $p(S)<v(S)$ does not exist. In other words, $p$ belongs to the core of $(N^*,v)$.
This completes the proof of Theorem~\ref{t-core2}.
\qed
\end{proof}

\section{Future Work}\label{s-con}

We finish our paper with some directions for future research.
Our reduction from SFP to SRP
might be used to generalize more known results from SRP to SFP.
For instance, can we generalize the path to stability result of Bir\'o et al.~\cite{BBGKP13} for SRP to be valid for SFP as well?

\begin{table}[h]
\centering
\begin{tabular}{|c|c|c|c|}
\hline
     & P1 & P2     &P3\\
\hline
assignment games & yes &P &P\\
\hline
matching games  &P &P &P\\
\hline
mp assignment games  &yes &P &\hspace*{4.5mm}P if $b\leq 2$\\
				&           &   &NP-c if $b\equiv 3$\\		
\hline
mp matching games  &P if $b\leq 2$     & P if $b\leq 2$     & \hspace*{4.5mm}P if $b\leq 2$ \\
             			 &? if $b\equiv 3$ & ? if $b\equiv 3$ &NP-c if $b\equiv 3$\\
\hline			
\end{tabular}
\vspace*{5mm}
\caption{The four cooperative game models and their status with respect to problems
P1 (testing core non-emptiness), P2 (finding a core allocation) and P3 (verifying core membership). Here
``yes'' means that every instance of the problem under consideration is a yes-instance, whereas ``P'' smeans polynomial time and`` \NP-c'' means being \NP-complete. As before,``mp'' stands for ``multiple partners''.
Note that the rows for assignment games and matching games were known already (see Section~\ref{s-intro}) but can be generated from the rows for the other two models. Also the first two entries in the row for multiple partners assignment games were known already (Theorem~\ref{sot92}). Except the two open cases indicated by a
``?'', all other results follow from Theorems~\ref{t-core} and~\ref{t-core2}, which are proven in this paper.}\label{t-tabtab}
\end{table}

In contrast to multiple partners assignment games, we
do not know the complexity of testing core non-emptiness (problem~P1) and finding a core allocation (problem~P2) for multiple partners matching games $(G,b,w)$ when $b\equiv 3$.
We pose these two problems as open problems (see also Table~\ref{t-tabtab}).
As we showed in Theorem~\ref{t-core} we can decide in polynomial-time whether the core of a multiple partners matching is nonempty if $b\leq 2$. This immediately raises the question what to do if we find that the core is non-empty. One approach, which was recently considered for the case $b\equiv 1$,
is that of verifying whether the core can become nonempty after a small modification of the graph via adding or deleting some vertices or
edges~\cite{AHS16,BCKPS15,IKKKO16} or via a fractionally increase of the edge weights~\cite{CGKPSW}.

Theorem~\ref{t-nonemptyempty} states that there  exist multiple partners matching games with a non-empty core but for which the corresponding instance of SFP has no stable solutions.
It would be interesting to detect necessary and sufficient conditions on multiple partners matching games with a non-empty core,
such that the corresponding instance of SFP has a stable solution.
Another direction is to consider whether it is possible to obtain axiomatic characterizations of the core of multiple partners assignment games, or even for (multiple partners) matching games, similar to those that exist in the literature for assignment games (see, for instance,~\cite{Sa95,To05}).

Another major open problem is that of determining the complexity of computing the so-called nucleolus of a matching game $(N,v)$. So far only partial results
have been obtained in this direction, namely
polynomial-time algorithms for assignment games~\cite{SR01} and for matching games with unit edge weights~\cite{KP03}
or, more generally, with edge weights that can be expressed as the sum of positive weights on the incident vertices of the graph $G=(N,E)$~\cite{Pa01}, or even more generally, with arbitrary edge weights as long as a certain condition on the least core of a subgraph (with vertex weights) of $G$ is satisfied~\cite{Fa15}.
 Can these results be generalized for $b\leq 2$?

Aziz and de Keijzer~\cite{AK14} showed that the Shapley value of a matching game can be computed in polynomial time for
graphs of maximum degree~2, and for graphs that have a small modular decomposition into cliques or cocliques.
Recently, Bousquet~\cite{Bo} extended these results by
proving that the Shapley value of a matching game can be computed in polynomial time for trees.
Hence, another research direction would be to consider the Shapley value for multiple partners matching games restricted to trees,
or even to trees of small maximum degree.

Chalkiadakis et al.~\cite{CEMPJ10} defined cooperative games with overlapping coalitions, where players can be involved in coalitions with different intensities, leading to
three alternative core definitions.
It would be interesting to study
the problem of finding a stable solutions and
problems~P1--P3
in these settings.
To illustrate this, suppose that the set of soccer teams from the example
at the start of Section~\ref{s-intro} consists of international teams.
Then it seems realistic that the home team needs to spend fewer days for playing the game than
the visiting team, which must travel  from another country.
Hence,
every team now has a number of days for playing friendly games instead of an upper limit (target) on the number of such games.
As another example of two-player coalitions that require different intensities, consider a PhD student, who can be expected to spend significantly more time per week on her PhD project per than her supervisor.

\bibliographystyle{plain}

\end{document}